\newtheorem*{theorem*}{Theorem}
\newtheorem{theorem}{Theorem}
\newtheorem{lemma}[theorem]{Lemma}
\newtheorem{observation}{Observation}
\newtheorem{corollary}[theorem]{Corollary}
\newcommand{\mech}{\textsc{CFA}}
\newcommand{\mechname}{Constant Fit with Advice~}
\newcommand{\arbset}{\mathbb{Y}}
\newcommand{\PP}{\mathbb{P}}
\newcommand{\lmech}{\textsc{LFA}}
\newcommand{\lmechname}{Linear Fit with Advice~}
\newcommand{\binmech}{\textsc{RCA}}
\newcommand{\binmechname}{Random Classification with Advice~}
\title{Strategyproof Learning with Advice}
\author{%
  Eric Balkanski \\
  Columbia University \\
  \texttt{eb3224@columbia.edu}
    \and
      Cherlin Zhu \\
  Columbia University \\
  \texttt{cz2740@columbia.edu}
  }
\date{}
\begin{document}

\maketitle
\begin{abstract}
An important challenge in robust machine learning is when training data is provided by strategic sources who may intentionally report erroneous data for their own benefit. A line of work at the intersection of machine learning and mechanism design aims to deter strategic agents from reporting erroneous training data by designing learning algorithms that are strategyproof. Strategyproofness is a strong and desirable property, but it comes at a cost in the approximation ratio of even simple risk minimization problems. 
    
    In this paper, we study strategyproof regression and classification problems in a model with advice. This model is part of a recent line on mechanism design with advice where the goal is to achieve both an improved approximation ratio when the advice is correct (consistency) and a bounded approximation ratio when the advice is incorrect (robustness). We provide the first non-trivial consistency-robustness tradeoffs for strategyproof regression and classification, which hold for simple yet interesting classes of functions. For classes of constant functions, we give a deterministic and strategyproof  mechanism that is, for any $\gamma \in (0, 2]$, $1+\gamma$ consistent and $1 + 4/\gamma$ robust and provide a lower bound that shows that this tradeoff is optimal. We extend this mechanism and its guarantees to homogeneous linear regression over $\mathbb{R}$. In the binary classification problem of selecting from three or more labelings, we present strong impossibility results for both deterministic and randomized mechanism. Finally, we provide deterministic and randomized mechanisms for selecting from two labelings.

\end{abstract}

\pagenumbering{gobble}
\newpage

\pagenumbering{arabic}
\section{Introduction}

In a growing number of machine learning applications, training data is provided by agents who have preferences over the output of the machine learning system. A commonly used example is retail distribution, where regression is employed to optimize inventory shipping. For large retailers such as Zara, part of the training data is provided by store managers who report their predicted demand for items~\citep{caro2010inventory,caro2010zara}. Since managers have financial incentives to increase sales in their own stores, they may strategically manipulate the predicted demand that is reported to the central warehouses. In particular, there is evidence that this manipulation has occurred when there is a limited supply of a high-selling item~\citep{caro2010zara}. The potential errors in the training data due to such strategic manipulation are, of course, an issue for the robustness of machine learning models.

In order to avoid such manipulations, a line of work at the intersection of machine learning and mechanism design has developed  learning algorithms  that are robust to strategic data sources~\citep{dekel2010incentive,perote2004strategy,chen2018strategyproof,cummings2015truthful,meir2011strategyproof,meir2008strategyproof,meir2010limits,meir2012algorithms,hardt2016strategic,ghalme2021strategic,dong2018strategic,ahmadi2021strategic}. These  algorithms, also called mechanisms, deter strategic data manipulations by satisfying a strategyproofness property that guarantees that these agents cannot benefit from reporting erroneous data, regardless of the reports of the other agents. Strategyproofness is a strong and desirable property, but even for simple risk minimization problems, it comes at a cost in the best achievable approximation ratio. 

The existing work on strategyproof regression and strategyproof classification assumes that the learning mechanism has no information about the agents' data besides what these agents report. This assumption is often  pessimistic.  In the retail distribution example, it is reasonable to assume that central warehouses have access to, in addition to the future demands reported by store managers, rough estimates of what these future demands should be. 

A recent line of work on mechanism design with advice has shown that such side information can be leveraged to overcome worst-case bounds in mechanism design. Strategyproof mechanisms that achieve an improved approximation ratio when the advice is accurate (consistency) and an acceptable approximation ratio when the advice is inaccurate (robustness) have been designed for problems  such as strategic facility location~\citep{agrawal2022learning,xu2022mechanism,balkanski2024randomized},   auction design~\citep{lu2023competitive, caragiannis2024randomized,balkanski2023online}, strategic scheduling~\citep{balkanski2022strategyproof}, strategic assignment~\citep{colini2024trust}, and metric distortion~\citep{berger2023optimal}. In this paper, we aim to leverage side advice to overcome pessimistic worst-case impossibility results in strategyproof learning.

\paragraph{The model.} In strategic learning, each agent $i \in \{1, \ldots, n\}$ reports a set $S_i = \{(x_{i,j}, y_{i,j})\}_j$ of labeled data points to the learner.  The points $x_{i,j}$ are public information, but the labels $y_{i,j}$ are private information that agent $i$ can potentially misreport. The goal of the mechanism is to learn a function $f$ from a function class $\mathcal{F}$ that minimizes the global risk  $$R(f, S) = \frac{1}{|S|} \sum_{i=1}^n \sum_{j = 1}^{|S_i|} \ell(f(x_{i,j}), y_{i,j})$$
for some loss function $\ell$. The agents are strategic and aim to minimize their personal risk $R_i(f, S) = \frac{1}{|S_i|}\sum_{j = 1}^{|S_i|} \ell(f(x_{i,j}), y_{i,j}).$ 

We augment the problem of strategic learning with a potentially erroneous advice $\tilde{f} \in \mathcal{F}$ about the global risk minimizer that is given as input to the mechanism, in addition to the labeled data reported by the agents. We note that this advice is weaker than the advice
$\tilde{S} = \cup_{i=1}^n \{(x_{i,j}, \tilde{y}_{i,j})\}_j$ about the labeled points since, for the learning problems that we consider, the function $\tilde{f}$ that minimizes the global risk over $\tilde{S}$ can be efficiently computed when given $\tilde{S}$. 

Given this advice $\tilde{f}$, we say a mechanism is $\alpha$ consistent if it achieves an $\alpha$ approximation when the advice is equal to the optimal, i.e., risk minimizing, function, and we say a mechanism is $\beta$ robust if it achieves a $\beta$ approximation given any arbitrarily bad advice. Additional details about the model are provided in Section~\ref{sec:prelims}.

\paragraph{Our results.} We first consider  regression problems with the absolute loss function $\ell(x,y) = |x- y|$ in Section~\ref{sec:regression}. For the class of  constant functions $\mathcal{F}_{\mathbb{R}}$, \citet{dekel2010incentive} gave a deterministic and group-strategyproof mechanism that is a $3$ approximation to the minimum global risk and showed that no deterministic strategyproof mechanism can achieve an approximation ratio better than $3$. Our first main  result  is a tight consistency-robustness tradeoff for the class $\mathcal{F}_{\mathbb{R}_{\leq T}}$  of constant functions upper bounded by a sufficiently and arbitrarily large $T$.

\begin{theorem*}
    For any $\mathbb{Y} \subseteq \mathbb{R}$ and $\gamma \in (0,2]$, there is a deterministic and strategyproof mechanism for the class  of constant functions $\mathcal{F}_{\mathbb{Y}}  = \{f_a(x) = a : a \in \mathbb{Y}\}$ that is $1 + \gamma $ consistent and $1 + 4/\gamma$ robust. Moreover, for any $T$ sufficiently large and $\gamma \in (0,2]$, any  deterministic, strategyproof, and $1 + \gamma $ consistent mechanism for the class  of constant functions $\mathcal{F}_{\mathbb{R}_{\leq T}}$ is $1 + 4/\gamma - \Omega(1/n)$ robust.
\end{theorem*}

\renewcommand{\arraystretch}{1.35}
\begin{table} \centering \label{t:summary}
\begin{tabular}{|c|l  l || c | c  c |}
    \hline
    & \multirow{2}{*}{\makecell{Function \\ class}} & & \multirow{2}{*}{\makecell{No advice}} & \multicolumn{2}{ c |}{With advice}\\
    & & &  & Consistency & Robustness \\\hline \hline
    \multirow{2}{*}{Regression} & Constant & det. & $3^\star$  & $1+\gamma$ & $1 + \frac{4}{\gamma}$ 
    \\ \cline{2-6}
        & \makecell[l]{Homogeneous \\ linear} & det. & $3^*$ & $1+\gamma$ & $1+\frac{4}{\gamma}$ \\
     \hline
       \multirow{4}{*}{Classification}  & \multirow{2}{*}{\makecell[l]{General \\ labelings}}  & det. & $O(n)$\textsuperscript{†}  &if $o(n)$ & must be unbounded 
       \\
    & & rand. & $3$\textsuperscript{†}   & $3 - \frac{2}{n}$ & $\Omega(n)$ 
    \\ \cline{2-6}
     & \multirow{2}{*}{\makecell[l]{Two \\ labelings}} & det. & $3$\textsuperscript{†}  & $1+\gamma$ & $1+\frac{4}{\gamma}$ 
     \\
    & & rand. & $2$\textsuperscript{†} & $1 + \gamma$ & $1+\frac{1}{\gamma}$ 
    \\\hline
\end{tabular}
\caption{\small{Summary of results. The results are for $\gamma \in (0,2]$ except random classification over two labelings, which is for $\gamma \in (0,1]$.  Results labeled with $\star$ are from \citet{dekel2010incentive}, and those labeled with † are from \citet{meir2012algorithms}.}}
\end{table}

  Note that when  $\gamma$ is a small constant, the mechanism achieves near-optimal consistency and constant robustness. With $\gamma = 2$, the mechanism is $3$ consistent and $3$ robust, which recovers the optimal approximation  without advice. In fact, in this case with $\gamma = 2$, our mechanism is exactly the mechanism of  \citet{dekel2010incentive}. A main technical challenge is the construction of the family of hard instances and its analysis. We also show that the approximation achieved by our mechanism smoothly interpolates between its consistency and robustness as a function of the advice error.  This mechanism and its guarantees also extend to  homogeneous linear functions over $\mathbb{R}$. 

We then consider binary classification problems over the $0$-$1$ loss function in the shared input setting, where agents have identical points but may disagree on their labels, and the function class is a set of specific labelings for the points (Section~\ref{sec:classification}).  \citet{meir2012algorithms} give a deterministic  mechanism that achieves a $2n - 1$ approximation and show that no deterministic mechanism can achieve a sublinear approximation. They also give a $3$ approximate randomized mechanism, which they also show to be tight. Our second main result is a strong impossibility result for classification with advice. 

\begin{theorem*}
    For binary classification in the shared input setting, any deterministic and $o(n)$ consistent mechanism  has unbounded robustness. Any random mechanism with consistency better than $3$ is $\Omega(n)$ robust.
\end{theorem*}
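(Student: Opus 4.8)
The plan is to reduce both parts to a canonical \emph{symmetric three-labeling} domain and then exploit a Gibbard--Satterthwaite/Condorcet-style obstruction that is absent when there are only two labelings. Concretely, I would choose a set of points and three labelings $f_1,f_2,f_3$ that are pairwise equidistant under the $0$-$1$ loss (distance $D$), so that for an agent reporting $f_k$ the risk of selecting $f_j$ is $D\,\mathbf{1}[j\neq k]$; with $n_j$ agents reporting $f_j$, the global risk of $f_j$ is $D(n-n_j)$ and the optimum is the plurality labeling. The point is that on \emph{near-unanimous} profiles (one labeling supported by $n-O(1)$ agents) the multiplicative gap between the optimum and every other labeling is $\Theta(n)$, so both consistency and robustness ``bite'' there; meanwhile, the restriction to two active labelings behaves like a two-candidate problem, where strategyproofness is perfectly compatible with always following the majority, which is exactly why the third labeling is indispensable. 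To obtain a sufficiently rich preference domain I would, where needed, also use non-vertex agent types whose distance vector to $(f_1,f_2,f_3)$ realizes an arbitrary ordering.

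For the deterministic part, I would first show that \emph{$o(n)$ consistency forces the near-unanimous winner}: if the advice equals the optimum on a profile where some $f_j$ has support $n-O(1)$, then $\mathrm{OPT}=O(D)$ and any output $f$ has risk $D(n-n_f)$, so consistency $c=o(n)$ forces $n_f\ge n-c=n-o(n)$, which only $f_j$ satisfies; hence the mechanism must output $f_j$. Thus, under correct advice, the mechanism must track the plurality winner on all three families of near-unanimous profiles. I would then connect these families through single-agent deviations and invoke the strategyproofness monotonicity constraint (when a lone agent changes its report, the induced change of output must be in a direction the deviator weakly prefers) to set up a cyclic system of constraints over $f_1,f_2,f_3$. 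The crux is that no single strategyproof rule can simultaneously satisfy all three cyclic ``track-the-winner'' requirements, so on some profile reached from a consistency-pinned instance by such deviations, keeping the advice fixed (and now incorrect) forces the mechanism onto a non-winning labeling whose risk is $\omega(1)$ times the optimum; since the deviation chain only uses configurations in which the third labeling blocks the two-candidate ``escape,'' the mechanism cannot both respect strategyproofness and stay near-optimal, establishing unbounded robustness.

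For the randomized part I would work with the output distribution $p=(p_1,p_2,p_3)$, so that the expected risk on a profile is $D\big(n-\sum_j p_j n_j\big)$ and strategyproofness-in-expectation becomes the monotonicity statement that ``reporting $f_j$ truthfully weakly maximizes $p_j$.'' On the instance that is tight for the $3$-approximation of \citet{meir2012algorithms}, consistency strictly below $3$ forces, under correct advice, strictly more than the SP-permissible amount of probability mass onto the advised labeling. I would then transport this surplus mass, via the expectation-SP monotonicity, to a sibling instance obtained by rotating the advice (or by a single deviation) on which the same mass is now placed on a labeling supported by only $O(1)$ agents; quantifying the misallocation shows that the expected risk there is $\Omega(n)$ times the optimum. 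The threshold $3$ appears precisely as the break-even point at which the mass demanded by consistency exceeds what can be ``recovered'' elsewhere without violating the monotonicity constraints, mirroring the tightness of the $3$-approximation without advice.

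The main obstacle I anticipate is the joint design of the instance family and the strategyproofness propagation: identifying the exact cyclic configuration over the three labelings---and the accompanying (possibly non-vertex) agent types---on which the consistency requirement and strategyproofness are provably irreconcilable with bounded (respectively $o(n)$) robustness. Two points are delicate: verifying that all the preference profiles used are genuinely realizable by binary labelings under the $0$-$1$ loss, and making the probability-mass accounting in the randomized case tight exactly at the constant $3$ rather than at some larger constant.
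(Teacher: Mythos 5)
Your high-level instinct---that three labelings create a Gibbard--Satterthwaite-type obstruction which is absent with two---is exactly the mechanism behind the paper's proof, but both of your cruxes are asserted rather than proved, and the deterministic one actually fails on the domain you primarily work with. On the symmetric vertex domain (every agent's type is one of the pairwise-equidistant $f_1,f_2,f_3$), preferences are dichotomous: an agent's risk is $0$ for its own labeling and $D$ for both others. On that domain the plurality rule with fixed tie-breaking is strategyproof (misreporting can only lower your favorite's count, so it can never make your favorite win) and it is exactly the global risk minimizer, so it ``tracks the winner'' on every profile, including all three near-unanimous families; no cyclic system of track-the-winner constraints is inconsistent there. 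The impossibility only appears once agents can hold all six strict orderings over $\{f_1,f_2,f_3\}$, which requires the non-vertex types that you relegate to a side remark. The paper's proof consists precisely of carrying out that step: it constructs, for $m=9$ points, explicit labelings $g_i(\succ_i)$ realizing every strict order (its Table~2), verifies the induced personal risks agree with each order, shows that bounded robustness forces the advice-fixed mechanism $\mathcal{M}_{\tilde{c}}$ to be onto (via unanimous zero-risk profiles), shows strategyproofness transfers to the voting rule $v=\mathcal{M}_{\tilde{c}}\circ g$, and invokes Gibbard--Satterthwaite to conclude $v$ is dictatorial---which immediately yields consistency at least $(n-1)/9$ on an instance where all non-dictators agree with the correct advice. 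Your plan replaces this reduction with the unproven claim that ``no single strategyproof rule can satisfy all three cyclic requirements,'' and the realizability of the required preference profiles, which you yourself flag as delicate, is exactly the content of the paper's Table~2.

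The randomized part has the same kind of gap, but it is more severe. Strategyproofness-in-expectation over three alternatives is a much richer constraint set than the monotonicity condition ``truthful reporting weakly maximizes $p_j$'' (which is in any case only the correct statement for vertex types, whereas the instances that are tight for the $3$-approximation of \citet{meir2012algorithms} use off-vertex types). The paper does not attempt a direct mass-transport argument; it imports the characterization of \citet{meir2011tight} that every $\mu$-granular strategyproof randomized mechanism is a lottery over random dictatorships and duple mechanisms, and the accounting is then structural rather than instance-by-instance: the random-dictator component cannot have consistency below $3-2/n-O(n/k)$ on a hard instance, while each duple component suffers robustness at least $k+1$ on an instance whose optimum it excludes; choosing $k=\Theta(n/\varepsilon)$, consistency $3-2/n-\varepsilon$ forces total duple mass $\alpha>\varepsilon/6$, hence robustness at least $(\alpha/3)(k+1)=\Omega(n)$. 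Without such a decomposition theorem, your notion of the ``SP-permissible amount of probability mass'' is undefined, and there is no evident way to make the break-even constant come out to exactly $3$ from the monotonicity condition alone. If you want to complete your proposal, the realistic path is to import the same two external theorems (Gibbard--Satterthwaite for the deterministic case and the random-dictator/duple characterization for the randomized case), at which point you will have essentially reconstructed the paper's proof.
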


These results imply that no robustness guarantee is achievable by  deterministic mechanisms with a consistency that improves over the best-known approximation in the setting without advice. For randomized mechanisms, the robustness must be sacrificed to be at least linear, instead of $3$, to obtain an improved consistency. For the special case of function classes with only two labelings,  we extend the deterministic mechanism for regression and its guarantees, as well as provide a randomized mechanism parametrized by $\gamma \in (0,1]$ that is $1+\gamma$ consistent and $1 + 1/\gamma$ robust. 

Finally, in Section~\ref{sec:learning}, we consider the learning-theoretic setting where the points are drawn from a distribution. We aim for risk and strategyproofness guarantees that hold in expectation over the distribution (as in standard learning problems) and show that our results generalize.

In summary (see Table~\ref{t:summary}), our results indicate that  side advice is helpful for strategic regression but that, in general, it is not helpful for strategic classification if we wish to maintain robustness guarantees.

\paragraph{Related work.}

\citet{perote2004strategy} and \citet{dekel2010incentive} initiated the study of  learning algorithms over data that is reported by strategic agents. Both papers provide regression learning mechanisms on data where agents can misreport the labels $y$ of data points. The former characterizes strategyproof linear regression mechanisms, which~\citet{chen2018strategyproof} extend to higher dimensions. The latter considers the empirical risk minimization problem over constant and homogeneous linear functions. \citet{procaccia2013approximate} study randomized mechanisms over constant functions. \citet{cummings2015truthful} consider the related problem of linear regression with potentially misreported features $x$ of data points, where the agents' behavior is governed by privacy concerns.  We note that strategic learning over constant functions  is related to  strategic facility location~\citep{procaccia2013approximate} and strategic geometric median~\citep{el2023strategyproofness}, with the main difference being that agents in strategic learning can have multiple points of interest.
 
 With regards to strategic classification,  a series of works  considers the binary classification problem where the mechanism designer must select from a finite set of labelings~\citep{meir2008strategyproof, meir2010limits, meir2011strategyproof, meir2011tight,
 meir2012algorithms}.    As in regression learning, there are also works that consider agents that strategically report the features $x$ of their data points~\citep{hardt2016strategic, ghalme2021strategic, ahmadi2021strategic,dong2018strategic}. 

The framework of algorithms with advice (also called algorithms with predictions or learning-augmented algorithms) formalized by \citet{lykouris2021competitive} has been applied, separately, to both learning problems and mechanism design 
 problems. Learning problems that have been studied in this framework include online classification~\citep{raman2024online} and $k$-means clustering~\citep{ergun2021learning,gamlath2022approximate,nguyen2022improved}. Mechanism design problems that have been studied with predictions include strategic facility location~~\citep{agrawal2022learning,xu2022mechanism,balkanski2024randomized, istrate2022mechanism, barak2024mac},  auction design~\citep{lu2023competitive, balkanski2023online, caragiannis2024randomized}, strategic scheduling~\citep{xu2022mechanism, balkanski2022strategyproof}, strategic assignment~\citep{colini2024trust},  metric distortion~\citep{berger2023optimal}, decentralized mechanism design~\citep{gkatzelis2022improved}, and  social welfare-revenue bicriteria mechanism design~\citep{prasad2024bicriteria}.


\section{Preliminaries}\label{sec:prelims}

In strategic learning, there are  $n$ agents and each agent $i \in [n]$ has data $S_i$ comprised of set $\{(x_{i,j},y_{i,j})\}_{j \in [|S_i|]}$ for points $x_{i,1}, \dots, x_{i,|S_i|} \in \mathcal{X}$ with labels $y_{i,1}, \dots, y_{i,|S_i|} \in \mathcal{Y}$,  where $\mathcal{X}$ and $\mathcal{Y} \subseteq \mathbb{R}$ are the input and output spaces. The points $x_{i,j}$ are public information, but the labels $y_{i,j}$ are private information to agent $i$. The full dataset is the multiset $S = \uplus_{i=1}^n S_i$, where $\uplus$ maintains copies of the same labeled point. A function class $\mathcal{F}$, or hypothesis space, consists of functions $f: \mathcal{X} \to \mathcal{Y}$. Given $S$ and $\mathcal{F}$, the goal of the learning algorithm (also referred to as mechanism) is to output a function $f \in \mathcal{F}$ that minimizes, for a given loss function $\ell: \mathbb{R}\times \mathbb{R} \to \mathbb{R}$, the global risk 
\[R(f, S) = \frac{1}{|S|}\sum_{i=1}^n\sum_{j=1}^{|S_i|}\ell(f(x_{i,j}), y_{i,j}).\]
We use the absolute loss for regression tasks, so $\ell(f(x),y) = |f(x) - y|$, and 0-1 loss for binary classification tasks, so for $f: \mathcal{X} \to \{0,1\}$ and $y \in \{0,1\}$ we have that $\ell(f(x),y) = \mathbf{1}(f(x) \neq y)$.
       
\paragraph{Strategyproofness.}
The personal risk  of agent $i$, which is defined as
$$R_i(f,S_i) = \frac{1}{|S_i|}\sum_{j=1}^{|S_i|}\ell(f(x_{i,j}), y_{i,j}),$$
 is their negative utility
 and each agent aims to minimize their personal risk. Each agent strategically reports $\bar{S}_i = \{(x_{i,j}, \bar{y}_{i,j})\}_{j \in [|S_i|]}$ where   $\bar{y}_{i,j} \in \mathcal{Y}$ are, potentially wrong, labels for each point $x_{i,j}$. 
 
 For an arbitrary $\varepsilon > 0$, a mechanism is $\varepsilon$ strategyproof with respect to $R_i$ if for every true dataset $\uplus_{i = 1}^nS_i$ and reported data $\bar{S}_i$ of agent $i$, we have that $R_i(f, S_i) - \varepsilon \leq R_i(\bar{f}, S_i)$ where $f$ and $\bar{f}$ are the functions returned by the mechanism given $\uplus_{i = 1}^nS_i$ and $\bar{S}_i\uplus(\uplus_{j \neq i}S_j)$, respectively. In other words, agent $i$ cannot decrease their personal risk by more than $\varepsilon$ by misreporting incorrect data, regardless of the data reported by the other agents. Similarly, a mechanism is $\varepsilon$-group-strategyproof if, for every set $C \subseteq [n]$, every true data $\uplus_{i = 1}^n S_i$, and every potential report from the coalition $\uplus_{i \in C}\bar{S}_i$, we have that if $R_i(\bar{f}, S_i) \leq R_i(f, S_i) - \varepsilon$ for all $i \in C$, then $R_i(\bar{f}, S_i) = R_i(f, S_i) - \varepsilon$ for all $i \in C$, where $f$ and $\bar{f}$ are the functions returned by the mechanism when given $\uplus_{i = 1}^nS_i$ and $(\uplus_{i \in C}\bar{S}_i) \uplus( \uplus_{i \notin C}S_i)$, respectively. In other words, no coalition of agents can both (a) all decrease their risk by at least $\varepsilon$ and (b) have at least one member who decreases their risk by more than $\varepsilon$. We refer to $0$-strategyproof and $0$-group-strategyproof as strategyproof and group-strategyproof.

\paragraph{Strategic learning with advice.} We augment the strategic learning problem with an advice $\Tilde{f} \in \mathcal{F}$  regarding the optimal function.  Let $\mathcal{M}(S, \Tilde{f})$ denote the function output by a mechanism  $\mathcal{M}$ given dataset $S$ and advice $\Tilde{f}$. Mechanism $\mathcal{M}$ is $\alpha$ consistent if for all $S$, if $f^* \in \arg\min_{f \in \mathcal{F}}R(f, S)$, then $\frac{R(\mathcal{M}(S,f^*), S)}{R(f^*, S)} \leq \alpha.$ It is $\beta$ robust if for all datasets $S$ and advice $\Tilde{f} \in \mathcal{F}$, $\frac{R(\mathcal{M}(S, \Tilde{f}), S)}{R(f^*, S)} \leq \beta.$

\section{Strategyproof regression with advice}\label{sec:regression}

In this section, we study regression problems with advice. We primarily focus on the class of constant functions and extend our results to  homogeneous linear functions. We present the mechanism with advice  and its analysis in Section~\ref{subsec:PFA}, the lower bound that shows that the consistency-robustness achieved by the mechanism is tight in Section~\ref{subsec:PFAneg}, and the extension to homogeneous linear functions in Section~\ref{subsec:PFAlinear}. Missing proofs are in Appendix~\ref{appendix:regression}.

\subsection{The \mechname  mechanism} \label{subsec:PFA}

We consider for any $\arbset \subseteq \mathbb{R}$ the class of constant functions $\mathcal{F}_{\arbset} = \{f_a : a \in \arbset\}$ where $f_a(x) = a$. For ease of presentation, we abuse notation and denote $f_a$ as $a$ and drop the input values $x$ in dataset $S$. We assume that $\mathcal{Y} = \arbset$, which can be interpreted as restricting agents to reporting their preferences amongst possible outcomes. Thus, we consider a class of functions $\mathcal{F}_{\arbset}$, for an arbitrary $\arbset \subseteq \mathbb{R}$, where the input to the mechanism is an advice $\Tilde{a} \in \arbset$ and, for each agent $i$, a set  $S_i = \{y_{ij}\}_j$ of reported labels,  with  $y_{ij} \in \arbset$, and its output is a function $a \in \arbset$.

\paragraph{Description of the mechanism.} 
Let $\textsc{ERM}(\arbset, S) = \arg\min_{a \in \arbset} R(a, S)$ 
be a global risk minimizer over $\mathcal{F}_\arbset$ 
with respect to $S$. For constant functions and the absolute loss $\ell(a, y) = |a - y|$, it is easy to see that $\textsc{ERM}(\arbset, S)$ is the median of $S$, where we assume that ties are broken arbitrarily but consistently for even $|S|$. The mechanism also uses a risk function $R_\lambda(a, S, \Tilde{a})$ defined as
   \[R_\lambda(a, S, \Tilde{a}) = \frac{1}{(1 + \lambda)|S|}\left(\sum_{i=1}^n\sum_{j = 1}^{|S_i|}|a - y_{i,j}| + \lambda\sum_{i=1}^n\sum_{j=1}^{|S_i|}|a - \Tilde{a}|\right).\]
Note that if $\lambda|S|$ is an integer and we denote by $\{a\}^z$ the multiset containing $m$ copies of an arbitrary point $a$, then $R_\lambda(a, S, \Tilde{a}) = R(a, S \uplus \{\tilde{a}\}^{\lambda|S|}).$ Thus $R_\lambda(a, S, \Tilde{a})$ is the risk with respect to $S$ augmented with $\lambda |S|$ copies of $\Tilde{a}$, where a fractional number of copies is allowed. The mechanism, called \mechname (\mech) and formally defined below as Mechanism~\ref{mech:pfa}, first projects each agent's input $S_i$ to the constant $b_i = \textsc{ERM}(\arbset, S_i)$ that minimizes their personal risk, which leads to a new set of labels  $\{b_i\}^{|S_i|}$. It then returns the global risk minimizer $\arg\min_{a \in \arbset}R_{\frac{2 - \gamma}{2 + \gamma}}(a, \uplus_{i = 1}^n\{b_i\}^{|S_i|}, \Tilde{a})$ on the new dataset $\uplus_{i = 1}^n\{b_i\}^{|S_i|}$ augmented with $\frac{2 - \gamma}{2 + \gamma} |S|$ copies of the advice $\tilde{a}$. The copies of $\tilde{a}$ thus push the output function towards the advice. We assume that ties are broken by selecting the largest constant function. Parameter $\gamma \in [0,2]$ represents the confidence in the advice: selecting $\gamma$ closer to $0$ results in heavier reliance on the advice.

\vspace{.2cm}

\begin{algorithm}[H]
\setstretch{1.1}
\SetKwInOut{Input}{Input}
\Input{set of values $\arbset$, parameter $\gamma \in [0,2]$, agents' labels $S = \uplus_{i=1}^n S_i$, advice $\Tilde{a} \in \arbset$}
\For{$i \in [n]$}
{$b_i \leftarrow \textsc{ERM}(\arbset, S_i)$}
\Return $\arg\min_{a \in \arbset}R_{\frac{2 - \gamma}{2 + \gamma}}(a, \uplus_{i = 1}^n\{b_i\}^{|S_i|}, \Tilde{a})$ 
\caption{\mechname (\mech)}
\label{mech:pfa}
\end{algorithm}

\vspace{.2cm}

We denote the output of this mechanism as $\mech_{\gamma, \arbset }(S, \Tilde{a})$. 

\paragraph{The analysis of the mechanism.}
The main result for our mechanism is the following.
\begin{theorem}\label{thm:robustconsist}
    For any $\gamma \in (0, 2]$ and $\arbset \subseteq \mathbb{R}$
    , the $\mech_{\gamma, \arbset }$ mechanism is strategyproof,  $1+4/\gamma$ robust, and $1 + \gamma$ consistent. 
\end{theorem}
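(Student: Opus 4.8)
The plan is to prove the three properties separately, since they are essentially independent assertions about the mechanism $\mech_{\gamma,\arbset}$.

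\textbf{Strategyproofness.}
First I would establish strategyproofness, which I expect to follow from the projection step. The key observation is that after projecting, agent $i$'s contribution to the augmented dataset is the block $\{b_i\}^{|S_i|}$, where $b_i = \textsc{ERM}(\arbset, S_i)$ is the median of agent $i$'s \emph{true} labels. The crucial point is that $b_i$ is chosen to minimize agent $i$'s personal risk $R_i(a, S_i)$ over $a \in \arbset$. I would argue that, since the final output is a median-type statistic (the minimizer of $R_\lambda$, which is a weighted median of the projected values together with copies of $\tilde a$), the map from agent $i$'s report to the output constant $a$ is monotone, and the output can only move ``toward'' $b_i$ as agent $i$ changes their report. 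By a standard single-peaked / generalized-median argument (as in strategyproof facility location and the Dekel--Fischer--Procaccia projection scheme), agent $i$'s personal risk $R_i(\cdot, S_i)$ is minimized at $b_i$ and is single-peaked, so agent $i$ cannot pull the output closer to $b_i$ than reporting truthfully already does. I would make this precise by showing that the projected value $b_i$ is invariant-optimal: any report $\bar S_i$ yields a projected block $\bar b_i$, and for the weighted-median output, the best agent $i$ can do is to place the output at $b_i$, which truthful reporting already achieves (or weakly dominates).

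\textbf{Consistency.}
Next I would handle consistency, where the advice is $\tilde a = f^* = \textsc{ERM}(\arbset, S)$. Let $a$ denote the output. The goal is to bound $R(a, S) \le (1+\gamma) R(f^*, S)$. The natural route is to relate $R(a,S)$ to the augmented objective $R_\lambda(a, B, \tilde a)$ where $B = \uplus_i \{b_i\}^{|S_i|}$ is the projected dataset and $\lambda = \frac{2-\gamma}{2+\gamma}$. Two bounds are needed: first, a triangle-inequality bound showing $R(a,S) \le R(B,S)$-type quantities plus the distance $|a - f^*|$; second, a bound on how far the weighted median $a$ can be from $f^*$ given that $\lambda|S|$ copies of $f^* = \tilde a$ are included in the augmented set. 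I would exploit that adding a $\lambda$-fraction of copies of $\tilde a$ forces the output median to lie close to $\tilde a = f^*$, and combine this with the fact that projection changes each agent's risk in a controlled way (each $b_i$ is within the personal risk radius). Unwinding the weights should yield the factor $1+\gamma$.

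\textbf{Robustness.}
For robustness I would drop all assumptions on $\tilde a$ and bound the worst case. The mechanism outputs a weighted median where the ``true data'' (projected) has weight $1$ and the advice has weight $\lambda$. Even adversarial advice can only shift the weighted median by a bounded amount, because the advice copies constitute a $\frac{\lambda}{1+\lambda}$ fraction of the total mass; a median cannot be dragged past the point where the remaining $\frac{1}{1+\lambda}$ fraction of true mass balances it. I would bound $|a - f^*|$ in terms of $R(f^*, S)$ using this mass argument, then apply the triangle inequality to convert the displacement bound into the risk ratio $1 + 4/\gamma$. Substituting $\lambda = \frac{2-\gamma}{2+\gamma}$ into the resulting expression should produce exactly $4/\gamma$.

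\textbf{Main obstacle.}
I expect the main technical obstacle to be the consistency and robustness displacement bounds, specifically controlling the interaction between the \emph{projection} step and the \emph{advice-augmentation} step. The projection replaces $S_i$ by $\{b_i\}^{|S_i|}$, which both changes the global risk (so $R(B, S) \ne R(f^*, S)$ in general) and changes the location of the median; I will need a clean lemma relating $R(a, S)$, $R(a, B)$, and the per-agent projection error, and then a sharp weighted-median displacement estimate tracking the precise constant $\lambda = \frac{2-\gamma}{2+\gamma}$ so that the bounds come out as exactly $1+\gamma$ and $1 + 4/\gamma$ rather than loose constants. Getting these constants tight — rather than merely bounded — is where the real work lies, and the asymmetric tie-breaking (selecting the largest constant) likely matters for the boundary cases.
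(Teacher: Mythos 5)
Your strategyproofness argument is essentially the paper's own (single-peaked personal risk plus the fact that the output is a monotone weighted-median of the projected blocks and the advice copies), so that part stands. The genuine gap is in your consistency and robustness plan. The two-step scheme you propose --- (i) bound the displacement $|a-\tilde a|$ of the output $a$ by a mass argument, then (ii) convert to a risk bound via the triangle inequality $R(a,S)\le R(f^*,S)+|a-f^*|$ --- cannot produce the constants $1+\gamma$ and $1+4/\gamma$, no matter how sharp the estimate in (i) is, because (i) is already tight at a value that makes (ii) irreparably lossy. Concretely, let $\lambda=\frac{2-\gamma}{2+\gamma}$, take $t\ge n$, let $k$ be just below $\frac{(1-\lambda)n}{2}$, and consider $k$ agents whose $2t+1$ points all equal $0$ together with $n-k$ agents holding $t$ points at $0$ and $t+1$ points at $z>0$. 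Then $0$ is the optimal constant, but each mixed agent projects to the block $\{z\}^{2t+1}$, so the projected mass at $z$, namely $(n-k)(2t+1)$, exceeds half of the augmented total $(1+\lambda)n(2t+1)$, and $\mech_{\gamma,\arbset}$ outputs $z$ \emph{even with correct advice} $\tilde a=0$. As $n,t\to\infty$ one has $R(0,S)\to\frac{(1+\lambda)z}{4}$, so the displacement is genuinely $|z-0|=\frac{4}{1+\lambda}\,R(0,S)=(2+\gamma)\,R(0,S)$; feeding this into the triangle inequality yields only $3+\gamma$ consistency, and the analogous computation with $1-\lambda$ in place of $1+\lambda$ yields only $3+4/\gamma$ robustness. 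The true ratio on this instance is $\frac{3-\lambda}{1+\lambda}=1+\gamma$: moving the output from $0$ to $z$ \emph{decreases} the loss of the $(n-k)(t+1)$ points sitting at $z$, a credit that the triangle inequality throws away, which is exactly the additive $2$ you would be missing.

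The missing idea is to bound the risk \emph{ratio} by point counts rather than by displacement. The paper's Lemma~\ref{lem:ratio} shows that if $a\ge a^*$ then $\frac{R(a,S)}{R(a^*,S)}\le\frac{|S|-b}{b}$ with $b=|\{y_{i,j}\in S:y_{i,j}\ge a\}|$; the proof keeps both the increase of the loss for points below $a$ and the decrease for points at or above $a$. The two counting facts you implicitly identified then finish the job: since each $b_i$ is a median of $S_i$, the true data retains at least half of the projected mass on the far side of $a$ (Observation~\ref{obs:halfpoints}), and since $a$ is a median of $S'\uplus\{\tilde a\}^{\lambda|S|}$ (Observation~\ref{obs:mechmed}), at least $\frac{(1+\lambda)|S|}{2}$ augmented points lie at or above $a$. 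When the advice is correct and $a>\tilde a=a^*$, none of these can be advice copies, giving $b\ge\frac{(1+\lambda)|S|}{4}$ and hence $\frac{3-\lambda}{1+\lambda}=1+\gamma$; for arbitrary advice at least $\frac{(1+\lambda)|S|}{2}-\lambda|S|$ of them come from $S'$, giving $b\ge\frac{(1-\lambda)|S|}{4}$ and hence $\frac{3+\lambda}{1-\lambda}=1+4/\gamma$ (the case where $a$ lies on the opposite side of $a^*$ from $\tilde a$ is handled symmetrically, or by the paper's comparison to the no-advice output). If you want to keep the shape of your outline, replace the plain triangle inequality with the per-point accounting $R(a,S)-R(a^*,S)\le\frac{|a-a^*|}{|S|}\bigl(|\{y<a\}|-|\{y\ge a\}|\bigr)$ for $a>a^*$ and carry the counts through; with a displacement bound alone you will remain stuck an additive $2$ above both targets.
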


 Thus, our result holds not only for the class $\mathcal{F}_{\mathbb{R}}$ of all constant functions but also for any class that is a subset of the class of all constant functions. We first focus on analyzing the consistency and robustness. We start by providing a helper lemma that allows us to express the approximation of selecting constant $a$ on instance $S$ in terms of the number of points in $S$ at least $a$. 

\begin{restatable}{rLem}{lemRatio}\label{lem:ratio}
    Consider instance $S$ and $\arbset \subseteq \mathbb{R}$. Let $a^* \in \arg\min_{a \in \arbset }R(a, S)$ and $a \in \arbset $. If $a \geq a^*$, then with $b = |\{y_{i,j} \in S: y_{i,j} \geq a\}|$, we have that  $\frac{R(a, S)}{R(a^*, S)} \leq \frac{|S| - b}{b}$. If $a \leq a^*$, then with $b = |\{y_{i,j} \in S: y_{i,j} \leq a\}|$, we also have that $\frac{R(a, S)}{R(a^*, S)} \leq \frac{|S| - b}{b}$.
\end{restatable}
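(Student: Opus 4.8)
The plan is to prove the two cases separately, with the case $a \le a^*$ following from the case $a \ge a^*$ by reflecting the real line (replacing every $y_{i,j}$ by $-y_{i,j}$, $a$ by $-a$, and $a^*$ by $-a^*$, which preserves the median structure and all absolute losses). So I would reduce to $a \ge a^*$, and further assume $a > a^*$, since $a = a^*$ makes the ratio equal to $1$. Writing $N := |S|$, $\delta := a - a^*\ge 0$, and $b := |\{y_{i,j}\in S : y_{i,j}\ge a\}|$, the target inequality $R(a,S)/R(a^*,S)\le (N-b)/b$ is equivalent to $b\cdot N R(a,S)\le (N-b)\cdot N R(a^*,S)$, so the whole argument is to upper bound $NR(a,S)$ and lower bound $NR(a^*,S)$, each in terms of the displacement $\delta$.

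For the upper bound I would split the data by whether a point lies at or above $a$. For each of the $b$ points with $y_{i,j}\ge a$ we have exactly $|a - y_{i,j}| = |a^* - y_{i,j}| - \delta$, since both $a$ and $a^*$ lie weakly below such a point; for each of the remaining $N-b$ points the triangle inequality gives $|a - y_{i,j}| \le |a^* - y_{i,j}| + \delta$. Summing these two groups yields the clean estimate
\[
N\, R(a,S) \;\le\; N\, R(a^*,S) + (N - 2b)\,\delta .
\]
For the lower bound I would keep only the contributions of the $b$ points at or above $a$: each of them is at distance at least $\delta$ from $a^*$, so $N\,R(a^*,S) \ge b\,\delta$, i.e.\ $\delta \le N R(a^*,S)/b$. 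Combining the two displays, once the coefficient $N-2b$ is nonnegative, gives $N R(a,S) \le N R(a^*,S)\bigl(1 + (N-2b)/b\bigr) = N R(a^*,S)\cdot (N-b)/b$, which is exactly the claim after dividing by $N R(a^*,S)$.

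The step I expect to be the crux is verifying $N - 2b \ge 0$, i.e.\ $b \le N/2$, which is precisely where the optimality (median property) of $a^*$ enters and where the nonnegativity of $\delta$ alone is not enough. Since $a > a^*$, the set $\{y_{i,j} \ge a\}$ is contained in $\{y_{i,j} > a^*\}$, and because $a^*$ is a median at least half of the points satisfy $y_{i,j}\le a^*$; hence $b \le |\{y_{i,j} > a^*\}| \le N/2$. This is exactly the point that forces the restriction to $a \neq a^*$: at $a = a^*$ the count $b = |\{y_{i,j}\ge a^*\}|$ can exceed $N/2$, which is why I treat that boundary case trivially rather than through the bound. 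With $N-2b\ge0$ established, the combination above is valid and completes the $a>a^*$ case, and the reflection argument finishes $a<a^*$.
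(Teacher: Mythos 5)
Your argument is essentially the paper's argument, reorganized: both proofs compare the two risks through the displacement $\delta = a - a^*$, establishing (in your notation) $N R(a,S) \le N R(a^*,S) + (N-2b)\delta$ and $N R(a^*,S) \ge b\delta$, and both hinge at the end on the count inequality $b \le N/2$. The paper packages this as a three-way decomposition followed by the ratio manipulation $\frac{A + d(N-b)}{A + db} \le \frac{N-b}{b}$, which is valid only when $N - b \ge b$ --- a fact the paper's proof uses silently and never justifies. So you have correctly isolated the crux that the paper glosses over.

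However, your justification of that crux is wrong in the generality the lemma claims. The statement is for arbitrary $\arbset \subseteq \mathbb{R}$, and $a^*$ is only the minimizer of $R(\cdot, S)$ \emph{within} $\arbset$; it need not be a median of $S$, and it is not true in general that at least half of the points satisfy $y_{i,j} \le a^*$. For instance, with $S = \{0,3,3\}$ and $\arbset = \{1,10\}$ the minimizer is $a^* = 1$, yet only one of the three points lies at or below it. This generality matters: the paper invokes the lemma for restricted classes such as $\arbset = \{0,1\}$ in the classification results, not just for $\arbset = \mathbb{R}$ (where your median argument is fine). The repair is short and uses only optimality of $a^*$ within $\arbset$, together with your own first display: if $a > a^*$ and $b > N/2$, then $N R(a,S) - N R(a^*,S) \le (N - 2b)\delta < 0$, contradicting $R(a^*,S) \le R(a,S)$; hence $b \le N/2$. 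With that substitution your proof goes through for all $a > a^*$. One further caveat: your dispatch of $a = a^*$ as ``trivial'' does not actually work, since there the ratio equals $1$ but $(N-b)/b$ can be smaller than $1$ (e.g., $S = \{0,0,1\}$, $a = a^* = 0$ gives a bound of $0$); the lemma as literally stated fails at $a = a^*$. This defect is shared by the paper's proof, and the paper only ever applies the lemma after explicitly setting aside the case $a = a^*$, so it is a flaw of the statement rather than of your strategy --- but you should exclude that case rather than claim it is handled.
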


\begin{proof} Assume that $a \geq a^*$ and let $b = |\{y_{i,j} \in S: y_{i,j} \geq a\}|$. Let $d = a - a^*$, then we can decompose the empirical risk of picking $a^*$ as follows:

    \begin{align*}
        R(a^*, S) & = \frac{1}{|S|}\left[\sum_{i,j:y_{i,j}< a^*}(a^*-y_{i,j}) + \sum_{i,j:a^* \leq y_{i,j} < a}(y_{i,j} - a^*) + \sum_{i,j:y_{i,j} \geq a}(y_{i,j}-a^*)\right] \\
        & = \frac{1}{|S|}\left[\sum_{i,j:y_{i,j} < a^*}(a^* - y_{i,j}) + \sum_{i,j:a^* \leq y_{i,j}< a}(y_{i,j} - a^*) + \sum_{i,j:y_{i,j}\geq a}(d + y_{i,j} - a)]\right] \\
        & \geq \frac{1}{|S|}\left[\sum_{i,j:y_{i,j} < a^*}(a^*-y_{i,j}) + \sum_{i,j:y_{i,j} \geq a}(y_{i,j}-a) + d|\{y_{i,j} \in S:y_{i,j}\geq a\}|\right].
    \end{align*}
We can similarly decompose $R(a, S)$:
    \begin{align*}
        R(a, S) & = \frac{1}{|S|}\left[\sum_{i,j:y_{i,j}< a^*}( a-y_{i,j}) + \sum_{i,j:a^* \leq y_{i,j} <  a}( a - y_{i,j}) + \sum_{i,j:y_{i,j} \geq  a}(y_{i,j}- a)\right] \\
        & = \frac{1}{|S|}\left[\sum_{i,j:y_{i,j} < a^*}(d + a^* - y_{i,j}) + \sum_{i,j:a^* \leq y_{i,j}<  a}( a - y_{i,j}) + \sum_{i,j:y_{i,j}\geq  a}(y_{i,j} -  a)\right] \\
        & \leq \frac{1}{|S|}\left[\sum_{i,j:y_{i,j} < a^*}(a^*-y_{i,j}) + \sum_{i,j:y_{i,j} \geq  a}(y_{i,j}- a) + d|\{y_{i,j} \in S:y_{i,j}< a\}|\right].
    \end{align*}

    The last line is because for $y_{i,j} \in [a^*, a)$, $a - y_{i,j} \leq a -a^*$. Then we get the following bound on the approximation ratio:
    \begin{align*}
    \frac{R(a, S)}{R(a^*, S)} & \leq \frac{\sum\limits_{i,j:y_{i,j}< a^*}(a^* - y_{i,j}) + \sum\limits_{i,j: y_{i,j} \geq  a}(y_{i,j}-  a) + d|\{y_{i,j} \in S:y_{i,j} <  a\}|}{\sum\limits_{i,j:y_{i,j}< a^*}(a^* - y_{i,j}) + \sum\limits_{i,j: y_{i,j} \geq  a}(y_{i,j}- a) + d|\{y_{i,j} \in S:y_{i,j} \geq  a\}|} \\
    & \leq \frac{|\{y_{i,j} \in S :y_{i,j}< a\}|}{|\{y_{i,j} \in S: y_{i,j} \geq  a\}|} = \frac{|S| - b}{b}.
    \end{align*}
The last inequality holds because $a$ is at least $a^*$, the median of $S$, and therefore $|\{y_{i,j} \in S:y_{i,j} < a\}| \geq |\{y_{i,j} \in S : y_{i,j} \geq a\}|$.
    
The case  $a \leq a^*$ follows by an identical argument but with $b = |\{y_{i,j} \in S: y_{i,j} \leq a\}|$.
\end{proof}

 We also use the following observations regarding the \mech~mechanism. For each agent $i$ define data $S'_i = \{b_i\}^{|S_i|}$ with $b_i$ as defined in Mechanism~\ref{mech:pfa}, and dataset $S' = \uplus_{i=1}^nS_i'$. The first observation relates the number of points above some constant $a$ in $S$ to that in the constructed set $S'$ in \mech.

\begin{observation}\label{obs:halfpoints}
    For any set $S$ and $a \in \mathbb{R}$,  $|\{y_{i,j} \in S : y_{i,j} \geq a\}| \geq \frac{1}{2}|\{y'_{i,j} \in S': y'_{i,j} \geq a\}|$.
\end{observation}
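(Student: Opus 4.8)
The plan is to prove Observation~\ref{obs:halfpoints} by analyzing the projection step agent-by-agent and showing that the projection at most doubles the count of points above any threshold $a$ for each individual agent. Since $S = \uplus_{i=1}^n S_i$ and $S' = \uplus_{i=1}^n S_i'$ with $S_i' = \{b_i\}^{|S_i|}$, and the counts are additive over agents, it suffices to show that for each agent $i$ and each $a \in \mathbb{R}$,
\[
|\{y_{i,j} \in S_i : y_{i,j} \geq a\}| \geq \tfrac{1}{2}\,|\{y'_{i,j} \in S_i' : y'_{i,j} \geq a\}|.
\]
The right-hand side is governed entirely by whether $b_i \geq a$: if $b_i < a$, then $S_i'$ contributes nothing above $a$ and the inequality is trivial; if $b_i \geq a$, then all $|S_i|$ copies of $b_i$ lie at or above $a$, so the right-hand side equals $\tfrac{1}{2}|S_i|$.

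The crux is therefore the case $b_i \geq a$, where I must show that at least half of agent $i$'s original points satisfy $y_{i,j} \geq a$. The key fact is that $b_i = \textsc{ERM}(\arbset, S_i)$ is the median of $S_i$ (as noted in the mechanism description, for absolute loss the ERM over constant functions is the median). Because $b_i$ is a median of the multiset $S_i$, at least half of the points in $S_i$ are at least $b_i$. Combining this with $b_i \geq a$ gives that at least half of the points in $S_i$ are at least $a$, i.e.\ $|\{y_{i,j} \in S_i : y_{i,j} \geq a\}| \geq \tfrac{1}{2}|S_i| = \tfrac{1}{2}|\{y'_{i,j} \in S_i' : y'_{i,j} \geq a\}|$, which is exactly the per-agent inequality.

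Summing the per-agent inequalities over all $i \in [n]$ yields the global statement, since both sides are additive over the disjoint (multiset) union. The main technical point to handle carefully is the median argument in the case of even $|S_i|$ and arbitrary tie-breaking: I should confirm that, regardless of how the ERM tie is resolved, the chosen median $b_i$ still has at least $\lceil |S_i|/2 \rceil$ of the points of $S_i$ at or above it. For odd $|S_i|$ the median is unambiguous and more than half the points lie at or above it; for even $|S_i|$, any value between the two middle order statistics (inclusive) has at least half the points at or above it, so the bound $\tfrac{1}{2}|S_i|$ holds for every valid tie-break. I expect this median-counting step in the even case to be the only place requiring care, and it is routine once stated precisely.
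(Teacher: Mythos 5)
Your proposal is correct and follows essentially the same reasoning as the paper, which justifies the observation in one line by noting that each $y'_{i,j} = b_i$ is a median of agent $i$'s original values; your agent-by-agent decomposition, the case split on $b_i \geq a$ versus $b_i < a$, and the median-counting argument (including the even-$|S_i|$ tie-breaking check) simply spell out the details behind that same idea.
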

This holds because each $y'_{i,j} = b_i$ is a median of the original $y_{i,j}$ values for $i$. Let $\lambda = (2-\gamma) / (2 + \gamma)$. Then we can express the output of \mech~as the risk minimizer over set $S'\uplus \{\Tilde{a}\}^{\lambda|S|}$.
\begin{observation}\label{obs:mechmed} For any advice $\Tilde{a}$,  the $\mech_{\gamma, \arbset }(S, \Tilde{a})$ mechanism  returns $a$ that satisfies $|y'_{i,j} \in S'  : y'_{i,j} \geq a| + \lambda|S| \cdot \mathbf{1}_{\Tilde{a} \geq a} \geq (1 + \lambda)|S|/2$ and $|y'_{i,j} \in S'  : y'_{i,j} \leq a| + \lambda|S| \cdot \mathbf{1}_{\Tilde{a} \leq a} \geq (1 + \lambda)|S|/2$.
\end{observation}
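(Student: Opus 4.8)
The plan is to recognize the output of $\mech$ as a (weighted) median of the advice-augmented dataset and then invoke the first-order optimality condition that characterizes minimizers of a sum of absolute deviations. Write $\lambda = (2-\gamma)/(2+\gamma)$ and define $g(a) = \sum_{i=1}^n\sum_{j=1}^{|S_i|}|a - b_i| + \lambda|S|\,|a - \Tilde{a}|$. Since the normalizing factor $1/((1+\lambda)|S|)$ in $R_\lambda$ does not affect the minimizer and $|S'| = |S|$, the mechanism returns some $a \in \arg\min_{a \in \arbset} g(a)$, where $g$ is precisely the total weighted absolute deviation of $a$ from the multiset $S'$ (unit weight on each $b_i$) together with the single point $\Tilde{a}$ carrying weight $\lambda|S|$. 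The total weight is $(1+\lambda)|S|$.

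First I would argue that the constrained minimizer over $\arbset$ is in fact an unconstrained minimizer over $\mathbb{R}$. The function $g$ is convex and piecewise linear, so its set of global minimizers over $\mathbb{R}$ is a closed interval (possibly a single point) whose endpoints lie among the atoms $\{b_i\}_i \cup \{\Tilde{a}\}$. Each $b_i = \textsc{ERM}(\arbset, S_i) \in \arbset$ and $\Tilde{a} \in \arbset$ by assumption, so $\min_{a \in \arbset} g(a) = \min_{a \in \mathbb{R}} g(a)$, and hence the returned $a$ is a global minimizer of $g$ over all of $\mathbb{R}$.

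Next I would apply the standard first-order condition. For a point $a$, let $W_{<a}$, $W_{=a}$, $W_{>a}$ denote the total weight strictly below, equal to, and strictly above $a$, so that $W_{<a} + W_{=a} + W_{>a} = (1+\lambda)|S|$. The left derivative of $g$ at a minimizer $a$ is $W_{<a} - W_{\geq a} \leq 0$ and the right derivative is $W_{\leq a} - W_{>a} \geq 0$; these rearrange to $W_{\geq a} \geq (1+\lambda)|S|/2$ and $W_{\leq a} \geq (1+\lambda)|S|/2$. It then remains to identify the weights: the $S'$-atoms contribute $|\{y'_{i,j}\in S' : y'_{i,j} \geq a\}|$ to $W_{\geq a}$, while the advice contributes its full weight $\lambda|S|$ exactly when $\Tilde{a} \geq a$, i.e. $\lambda|S|\cdot\mathbf{1}_{\Tilde{a}\geq a}$. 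Summing gives the first claimed inequality, and the symmetric computation with $\leq a$ gives the second.

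This is a routine structural observation rather than a deep one, and the only care-points I would flag are minor. First, $\lambda|S|$ need not be an integer, so the argument must be read as a weighted median; but the optimality condition above holds verbatim for arbitrary nonnegative weights, so no difficulty arises. Second, the tie-breaking rule (returning the largest minimizer) is irrelevant here, since the two inequalities hold simultaneously at every point of the minimizing interval. The one place warranting a sentence of justification is the reduction from minimizing over $\arbset$ to minimizing over $\mathbb{R}$, which relies on the optimal value being attained at an atom lying in $\arbset$.
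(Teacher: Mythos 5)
Your proof is correct and follows essentially the same route as the paper: the paper justifies this observation in a single sentence by noting that the mechanism's output is the risk minimizer (i.e., a weighted median) of $S' \uplus \{\tilde{a}\}^{\lambda|S|}$, and your argument is precisely that median characterization carried out in detail via the first-order optimality conditions for a convex piecewise-linear sum of absolute deviations. The extra step you flag---reducing the constrained minimization over $\mathbb{Y}$ to unconstrained minimization over $\mathbb{R}$, valid because every atom $b_i$ and the advice $\tilde{a}$ lie in $\mathbb{Y}$---is a point the paper leaves implicit, and your treatment of fractional weight and tie-breaking matches the paper's intent.
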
 
This observation shows that, amongst the union of projected points $S'$ and $\gamma |S|$ copies of $\tilde{a}$, at least half of these points fall on either side of the constant returned by PFA. This holds by the equivalence $\lambda \sum_{i=1}^n\sum_{j =1}^{|S_i|}|a - \Tilde{a}| = \lambda \cdot |S|\cdot |a - \Tilde{a}|$ which equates $R_\lambda(a, S', \Tilde{a})$ to the risk of choosing $a$ over new set $S' \uplus \{\Tilde{a}\}^{\lambda \cdot|S|}$, of which \mech~returns a minimizer. Using these two observations, let us first upper bound consistency.

\begin{restatable}{rLem}{lemmaConsistency} \label{lem:PFAconsistency}
    For any $\gamma \in (0, 2]$ and $\arbset \subseteq \mathbb{R}$, $\mech_{\gamma, \arbset }$ achieves $1 + \gamma$ consistency.
\end{restatable}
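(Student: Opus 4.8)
The plan is to combine the three tools already in hand---Lemma~\ref{lem:ratio}, Observation~\ref{obs:halfpoints}, and Observation~\ref{obs:mechmed}---to lower bound the number of original data points lying on the ``correct'' side of the mechanism's output, and then read off the ratio. Write $a^* = \textsc{ERM}(\arbset, S)$ for the true optimum, which by the definition of consistency is also the advice $\tilde a$, and let $a = \mech_{\gamma,\arbset}(S, a^*)$ be the returned constant. Set $\lambda = (2-\gamma)/(2+\gamma)$ and note the identities $1+\lambda = 4/(2+\gamma)$, hence $4/(1+\lambda) = 2+\gamma$; these are what will ultimately produce the clean bound. If $a = a^*$ the ratio is $1 \le 1+\gamma$, so I assume without loss of generality $a > a^*$; the case $a < a^*$ is handled by the mirror-image versions of the statements below, swapping ``$\ge$'' for ``$\le$'' throughout (the second inequality of Observation~\ref{obs:mechmed} and the symmetric form of Observation~\ref{obs:halfpoints}, which follows from the same median argument) together with the second half of Lemma~\ref{lem:ratio}.

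The key point is that the advice copies sit \emph{strictly} below $a$, so they contribute nothing to the mass above $a$, yet $a$ is the median of the augmented set. Concretely, since $a > a^* = \tilde a$ we have $\mathbf{1}_{\tilde a \ge a} = 0$, so the first inequality of Observation~\ref{obs:mechmed} forces the entire upper half of the augmented set to come from $S'$ itself: $b' := |\{y'_{i,j} \in S' : y'_{i,j} \ge a\}| \ge (1+\lambda)|S|/2$. I then transfer this mass back to the original data using Observation~\ref{obs:halfpoints}, which costs only a factor of two: $b := |\{y_{i,j} \in S : y_{i,j} \ge a\}| \ge b'/2 \ge (1+\lambda)|S|/4$.

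Finally I invoke Lemma~\ref{lem:ratio}, which for $a \ge a^*$ gives
\[
\frac{R(a,S)}{R(a^*,S)} \le \frac{|S| - b}{b} = \frac{|S|}{b} - 1 \le \frac{4}{1+\lambda} - 1 = (2+\gamma) - 1 = 1 + \gamma,
\]
as desired; since $1+\lambda \ge 1 > 0$ and $|S| > 0$, the denominator $b$ is strictly positive, so the application of Lemma~\ref{lem:ratio} is valid.

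I expect the only genuinely delicate point to be the bookkeeping that makes the advice copies land strictly on one side of $a$, so that the indicator in Observation~\ref{obs:mechmed} vanishes and $S'$ alone must supply half the augmented mass---this is precisely where correctness of the advice is used. Everything else is mechanical: the factor-of-two loss incurred by projecting each agent to the median of its reports (Observation~\ref{obs:halfpoints}) is exactly offset by the $\lambda$-augmentation through the identity $4/(1+\lambda) = 2+\gamma$, and the tie-breaking rule (largest constant) is irrelevant here because any output equal to $a^*$ already yields ratio one.
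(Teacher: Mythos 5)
Your proposal is correct and follows essentially the same route as the paper's own proof: both handle $a=a^*$ trivially, reduce to $a > a^*$ by symmetry, use Observation~\ref{obs:mechmed} with the vanishing indicator (since the advice copies equal $a^* < a$) to get $|\{y'_{i,j} \in S' : y'_{i,j} \ge a\}| \ge (1+\lambda)|S|/2$, halve via Observation~\ref{obs:halfpoints}, and finish with Lemma~\ref{lem:ratio} and the identity $\frac{3-\lambda}{1+\lambda} = 1+\gamma$. The only cosmetic difference is that the paper additionally notes the simplifying assumption $\lambda|S| \in \mathbb{N}$, which your write-up omits but which does not change the argument.
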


\begin{proof}
Let $a^* \in \arbset $ be an optimal solution and assume that we have accurate advice $\Tilde{a} = a^*$. Denote the solution to $\mech_{\gamma, \arbset }(S, \Tilde{a})$ as $a$. If $a = a^*$ then our mechanism is optimal on this instance. Assume otherwise and consider the case where $ a > a^*$ (the symmetric argument also holds). We want to lower bound $|\{y_{i,j} \in S: y_{i,j} \geq  a\}|$. For simplicity, we assume that $\lambda \cdot |S| \in \mathbb{N}$. Take $S'$ as calculated in \mech. Then
\begin{align*}|\{y_{i,j} \in S: y_{i,j} \geq  a\}|  \geq \frac{1}{2}|\{y'_{i,j} \in S': y'_{i,j} \geq  a\}| \geq \frac{1}{2}\left(\frac{|S| + \lambda \cdot |S|}{2}\right) \geq \frac{(1 + \lambda)|S|}{4}.
\end{align*}
where the first inequality is from Observation~\ref{obs:halfpoints} and the second inequality is a consequence of Observation~\ref{obs:mechmed} and of the values at least 
$a$, none can be copies of $a^*$, and therefore all correspond to some $y'_{i,j}$.  We  apply this to Lemma~\ref{lem:ratio} to get
\[\frac{R(a, S)}{R(a^*, S)} \leq \frac{|S| - |\{y_{i,j}\in S: y_{i,j} \geq  a\}|}{|\{y_{i,j} \in S: y_{i,j} \geq  a\}|} \leq \frac{3 - \lambda}{1 + \lambda} = 1 + \gamma. \qedhere\] 
\end{proof}

The proof for the robustness guarantee follows a similar structure as the proof of Lemma~\ref{lem:PFAconsistency} for consistency.

\begin{restatable}{rLem}{lemPFARobust}\label{lem:PFArobust}
        For any $\gamma \in (0, 2]$ and $\arbset \subseteq \mathbb{R}$, $\mech_{\gamma, \arbset }$ achieves $1+4/
        \gamma$ robustness.
\end{restatable}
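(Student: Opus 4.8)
The plan is to mirror the consistency proof (Lemma~\ref{lem:PFAconsistency}) almost line for line, but now the advice $\Tilde{a}$ may be arbitrary rather than equal to $a^*$. Let $a = \mech_{\gamma, \arbset}(S, \Tilde{a})$ and assume $a \ne a^*$, and without loss of generality consider $a > a^*$ (the symmetric case is identical). As in the consistency argument, the key quantity to control is $b = |\{y_{i,j} \in S : y_{i,j} \geq a\}|$, since Lemma~\ref{lem:ratio} then gives $R(a,S)/R(a^*,S) \leq (|S| - b)/b$, and our goal reduces to showing $b \geq \frac{\gamma}{4+\gamma}|S|$, which yields $(|S|-b)/b \leq 4/\gamma = 1 + 4/\gamma - 1$; I should double-check the arithmetic so that the bound lands exactly at $1 + 4/\gamma$.

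First I would invoke Observation~\ref{obs:mechmed}, which guarantees that the mechanism's output $a$ satisfies $|\{y'_{i,j} \in S' : y'_{i,j} \geq a\}| + \lambda|S| \cdot \mathbf{1}_{\Tilde{a} \geq a} \geq (1+\lambda)|S|/2$. The crucial difference from the consistency proof is the indicator term $\mathbf{1}_{\Tilde{a} \geq a}$: when the advice was correct we had $\Tilde{a} = a^* < a$, so the indicator vanished and all $\lambda|S|$ advice-copies fell strictly below $a$, forcing the $y'_{i,j}$ mass above $a$ to absorb the full $(1+\lambda)|S|/2$. With adversarial advice, however, we could have $\Tilde{a} \geq a$, in which case the $\lambda|S|$ advice copies contribute to the left-hand side and we only obtain the weaker bound $|\{y'_{i,j} \geq a\}| \geq (1+\lambda)|S|/2 - \lambda|S| = (1-\lambda)|S|/2$. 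This is the worst case, and it is exactly the case that produces the robustness (rather than consistency) ratio. I would then push this through Observation~\ref{obs:halfpoints} to descend from $S'$ to the true set $S$, obtaining $b \geq \frac{1}{2}\cdot\frac{(1-\lambda)|S|}{2} = \frac{(1-\lambda)|S|}{4}$, and finally apply Lemma~\ref{lem:ratio} to get $R(a,S)/R(a^*,S) \leq (|S|-b)/b \leq \frac{4 - (1-\lambda)}{1-\lambda} = \frac{3+\lambda}{1-\lambda}$.

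The last step is to substitute $\lambda = (2-\gamma)/(2+\gamma)$ and simplify $\frac{3+\lambda}{1-\lambda}$ to confirm it equals $1 + 4/\gamma$. I expect $1 - \lambda = \frac{2\gamma}{2+\gamma}$ and $3 + \lambda = \frac{8}{2+\gamma}$, giving the ratio $\frac{8}{2\gamma} = 4/\gamma$; if the clean value comes out to $4/\gamma$ rather than $1+4/\gamma$ I would re-examine whether Lemma~\ref{lem:ratio} should be applied with the sharper count or whether one of the $y'_{i,j} \geq a$ terms can coincide with an advice copy, as was flagged in the consistency proof. The main obstacle, and the only genuinely delicate point, is handling the fractional assumption $\lambda|S| \in \mathbb{N}$ and the tie-breaking/edge cases around whether advice copies sit exactly at $a$; as in the consistency lemma I would state the integer assumption for simplicity and note that the general case follows by the fractional-copies interpretation of $R_\lambda$ established just before the observations. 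The strategyproofness claim of Theorem~\ref{thm:robustconsist} is orthogonal to this lemma and is handled separately, so it is not part of this proof.
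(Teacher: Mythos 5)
Your proposal is correct, and its core engine is the same as the paper's: Observation~\ref{obs:mechmed} with the indicator $\mathbf{1}_{\Tilde{a}\geq a}$ contributing in the worst case, Observation~\ref{obs:halfpoints} to pass from $S'$ to $S$, and Lemma~\ref{lem:ratio} to convert the count $b \geq \frac{(1-\lambda)|S|}{4}$ into the ratio $\frac{3+\lambda}{1-\lambda}$. Where you genuinely diverge is the case structure. The paper fixes the \emph{advice} WLOG ($\Tilde{a} \geq a^*$) and must then treat the case $a < a^*$ (output on the opposite side of $a^*$ from the advice) by a separate argument: it compares $a$ to the no-advice output $b = \mech_{2,\arbset}(S,\Tilde{a})$, notes $b \leq a < a^*$, and uses unimodality of the risk plus the $3$-approximation of the advice-free mechanism. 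You instead fix the \emph{output} WLOG ($a > a^*$) and handle both sides symmetrically, observing that the worst sub-case in each direction is when the advice copies sit on the same side as the output, while the opposite sub-case only improves the count (indeed it recovers the consistency-level bound $\frac{3-\lambda}{1+\lambda} = 1+\gamma$). This is a cleaner, more uniform treatment that avoids invoking Lemma~\ref{lem:PFAconsistency} inside the robustness proof; its only extra cost is that you need the mirror-image of Observation~\ref{obs:halfpoints} for points $\leq a$, which the paper states only in the $\geq$ direction but which holds by the identical each-$b_i$-is-a-median argument. Two small slips, both harmless: your stated ``goal'' $b \geq \frac{\gamma}{4+\gamma}|S|$ is stronger than what the argument delivers (the true count is $b \geq \frac{\gamma}{2(2+\gamma)}|S| = \frac{(1-\lambda)|S|}{4}$, which yields exactly $1+4/\gamma$, not $4/\gamma$), and in the final substitution $3+\lambda = \frac{8+2\gamma}{2+\gamma}$, not $\frac{8}{2+\gamma}$, so that
\begin{equation*}
\frac{3+\lambda}{1-\lambda} \;=\; \frac{(8+2\gamma)/(2+\gamma)}{2\gamma/(2+\gamma)} \;=\; \frac{4+\gamma}{\gamma} \;=\; 1+\frac{4}{\gamma},
\end{equation*}
resolving the discrepancy you flagged.
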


\begin{proof}
Consider an arbitrary input dataset $S$ and let $a^* \in \arbset$ be an optimal solution. We assume that the advice $\Tilde{a}$ is such that  $\Tilde{a} \geq a^*$. If $\Tilde{a} \leq a^*$, the claim follows from a symmetric argument. Let $a$ be the constant returned by $\mech_{\gamma, \arbset}(S, \Tilde{a})$. Consider the case where $a < a^*$. Let us denote $b$ as the solution returned by $\mech_{2, \arbset}(S, \Tilde{a})$, which we observe means that the advice is not used in the mechanism. It is clear that $a$ is between $b$ and $\Tilde{a}$ by Observation~\ref{obs:mechmed}, and thus $b \leq a < a^*$. We know by Lemma~\ref{lem:PFAconsistency} that selecting $b$ achieves $1+\gamma = 3$ consistency, and since the advice is not used, it achieves this 3 approximation for all $\Tilde{a}$. Thus we know that  constant $a$ achieves approximation at most 3 since $R(a, S)$ is nonincreasing over $[b, a^*]$, and $1+\frac{4}{\gamma} \geq 3$ for all $\gamma \in (0,2]$ so we still achieve $1 + \frac{4}{\gamma}$ robustness for this case.

Then assume that $a \geq a^*$. We now want to lower bound $|\{y_{i,j} \in S: y_{i,j} \geq  a\}|$. Take $S'$ as calculated in \mech. Then 
\begin{align*}|\{y_{i,j} \in S: y_{i,j} \geq  a\}|  \geq \frac{1}{2}|\{y'_{i,j} \in S': y'_{i,j} \geq  a\}|
 \geq \frac{1}{2}\left(\frac{(1+ \lambda) |S|}{2} - \lambda \cdot |S|\right) 
 = \frac{(1 - \lambda)|S|}{4},
\end{align*}
where the first inequality is from Observation~\ref{obs:halfpoints} and the second inequality is a consequence of Observation~\ref{obs:mechmed}. Since $a \geq a^*$, we can apply  Lemma~\ref{lem:ratio} to get
\[\frac{R(a, S)}{R(a^*, S)} \leq \frac{|S| - |\{y_{i,j} \in S: y_{i,j} \geq  a\}|}{|\{y_{i,j} \in S: y_{i,j} \geq  a\}|} \leq \frac{3 + \lambda}{1 - \lambda}.\]
Finally, we  substitute in $\lambda = \frac{2 - \gamma}{2 + \gamma}$ to get that this equals $1+4/\gamma.$
\end{proof}

We prove that \mech~is strategyproof, and when $|S_i|$ is odd for all $i$, the mechanism also satisfies the stronger guarantee of group-strategyproofness. When considering the learning-theoretic setting in Section~\ref{sec:learning}, which is an important practical motivation for our mechanism, we will discuss why odd $|S_i|$ is a reasonable assumption.

\begin{restatable}{rLem}{thmSP}\label{lem:PFASP}
    For any $\gamma \in [0, 2]$ and $\arbset \subseteq \mathbb{R}$, $\mech_{\gamma, \arbset }$ is strategyproof. Furthermore, $\mech_{\gamma,\arbset}$ is group-strategyproof on any instance where $|S_i|$ is odd for all $i$.
\end{restatable}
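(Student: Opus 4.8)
The plan is to reduce the mechanism to an equivalent ``effective game'' and then exploit the structure of weighted medians on single-peaked preferences. First I would observe that the only feature of agent $i$'s report that influences the output is the projected value $b_i = \textsc{ERM}(\arbset, S_i)$: the mechanism replaces $S_i$ by $\{b_i\}^{|S_i|}$, so the output depends on the report only through $b_i$. Moreover, by reporting $|S_i|$ copies of any target $v \in \arbset$, agent $i$ can realize any $b_i = v \in \arbset$, so the effective action set of agent $i$ is exactly $\arbset$. When agent $i$ is truthful, $b_i$ equals the median $m_i$ of its true labels, which for the absolute loss minimizes its true personal risk $R_i(\cdot, S_i)$; since $R_i(a, S_i)$ is convex and piecewise linear in $a$, the induced preference over outputs is single-peaked with peak $m_i$ (a single point when $|S_i|$ is odd, an interval otherwise). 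Finally, by Observation~\ref{obs:mechmed} the output is the largest weighted median of the blocks $\{(b_i, |S_i|)\}_i$ together with the phantom $(\tilde a, \lambda|S|)$; equivalently, it is the largest minimizer of the convex function $F(a) = \sum_i |S_i|\,|a - b_i| + \lambda|S|\,|a-\tilde a|$.

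For strategyproofness I would fix the other agents' reports and the advice and compare the truthful output $a_0$ (with $b_i = m_i$) to the output $\bar a$ after a deviation to some $\bar b_i$. If $a_0$ already lies in the peak set of $R_i$ there is nothing to prove, so suppose $a_0 > m_i$ (the case $a_0 < m_i$ is symmetric). Then block $i$ sits strictly below the median $a_0$, so its contribution to the subgradient of $F$ at $a_0$ is $+|S_i|$. Lowering $b_i$ below $m_i$ leaves this contribution, hence $\partial F(a_0)$, unchanged and cannot raise the largest minimizer, so the output stays at $a_0$ and $R_i$ is unchanged. Raising $b_i$ weakly increases the weighted median by monotonicity, giving $\bar a \ge a_0 > m_i$ and therefore $R_i(\bar a, S_i) \ge R_i(a_0, S_i)$, since $R_i$ is nondecreasing to the right of its peak. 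In all cases the deviation is not profitable, which establishes strategyproofness for every $\gamma \in [0,2]$ and every $|S_i|$.

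For group-strategyproofness I would assume all $|S_i|$ are odd, so each $R_i(\cdot, S_i)$ is \emph{strictly} single-peaked with unique peak $m_i$; this strictness is exactly where oddness enters. Fix a coalition $C$ with true output $a_0$ and deviated output $a'$, and suppose every member weakly benefits. If $a' = a_0$ the returned functions coincide and all personal risks are equal. Otherwise assume $a' > a_0$ (the case $a' < a_0$ is symmetric) and split on the peaks of the coalition. If some $i \in C$ has $m_i \le a_0$, then $m_i \le a_0 < a'$ and strict single-peakedness gives $R_i(a', S_i) > R_i(a_0, S_i)$, contradicting weak benefit. If instead every $i \in C$ has $m_i > a_0$, then truthfully none of the coalition's mass lies at $\le a_0$, so the entire mass at $\le a_0$ comes from the (unchanged) non-coalition blocks and the phantom; since $a_0$ is the \emph{largest} minimizer and there is coalition mass strictly above it, this mass must exceed $W/2$ where $W = (1+\lambda)|S|$. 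That mass is untouched by the deviation, so the deviated configuration still has more than $W/2$ mass at $< a'$, contradicting the requirement that $a'$ be a weighted median. Both cases are impossible, so $a' = a_0$ and the coalition cannot profit.

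The main obstacle is the group-strategyproofness argument, and within it the ``all coalition peaks above $a_0$'' case: it requires a clean mass-counting statement about the largest weighted median with the fractional phantom weight $\lambda|S|$, and care that the relevant inequalities are strict. Pinning down that $\text{mass}(\le a_0) > W/2$ strictly — which fails precisely when an even-size tie creates a plateau, explaining the oddness hypothesis — and verifying that the $\arg\min$ over the arbitrary set $\arbset$ coincides with the weighted median (its candidate minimizers are the data points $\{b_i\}\cup\{\tilde a\}\subseteq \arbset$) are the details I would treat most carefully.
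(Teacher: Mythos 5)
Your proposal is correct and takes essentially the same route as the paper's proof: reduce each agent's influence to its projected median $b_i$, use the weighted-median characterization of the output (Observation~\ref{obs:mechmed}) together with single-peakedness of $R_i$ for strategyproofness, and for group-strategyproofness split on whether some coalition member's peak lies weakly on the output's side (oddness giving a strict loss for that member) versus all coalition peaks lying strictly beyond the output (where the untouched non-coalition-plus-advice mass, strictly exceeding half the total weight because of the largest-median tie-break, contradicts the deviated output being a median). The only cosmetic difference is that you phrase the unilateral-deviation step via subgradients and median monotonicity, while the paper argues directly that a deviating agent can only push the output further from its own peak; the underlying content is identical.
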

\begin{proof}
First, let us prove strategyproofness. Fix true instance $S = \{S_i, S_{-i}\}$ for some $i \in [n]$ and advice $\Tilde{a} \in \arbset$. If $\mech_{\gamma, \arbset}(S, \Tilde{a})$ returns $a_i^*$ which is a minimizer of $R_i(a, S_i)$, then clearly there is no incentive to misreport. Assume otherwise, so $a_i^* \notin \arg\min_{a \in \arbset}R_i(a, S_i)$. Observe that $\arg\min_{a \in \arbset}R_i(a, S_i) = \{a_i \in \arbset: a_i \in [b_i, c_i]\}$ for some $b_i \leq c_i$, representing the median(s) of $S_i$. WLOG let $a_i^* < b_i$. Recall from Observation~\ref{obs:mechmed} that \mech~returns some point $a$ such that $|y'_{i,j} \in S'  : y'_{i,j} \geq a| + \lambda|S|
\cdot \mathbf{1}_{\Tilde{a} \geq a} \geq (1 + \lambda)|S|/2$. Then agent $i$ can only change this solution if it reports $S_i$ such that $S_i'$ is composed of copies of $a_i' < a^*$. Doing so can only move the solution further from $a^*$ in the negative direction. It is well-known that $R_i$ for $S_i$ with respect to a constant increases as the constant gets further (in either direction) from the set of medians of $S_i$, so this misreport may only increase the risk of agent $i$, not decrease. Thus it is impossible for agent $i$ to decrease their risk through misreport.

     Now we prove group-strategyproofness when each $|S_i|$ is odd. We first want to reduce the space of values that each agent $i$ cares about from $|S_i|$ points to one. Observe that as defined in \mech, $S'_i$ is simply a projection of each agent $i$'s type to $|S_i|$ copies of the constant that minimizes their personal risk, and the agents can fully control this value. Then, we can equivalently consider that each agent reports $a_i$ as their preferred constant, and each $S'_i$ is precisely $|S_i|$ copies of $a_i$. It is, therefore, sufficient to prove that no group of agents can misreport their preferred constant function and all (weakly) decrease their personal risk with respect to the output of the mechanism.

     Let us assume that $\mech_{\gamma, \arbset}$ is not group-strategyproof, so that we have for some $\gamma \in (0, 2]$ a true instance $S$ with corresponding misreport $\bar{S}$ and advice $\Tilde{a}$ on which a coalition of agents $C \subseteq [n]$ can manipulate the mechanism in their favor. We can denote the mapping from their misreport as data $\bar{S}_i'$, from calculating $S_i'$ with the misreports in \mech, such that $\bar{S}_i' = S'_i$ for all $i \notin C$. Let $a$ and $\bar{a}$ be the constants returned by \mech~ given $S'$ and $\bar{S}'$ respectively, and WLOG let $a < \bar{a}$ (the symmetric argument holds otherwise). Then it must hold that $R_i(\bar{a}, S_i) \leq R_i(a, S_i)$ for all $i \in C$.

     Recall that in the mechanism we define $\lambda = \frac{2 - \gamma}{2 + \gamma}$. If we denote $T = S' \uplus \{\Tilde{a}\}^{\lambda \cdot |S|}$ and analogously $\bar{T} = \bar{S}' \uplus \{\Tilde{a}\}^{\lambda\cdot |S|}$, by Observation~\ref{obs:mechmed} we know that $a$ is a median of $T$ and $\bar{a}$ a median of $\bar{T}$. Let us introduce for real number $b$ and set of real numbers $Z$ sets $L_{b, Z}:=\{y \in Z: y \leq b\}$ and $R_{b, Z}:= \{y \in Z: y \geq b\}$, intuitively corresponding to points to the left and right of $b$. Applying this definition to medians $a$ and $\bar{a}$, we thus know that $|L_{a, T}|, |R_{\bar{a}, \bar{T}}| \geq \frac{(1 + \lambda)|S|}{2}$. 
     
     It is clear that for $S_i$ with odd size $|S_i|$, $S_i$ has a unique median. It is also well known that $R_i$ with respect to a constant increases as the constant gets further (in either direction) from the median of $S_i$. Then observe that for some agent $i$ with true preferred function $a_i$ such that $a_i \leq a < \bar{a}$, it holds that $R_i(a, S_i) < R_i(\bar{a}, S_i)$. Therefore, agent $i$ cannot be in the coalition, as he does not gain from the collective misreport. Thus, all of the points in $L_{a, T}$  come from either the reports of agents not in the coalition or copies of $\Tilde{a}$ and, therefore, must also be in $\bar{T}$. Then $|L_{a, \bar{T}}| \geq\frac{(1 + \lambda)|S|}{2}$. Observe that for $|R_{\bar{a}, \bar{T}}| \geq \frac{(1 + \lambda)|S|}{2}$ to also hold, we must have $|L_{a, \bar{T}}| = \frac{(1 + \lambda)|S|}{2}$, and therefore $|L_{a, T}| = \frac{(1 + \lambda)|S|}{2}$. This means that $(1 + \lambda)|S|$ must be even. However, since we tiebreak by selecting the largest median, this is a contradiction because there must exist point $\hat{a} > a$ such that $|R_{\hat{a}, T}| = \frac{(1 + \lambda)|S|}{2}$, and so $a$ cannot be the output of the mechanism with respect to $S$.
\end{proof}

Theorem~\ref{thm:robustconsist} then follows from Lemma~\ref{lem:PFAconsistency}, Lemma~\ref{lem:PFArobust}, and Lemma~\ref{lem:PFASP}. 

We conclude this section with an analysis of \mech~in terms of the \emph{error} of the advice. Let us introduce error term $\eta = \eta(S, \Tilde{a})$, which for any instance $S$ with advice $\Tilde{a}$ and set of optimal solutions $F_{opt} = \arg\min_{a \in \arbset }R(a, S)$ is defined as
$\eta = \min_{a^* \in F_{opt}}\frac{|a^* - \Tilde{a}|}{R(a^*, S)}.$ Then we get the following approximation that is a smooth transition from the consistency to the robustness guarantee as $\eta$ increases. 

\begin{restatable}{rThm}{thmError}
\label{thm:errortolerant}
    For any $\gamma \in (0, 2]$,  $\mech_{\gamma, \arbset }$ achieves $\min\left\{ 1+\frac{4}{\gamma}, 1 + \gamma + \eta\right\}$ approximation.
\end{restatable}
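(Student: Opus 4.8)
The plan is to combine the robustness bound already in hand with a new, error-dependent bound. Since Lemma~\ref{lem:PFArobust} shows that $\mech_{\gamma,\arbset}$ is $1+4/\gamma$ robust for every advice, the output always satisfies $R(a,S) \le (1+4/\gamma)R(a^*,S)$, so it suffices to prove separately that the approximation is at most $1+\gamma+\eta$; the claimed minimum then follows. I would fix $a^*$ to be the optimal solution in $F_{opt}$ closest to $\Tilde{a}$, so that $|a^*-\Tilde{a}| = \eta\cdot R(a^*,S)$ (all optimal points share the common value $R(a^*,S)$, so $\eta$ is well-defined), and assume without loss of generality $\Tilde{a}\ge a^*$, the case $\Tilde{a}\le a^*$ being symmetric. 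Writing $a = \mech_{\gamma,\arbset}(S,\Tilde{a})$ and $\lambda=(2-\gamma)/(2+\gamma)$, I would then split into three cases according to the position of $a$ relative to $a^*$ and $\Tilde{a}$.

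In the two extreme cases, $a\le a^*$ and $a>\Tilde{a}$, the key point is that the indicator term in Observation~\ref{obs:mechmed} vanishes: in the first case $\Tilde{a}>a$, so the $\le$ condition gives $|\{y'_{i,j}\in S': y'_{i,j}\le a\}| \ge (1+\lambda)|S|/2$, while in the second case $\Tilde{a}<a$, so the $\ge$ condition gives $|\{y'_{i,j}\in S': y'_{i,j}\ge a\}| \ge (1+\lambda)|S|/2$. Halving through Observation~\ref{obs:halfpoints} (and its symmetric $\le$ counterpart, which holds by the identical median argument) yields $b\ge(1+\lambda)|S|/4$ points of $S$ on the appropriate side of $a$. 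Plugging this into the corresponding branch of Lemma~\ref{lem:ratio} gives a ratio of at most $\tfrac{|S|-b}{b}\le\tfrac{3-\lambda}{1+\lambda}=1+\gamma\le 1+\gamma+\eta$. These are precisely the computations from the proof of Lemma~\ref{lem:PFAconsistency}, reused here because in these two cases the advice does not pull $a$ strictly between $a^*$ and $\Tilde{a}$.

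The genuinely new case is the interpolation regime $a^*<a\le\Tilde{a}$, where I would extract a distance bound directly from the two displayed inequalities in the proof of Lemma~\ref{lem:ratio}. Setting $d=a-a^*$ and $b=|\{y_{i,j}\in S: y_{i,j}\ge a\}|$, subtracting the lower bound on $R(a^*,S)$ from the upper bound on $R(a,S)$ (whose base terms coincide) gives $R(a,S)-R(a^*,S)\le \tfrac{d(|S|-2b)}{|S|}\le d$. Because $a\le\Tilde{a}$ we have $d\le \Tilde{a}-a^* = \eta\cdot R(a^*,S)$, so the ratio is at most $1+d/R(a^*,S)\le 1+\eta\le 1+\gamma+\eta$. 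Combining the three cases establishes the $1+\gamma+\eta$ bound, and intersecting with the robustness bound yields the stated $\min\{1+4/\gamma,\,1+\gamma+\eta\}$.

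The main obstacle I anticipate is getting the case split exactly right so that each case invokes the appropriate tool. The two extreme cases must use the vanishing indicator in Observation~\ref{obs:mechmed} rather than the distance bound: when $a$ overshoots $\Tilde{a}$ one has $d > \eta\cdot R(a^*,S)$, so the distance bound would be too weak, yet the count bound still caps the ratio at $1+\gamma$ irrespective of how far $a$ lands. Conversely, the middle case cannot use the count bound, since that only delivers $1+\gamma$ and would lose the sharp $\eta$ dependence that makes the estimate tight as the advice becomes accurate. A secondary point needing care is confirming that choosing $a^*$ as the optimal point nearest $\Tilde{a}$ converts $|a^*-\Tilde{a}|$ cleanly into $\eta\cdot R(a^*,S)$ in the distance-bound step.
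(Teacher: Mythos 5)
Your proof is correct, but it takes a genuinely different route from the paper's. The paper first establishes a stability lemma (Lemma~\ref{lem:medianshift}): shifting the advice from $a^*$ to $\Tilde{a}$ moves the output of $\mech_{\gamma,\arbset}$ by at most $|\Tilde{a}-a^*|$. It then treats consistency as a black box and chains three inequalities --- the triangle inequality $R(\Tilde{b},S)\le R(b^*,S)+|\Tilde{b}-b^*|$, the stability bound $|\Tilde{b}-b^*|\le|\Tilde{a}-a^*|$, and the $1+\gamma$ consistency bound on $R(b^*,S)$ --- so the whole argument compares the output under wrong advice to the output under \emph{correct} advice. You instead compare the output directly to the optimum via a three-way case split on where $a$ lands relative to $a^*$ and $\Tilde{a}$: in the two extreme cases you correctly observe that the indicator in Observation~\ref{obs:mechmed} vanishes, so the consistency counting argument goes through verbatim and gives $1+\gamma$ regardless of the advice error; in the interpolation case $a^*<a\le\Tilde{a}$ your subtraction of the two displays in Lemma~\ref{lem:ratio}'s proof gives $R(a,S)-R(a^*,S)\le d\le \Tilde{a}-a^*=\eta\cdot R(a^*,S)$, hence a $1+\eta$ bound. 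Both routes are sound; the paper's is shorter and more modular (one clean new lemma, consistency reused as a black box), while yours avoids the Lipschitz lemma entirely and in fact proves the slightly sharper bound $\min\bigl\{1+4/\gamma,\,\max\{1+\gamma,\,1+\eta\}\bigr\}$, since no case incurs both the $\gamma$ and the $\eta$ loss simultaneously --- the sum $1+\gamma+\eta$ in the theorem statement is what the paper's additive chaining produces. Two small points of care in your write-up, neither fatal: the degenerate boundary case $a=a^*=\Tilde{a}$ (where the indicator does not vanish) should be dispatched by noting the output is then optimal, and the $\le$-counterpart of Observation~\ref{obs:halfpoints}, which you invoke, is not stated in the paper but does follow from the identical median argument, as you say.
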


Let us first prove this simple helper lemma.

\begin{lemma}\label{lem:medianshift}
    For any instance $S$, parameter $\gamma \in (0, 2]$, and constants $a, a' \in \arbset$, if $\mech_{\gamma, \arbset}(S, a) = b$ and $\mech_{\gamma, \mathcal{F}_{\arbset}}(S, a') = b'$, then 
    \[|b'- b| \leq |a' - a|.\]
\end{lemma}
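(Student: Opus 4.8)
The plan is to exploit that, by construction, the set $S' = \uplus_{i=1}^n \{b_i\}^{|S_i|}$ on which $\mech$ computes its (weighted) median does \emph{not} depend on the advice: only the $\lambda|S|$ copies of the advice move as the advice changes, where $\lambda = \frac{2-\gamma}{2+\gamma}$. Writing $M = (1+\lambda)|S|$ and, for a threshold $t$ and advice $\Tilde{a}$,
\[
N_{\Tilde{a}}(t) = |\{y'_{i,j} \in S' : y'_{i,j} \ge t\}| + \lambda|S|\cdot \mathbf{1}_{\Tilde{a} \ge t},
\]
Observation~\ref{obs:mechmed} together with the largest-median tie-breaking rule gives that $\mech_{\gamma,\arbset}(S,\Tilde{a})$ is the largest value $b$ with $N_{\Tilde{a}}(b) \ge M/2$. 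This description is valid even when $\lambda|S|$ is not an integer, since the output is the weighted median of $S'$ with a weight-$\lambda|S|$ atom at $\Tilde{a}$. It therefore suffices to show this weighted median is $1$-Lipschitz in the location of the atom. I would assume without loss of generality that $a' \ge a$ and set $\delta = a' - a \ge 0$; by symmetry this yields the stated bound $|b'-b| \le |a'-a|$.

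First I would establish monotonicity, $b' \ge b$. For every threshold $t$ the atom contributes to $N_{\cdot}(t)$ exactly when it lies at or above $t$, so moving the atom from $a$ to $a' \ge a$ can only add the weight $\lambda|S|$ to some tails and never remove it; hence $N_{a'}(t) \ge N_{a}(t)$ for all $t$. In particular $N_{a'}(b) \ge N_{a}(b) \ge M/2$, and since $b'$ is the largest point whose $N_{a'}$-tail is at least $M/2$, we conclude $b' \ge b$.

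The core of the argument is the matching upper bound $b' \le b + \delta$, which rests on the identity $\mathbf{1}_{a' \ge t} = \mathbf{1}_{a \ge t - \delta}$, valid because $a' = a + \delta$. Take any $t > b + \delta$; then $t - \delta > b$, and since $b$ is the largest point with $N_a(\cdot) \ge M/2$, we have $N_a(t-\delta) < M/2$. As the tail $|\{y'_{i,j} \in S' : y'_{i,j} \ge \cdot\}|$ of the fixed set $S'$ is nonincreasing, $|\{y'_{i,j} \in S' : y'_{i,j} \ge t\}| \le |\{y'_{i,j} \in S' : y'_{i,j} \ge t-\delta\}|$, and combining this with the indicator identity gives
\[
N_{a'}(t) = |\{y'_{i,j} \in S' : y'_{i,j} \ge t\}| + \lambda|S|\,\mathbf{1}_{a' \ge t} \le |\{y'_{i,j} \in S' : y'_{i,j} \ge t-\delta\}| + \lambda|S|\,\mathbf{1}_{a \ge t-\delta} = N_a(t-\delta) < \tfrac{M}{2}.
\]
Thus no point exceeding $b + \delta$ has $N_{a'}$-tail at least $M/2$, so $b' \le b + \delta$. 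Together with monotonicity this yields $0 \le b' - b \le \delta = a'-a$, completing the argument.

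I expect the main obstacle to be stating the weighted-median characterization cleanly: in particular, getting the largest-median tie-breaking rule to align exactly with ``the largest $b$ with $N_{\Tilde{a}}(b) \ge M/2$'' (here one uses that the set of absolute-loss minimizers is a closed interval whose right endpoint is precisely this largest point) and handling the possibly fractional atom weight $\lambda|S|$. Once that interpretation is fixed, the Lipschitz estimate reduces to the one-line indicator identity above.
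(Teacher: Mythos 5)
Your proposal is correct and takes essentially the same route as the paper's proof: both invoke Observation~\ref{obs:mechmed} to view the output of $\mech_{\gamma,\arbset}$ as the weighted median of the fixed set $S'$ together with an atom of weight $\lambda|S|$ at the advice, and both show this median moves monotonically and by at most $|a'-a|$ when the atom shifts from $a$ to $a'$. If anything, your explicit two-sided tail argument ($N_{a'}(t)\geq N_a(t)$ for monotonicity, and $N_{a'}(t)\leq N_a(t-\delta)$ for the Lipschitz bound) is a more rigorous rendering of the paper's informal ``shifting the weight increases the median by at most $|a'-a|$'' reasoning.
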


\begin{proof}
    WLOG we let $a \leq a'$. Recall from Observation~\ref{obs:mechmed} that $|y'_{i,j} \in S'  : y'_{i,j} \leq b| + \lambda|S| \cdot \mathbf{1}_{a \leq b} \geq (1 + \lambda)|S|/2$ of these points are at most $b$, which we call left-side points. When performing $\mech_{\gamma, \arbset}(S, a')$, we shift $\lambda \cdot |S|$ of the weight from $a$ to $a'$, which increases at most $\frac{(1+\lambda)|S|}{2}$ of the left-side points by $|a - a'|$, as $\lambda \cdot |S| \leq \frac{(1 + \lambda)|S|}{2}$ for $\lambda \in [0,1]$. Then the median, and consequently $b'$, can only increase from $b$ by at most $|a' - a|$, and thus $|b' - b| \leq |a' - a|$.
\end{proof}

We can now directly bound the approximation of \mech~as a function of $\eta$.

\begin{proof}[Proof of Theorem~\ref{thm:errortolerant}]
Fix $\gamma \in (0, 2]$. We know the upperbound of $1+\frac{4}{\gamma}$ approximation holds from robustness, so it suffices to prove that approximation is bounded above by $1 + \gamma + \eta$. Consider any instance $S$ with $n$ agents and $|S_i|$ points each. Let $\Tilde{a}$ be the given advice and $a^*$ be the optimal constant that minimizes $|a^* - \Tilde{a}|$, and let us denote $\mech_{\gamma, \arbset}(S, \Tilde{a}) = \Tilde{b}$ and $\mech_{\gamma, \arbset}(S, a^*) = b^*$. Then we can bound the risk incurred by $\Tilde{b}$:
    \begin{align*}
        R(\Tilde{b}, S) & = \frac{1}{|S|}\sum_{i=1}^n\sum_{j=1}^{|S_i|}|\Tilde{b} - y_{i,j}| \\
         & \leq \frac{1}{|S|}\sum_{i=1}^n\sum_{j=1}^{|S_i|}(|b^* - y_{i,j}| + |\Tilde{b} - b^*|) = R(b^*, S) + |\Tilde{b} - b^*| \\
         & \leq R(b^*, S) + |\Tilde{a} - a^*| \\
         & \leq (1 + \gamma)\cdot R(a^*, S) + |\Tilde{a} - a^*| = (1+\gamma+ \eta)R(a^*, S).
    \end{align*}
The inequalities respectively come from the triangle inequality, Lemma~\ref{lem:medianshift}, and the consistency of \mech. 
    \end{proof}

\subsection{Lower bound on consistency-robustness tradeoff} \label{subsec:PFAneg}



In this section, we prove a lower bound on the consistency-robustness tradeoff of strategyproof mechanisms over $\mathcal{F}_{\mathbb{R}_{\leq T}}$, for $T$ arbitrarily and sufficiently large,  that matches the performance of our \mech~mechanism, indicating that we achieve a tight tradeoff. This lower bound is a main technical contribution of the paper.

\begin{theorem}\label{thm:rationalconstantlb}
    Let $\epsilon > 0$. Then, for any  rational $\gamma \in (0,2]$ and $T > \frac{2}{\varepsilon} + \frac{8}{\gamma \varepsilon}$, there is no deterministic strategyproof mechanism over outcome space  $\mathcal{F}_{\mathbb{R}_{\leq T}}$ that is $1+\gamma$ consistent and $1 + 4/ \gamma - \epsilon$ robust.
\end{theorem}


Let us first introduce the family of instances we will leverage throughout this proof. 
For ease of presentation, let $Z_t(z,z') = \{\underbrace{z,\ldots,z}_t,\underbrace{z',\ldots,z'}_{t+1}\}$ for $t \in \mathbb{N}$ and $z,z' \in \mathbb{R}$, and $Z_t(z,z')^k$ be $k \in \mathbb{N}$ copies of $Z_t(z,z')$. 
To prove the lower bound, we first show for $k$ carefully chosen as a function of $\gamma$, that returning a constant at least $1$ given the binary instance $\{Z_t(0,0)^k\uplus Z_t(0,1)^{n-k}\}$
and correct advice $0$ violates $1+\gamma$ consistency. We then show that, as we increase all the 1 values incrementally to arbitrarily large $D$, a mechanism must still return a constant less than 1 to not violate strategyproofness. Finally, we show that, since our mechanism must return a constant less than $1$ on instance $\{Z_t(0,0)^k\uplus Z_t(0,D)^{n-k}\}$ with correct advice $0$, it must continue returning a constant less than 1 given (now incorrect) advice $0$ and instance $\{Z_t(D,0)^{k} \uplus\{Z_t(D,D)^{n-k}\}$ to satisfy strategyproofness. This results in robustness arbitrarily close to $1 + 4/\gamma$ as $n$ and $t$ grow. 

We will need some helper lemmas to prove this result. First, we want to be able to easily compare the risks incurred by selecting different constants for the same instance,  which we can do with this well-known property of the sum of absolute differences over a set.

\begin{observation} \label{obs:singlepeak}
    For any set $S$, if $a$ is the unique optimal constant function over $S$, then for any $b, c$ such that $a < b < c$ or $a > b > c$, it must hold $R(a, S) < R(b, S) < R(c, S)$. Similarly, for any $S_i \in S$ with unique optimal constant $a$, it must hold that $R_i(a, S_i) < R_i(b, S_i) < R_i(c, S_i)$.
\end{observation}

For ease of presentation, let $S(n, k, t, z)=\{Z_t(0,0)^k\uplus Z_t(0,z)^{n-k}\}$ for $n, k, t \in \mathbb{N}$, $k \leq n$, and $z \in \mathbb{R}$. We will now present some lemmas that will help us restrict the behavior of mechanisms on this family of instances.

\begin{restatable}{rLem}{lemZeroOpt}\label{lem:zeroopt}
       For any $n, k, t \in \mathbb{N}$ such that $0 < k \leq n$ and $t \geq n$, the optimal constant for instance $S(n, k, t, z)$ for any $z \in \mathbb{R}$ is $0$. 
\end{restatable}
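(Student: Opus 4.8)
The plan is to reduce the claim to an elementary counting argument about medians. Since we use the absolute loss, a global risk minimizer over the reals is exactly a median of the multiset (as already noted when defining $\textsc{ERM}$), so it suffices to show that $0$ is the unique median of $S(n,k,t,z)$. The key structural observation is that $S(n,k,t,z)$ contains only two distinct values, $0$ and $z$. When a multiset takes only two distinct values with multiplicities $m_0$ (copies of $0$) and $m_z$ (copies of $z$), the function $a \mapsto \sum_y |a-y|$ is piecewise linear and is minimized at the strictly more frequent value whenever $m_0 \neq m_z$; crucially this conclusion depends only on which value is more frequent, not on the relative order of $0$ and $z$, so no case split on the sign of $z$ is needed. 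Thus the entire lemma comes down to verifying $m_0 > m_z$.

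First I would count the multiplicities directly from the block structure. Each block $Z_t(0,0)$ contributes $2t+1$ copies of $0$, and there are $k$ of them; each block $Z_t(0,z)$ contributes $t$ copies of $0$ and $t+1$ copies of $z$, and there are $n-k$ of them. Hence $m_z = (n-k)(t+1)$, and since the total size is $|S| = n(2t+1)$, subtracting gives $m_0 = n(2t+1) - (n-k)(t+1) = nt + kt + k$.

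Next I would establish the strict majority of $0$. A direct subtraction yields $m_0 - m_z = 2k(t+1) - n$, and the hypotheses $k \ge 1$ and $t \ge n$ immediately give $2k(t+1) \ge 2(n+1) > n$, so $m_0 > m_z$. This is precisely where the assumption $t \ge n$ enters: each block already carries a bulk of $t$ zeros, so once $t$ is at least $n$ the zeros dominate the $z$'s even in the extremal case $k=1$. By the two-value observation above, $0$ is then the unique minimizer of $R(\cdot, S)$. The degenerate case $z=0$ is immediate since every point equals $0$.

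The argument is routine; the only thing to be careful about is the bookkeeping of multiplicities and confirming that the single inequality $m_0 > m_z$ suffices. As sanity checks I would verify the boundary cases $k=n$ (where $m_z = 0$ and the claim is trivial) and $z=0$, and reconfirm that the median characterization for a two-valued multiset makes the sign of $z$ irrelevant. I would also note that, having shown $0$ is the \emph{unique} optimum, Lemma~\ref{lem:singlepeak} will then be available to describe how the risk grows monotonically away from $0$, which is what the remaining steps of the lower bound in Theorem~\ref{thm:rationalconstantlb} will rely on.
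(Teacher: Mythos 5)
Your proof is correct and takes essentially the same approach as the paper's: both reduce the claim to showing that $0$ is the strict majority value in the two-valued multiset, and both compute the identical margin $m_0 - m_z = 2k(t+1) - n$ (written as $k(2t+2)-n$ in the paper) and conclude positivity from $k \geq 1$ and $t \geq n$. Your additional remarks on uniqueness of the minimizer and the irrelevance of the sign of $z$ are correct refinements of the same argument, not a different route.
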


\begin{proof}
    It is sufficient to show that for any such instance, more than half of the entries are 0, making 0 the median of the instance. To do so, we can show that there are more 0 entries than $z$ entries:
    \begin{align*}
        |\{y \in S: y = 0\}| - |\{y \in S: y = z\}| = & [k(2t+1) + t(n-k)] - (t+1)(n-k)\\
        = & k(2t+1) - (n-k) = k(2t + 2) - n \\
        \geq & 2n+2 - n > 0.
    \end{align*}
    The first inequality is by definition of $k$ and $t$.
\end{proof}

We also use a corollary of this for similar instances.
\begin{corollary} \label{cor:zeroopt}
    For any $n, k, t \in \mathbb{N}$ such that $0 < k \leq n$ and $t \geq n$, the optimal constant for an instance with $n$ agents and $2t+1$ points per agent and the same number of 0 entries as $S(n, k, t, z)$ is 0.
\end{corollary}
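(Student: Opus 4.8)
The plan is to reduce the claim to the same fact that drives Lemma~\ref{lem:zeroopt}: under absolute loss the optimal constant is a median of the multiset of all labels, so it suffices to show that $0$ is the (unique) median of any instance of the stated form. The leverage is that $0$ occupies a \emph{strict majority} of the points, and a strict majority determines the median regardless of where the remaining minority of points lie. This is exactly the feature that lets us pass from the specific instance $S(n,k,t,z)$, where every non-zero point equals $z$, to an arbitrary instance with the same zero-count.

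First I would recount the zeros. In $S(n,k,t,z)$, the $k$ copies of $Z_t(0,0)$ contribute $k(2t+1)$ zeros and the $n-k$ copies of $Z_t(0,z)$ contribute $t(n-k)$ zeros, so the number of zero entries is $m = k(2t+1) + t(n-k)$, precisely as computed in the proof of Lemma~\ref{lem:zeroopt}. Any instance of the form in the statement has $N = n(2t+1)$ total points and, by hypothesis, exactly $m$ of them equal $0$. I then verify $2m > N$ by the identical arithmetic: $2m - N = 2k(t+1) - n$, which is at least $2(n+1) - n = n+2 > 0$ since $k \geq 1$ and $t \geq n$. Hence strictly more than half of the $N$ points are $0$.

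It remains to conclude that a strict majority at $0$ forces $0$ to be the unique minimizer of $R(\cdot, S)$. Since fewer than $N/2$ points are nonzero, fewer than $N/2$ lie strictly above $0$ and fewer than $N/2$ lie strictly below $0$, so by the standard characterization of the minimizer of a sum of absolute deviations as a median, $0$ is the unique median and hence the unique optimal constant. Concretely, for any $a > 0$ one can write $R(a,S) - R(0,S) = \sum_y (|a-y| - |y|)$ and check termwise that the zero points and the negative points each contribute $a$ while each positive point contributes at least $-a$; the strict majority at $0$ then makes the total strictly positive, and symmetrically for $a < 0$. This single-peaked behavior is exactly the monotonicity recorded in Lemma~\ref{lem:singlepeak}.

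I do not expect a genuine obstacle here, since the only difference from Lemma~\ref{lem:zeroopt} is that the minority of non-zero points may now take arbitrary values rather than all equaling $z$. The argument above is insensitive to their placement because the majority count only tracks how many points fall on each side of $0$, not their magnitudes; the mild point requiring care is simply to phrase the median argument so that it is valid for arbitrary nonzero values, which the termwise comparison above accomplishes.
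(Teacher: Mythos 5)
Your proof is correct and follows the same route the paper intends: the corollary is stated without proof precisely because the count in Lemma~\ref{lem:zeroopt} (your $2m - N = 2k(t+1) - n = k(2t+2)-n \geq n+2 > 0$ is literally the same arithmetic, since in the two-valued instance $2m - N$ equals the paper's difference of counts) already shows the zeros form a strict majority, which pins the median at $0$ regardless of where the remaining points lie. Your termwise verification that a strict majority at $0$ makes $0$ the \emph{unique} minimizer is a worthwhile addition, since the paper later applies Lemma~\ref{lem:singlepeak} (which requires uniqueness) to these instances.
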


Since we know the optimal solution for this family of instances, we can use consistency to bound the constant returned for certain instances when the advice is equal to the optimal solution.

\begin{lemma}\label{lem:largeinst}
    For any $\gamma \in (0,2]$, fix $n, k \in \mathbb{N}$ such that $k = n \cdot \frac{\gamma}{\gamma+2}$ and $n > k+1$, and let $t = \lceil n(\gamma + 2)\rceil$. Fixing $D > 1$, for any strategyproof deterministic mechanism $\mathcal{M}$ that is $1 + \gamma$ consistent, given instance and correct advice $(S(n, k+1, t, D'), 0)$ for any $D' \geq D$, $\mathcal{M}$ must return some constant less than 1.
\end{lemma}


To prove this, we will provide a few helper results. 
\begin{lemma}\label{lem:underone}
    For any $\gamma \in (0,2]$, fix $n, k\in \mathbb{N}$ such that $k = n\cdot \frac{\gamma}{\gamma + 2}$ and $n > k + 1$ and let $t = \lceil n(\gamma + 2)\rceil$. Then for any deterministic mechanism $\mathcal{M}$ that is $1 + \gamma$ consistent, there exists $\delta = \delta(n, \gamma) > 0$ such that, given instance and correct advice $(S(n, k+1, t, 1),0)$, $\mathcal{M}$ must return some constant less than $1 - \delta$.
\end{lemma}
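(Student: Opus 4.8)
The plan is to use consistency to rule out returning any constant $\geq 1$ on the instance $S := S(n,k+1,t,1)$ with advice $0$, and then upgrade this to a gap bounded away from $1$ using the single-peaked structure of the objective. First I would set up the counts. Since every label in $S$ lies in $\{0,1\}$, write $n_0$ and $n_1$ for the numbers of labels equal to $0$ and to $1$; counting the $k+1$ all-zero agents and the $n-k-1$ agents of type $Z_t(0,1)$ gives $n_0 = (k+1)(2t+1)+(n-k-1)t$ and $n_1 = (n-k-1)(t+1)$, with $|S| = n(2t+1) = n_0 + n_1$. By Lemma~\ref{lem:zeroopt} the optimal constant is $0$, so $R(0,S) = n_1/|S|$ and $R(1,S) = n_0/|S|$; moreover $R(\cdot,S)$ is single-peaked with unique minimizer $0$ (Lemma~\ref{lem:singlepeak}), strictly decreasing for $a<0$ and strictly increasing for $a>0$. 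Because the advice $0$ equals the optimum, $1+\gamma$ consistency forces $R(\mathcal{M}(S,0),S) \leq (1+\gamma)R(0,S) = (1+\gamma)n_1/|S|$.

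The crux is an algebraic identity establishing $R(1,S) > (1+\gamma)R(0,S)$, i.e.\ $n_0 > (1+\gamma)n_1$. Substituting $k = n\gamma/(\gamma+2)$ (so that $n-k = 2n/(\gamma+2)$) and simplifying, I expect the leading terms in $t$ and $n$ to telescope and leave
\begin{equation*}
n_0 - (1+\gamma)n_1 = (t+1)(\gamma+2) - n .
\end{equation*}
The cancellation is exact when one uses only $k$ all-zero agents (there the ratio falls strictly below $1+\gamma$), and the surviving positive remainder is produced precisely by the extra ``$+1$'' agent. Since $t = \lceil n(\gamma+2)\rceil \geq n$, the right-hand side is strictly positive, so $n_0 > (1+\gamma)n_1$, equivalently $R(1,S) > (1+\gamma)R(0,S)$.

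To finish, I would convert this strict violation at $a=1$ into a quantitative bound away from $1$. By monotonicity of $R(\cdot,S)$ on $[0,\infty)$, the feasible set $\{a : R(a,S) \leq (1+\gamma)R(0,S)\}$ is an interval whose right endpoint, obtained by solving $n_0 a + n_1(1-a) = (1+\gamma)n_1$ on $[0,1]$, equals $a^+ = \gamma n_1/(n_0 - n_1)$; the inequality $n_0 > (1+\gamma)n_1$ is exactly what yields $a^+ < 1$. Any $1+\gamma$ consistent mechanism must therefore return $\mathcal{M}(S,0) \leq a^+$, so setting $\delta := \tfrac12(1-a^+) > 0$, which depends only on $n$ and $\gamma$ through $n_0,n_1,t,k$, gives $\mathcal{M}(S,0) < 1-\delta$ (for $a<0$ this is immediate). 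The main obstacle is the exact cancellation in the displayed identity: the family of instances is tuned so that the balanced version $S(n,k,t,1)$ sits just below the consistency threshold, and the entire argument hinges on verifying that incrementing to $k+1$ all-zero agents pushes the ratio strictly above $1+\gamma$ by the computable margin $(t+1)(\gamma+2)-n$, with the remaining steps (optimality of $0$, single-peakedness, and the endpoint computation) being routine given the earlier lemmas.
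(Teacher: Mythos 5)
Your proposal is correct and takes essentially the same route as the paper's proof: both invoke Lemma~\ref{lem:zeroopt} for the optimality of $0$, Lemma~\ref{lem:singlepeak} for monotonicity away from $0$, and the linearity of $R(\cdot,S)$ on $[0,1]$ (since all labels lie in $\{0,1\}$) to show that $1+\gamma$ consistency forces the output below a threshold strictly less than $1$. Your exact identity $n_0 - (1+\gamma)n_1 = (t+1)(\gamma+2) - n$ and the explicit threshold $a^+ = \gamma n_1/(n_0 - n_1)$ are just a closed-form repackaging of the paper's lower bound $\gamma + \tfrac{3}{2n-\gamma-2}$ on the slope of the ratio in $c$, so the two arguments are equivalent (yours yielding an explicit $\delta$).
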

\begin{proof}
    Fix $\gamma \in (0,2]$ and $n,k,t$ that satisfy the conditions in the lemma statement. Observe by Lemma~\ref{lem:zeroopt} that the optimal constant for $S(n, k+1, t, 1)$ is $0$, as $k + 1 > 0$ and $t \geq n$ for any $\gamma$. Then to achieve $1 + \gamma$ consistency given instance $S$ and correct advice $0$, $\mathcal{M}$ must return some $c$ such that $\frac{R(c, S)}{R(0, S)} \leq 1+\gamma$. By Observation~\ref{obs:singlepeak}, we know that $R(c, S)$ increases as $c$ increases from $0$. Considering some $c \in [0,1]$, we can calculate the approximation ratio of selecting $c$ as the solution: 
    \begin{align*}
        \frac{R(c, S)}{R(0, S)} & = \frac{\frac{1}{nm}\sum_{y \in S}|y - c|}{\frac{1}{nm}\sum_{y \in S}|y - 0|} = \frac{c\cdot|\{y\in S: y = 0\}| + (1 - c)\cdot |y \in S: y = 1|}{|\{y \in S: y = 1\}|} \\
        & = \frac{c[(k+1)(2t+1) + t(n-k-1)] + (1-c)(n-k-1)(t+1)}{(n-k-1)(t+1)}\\
        & = 1 - \frac{c}{t+1}\cdot\frac{n}{n-k-1} + c \cdot \frac{2(k+1)}{n-k-1}.
    \end{align*}
    Now if we substitute $k$ this equals:
    \begin{align*} 
        & 1 - \frac{c}{t+1}\cdot \frac{n}{n-n\cdot\frac{\gamma}{\gamma+2}-1} + c \cdot \frac{2(n\cdot\frac{\gamma}{\gamma+2}+1)}{n-n\cdot\frac{\gamma}{\gamma+2}-1} \\
        = & 1 + c\left[\gamma + \frac{(2 + \gamma)^2}{2n-\gamma - 2} - \frac{n(\gamma + 2)}{(t+1)(2n-\gamma-2)}\right] \\
        \geq & 1 + c\left[\gamma + \frac{(2 + \gamma)^2}{2n - \gamma - 2} - \frac{n(\gamma + 2)}{n(\gamma + 2)(2n - \gamma - 2)}\right]\\
        \geq & 1 + c\left[\gamma + \frac{3}{2n - \gamma - 2}\right],
    \end{align*}
    with the first inequality by definition of $t$. Observe that for $\frac{R(c, S)}{R(0, S)}$ to be bounded above by $1+\gamma$, $c<1$ must hold. Thus, given $(S(n, k+1, t, 1), 0)$, there exists $\delta > 0$ such that $\mathcal{M}$ must return some constant $c < 1 - \delta$ to satisfy $1 + \gamma$ consistency.
\end{proof}
\begin{corollary}\label{cor:scale}
    For any $\gamma \in (0,2]$, fix $n, k\in \mathbb{N}$ such that $k = n\cdot \frac{\gamma}{\gamma + 2}$ and $n > k + 1$ (note such $n, k$always exist), and let  $t = \lceil n(\gamma + 2)\rceil$. Then for any deterministic mechanism $\mathcal{M}$ that is $1 + \gamma$ consistent, there exists $\delta = \delta(n, \gamma) > 0$ such that, for all $z > 0$, given instance and correct advice $(S(n, k+1, t, z), 0)$, $\mathcal{M}$ must return some constant less than $(1 - \delta)\cdot z$.
\end{corollary}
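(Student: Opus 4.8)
The plan is to reduce the general case to Lemma~\ref{lem:underone} by a scaling argument, taking care that the mechanism $\mathcal{M}$ is \emph{not} assumed to be scale-equivariant. The crucial observation is that $S(n,k+1,t,z)$ is obtained from $S(n,k+1,t,1)$ by multiplying every label by $z$: the zero-valued points are fixed, while each unit-valued point becomes a $z$-valued point. What is preserved under this scaling is not the behaviour of $\mathcal{M}$ but the \emph{approximation ratio} of any candidate constant, since for any $c$ and any $z>0$ we have $R(zc, S(n,k+1,t,z)) = z \cdot R(c, S(n,k+1,t,1))$, and $R(0,\cdot)$ scales identically, so the ratio $R(zc, S(n,k+1,t,z))/R(0, S(n,k+1,t,z))$ equals $R(c, S(n,k+1,t,1))/R(0, S(n,k+1,t,1))$.

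First I would invoke Lemma~\ref{lem:zeroopt} to note that, for every $z>0$, the optimal constant on $S(n,k+1,t,z)$ is $0$, so the advice $0$ is correct and $1+\gamma$ consistency forces $\mathcal{M}$ to return a constant $c'$ with $R(c', S(n,k+1,t,z))/R(0,\cdot) \le 1+\gamma$. Writing $c' = z c$ with $c = c'/z$, the scale-invariance above together with the exact ratio computation carried out inside the proof of Lemma~\ref{lem:underone} gives, for $c' \in [0,z]$,
\[
\frac{R(c', S(n,k+1,t,z))}{R(0, S(n,k+1,t,z))} \;\ge\; 1 + \frac{c'}{z}\left[\gamma + \frac{3}{2n - \gamma - 2}\right].
\]
For $c' > z$ the optimum is still $0$, so by the single-peakedness of the absolute loss (Lemma~\ref{lem:singlepeak}) the ratio only grows and is therefore also strictly above $1+\gamma$. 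Hence $\mathcal{M}$ cannot return any $c' \ge z$, and among $c' \in [0,z]$ the constraint $\le 1+\gamma$ forces $c'/z < 1-\delta$ with exactly the same $\delta = \delta(n,\gamma) > 0$ extracted in Lemma~\ref{lem:underone}. This yields $c' < (1-\delta)z$, as claimed, uniformly in $z$.

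I expect the main (and essentially only) subtlety to be the point flagged at the outset: one must resist the tempting shortcut ``$\mathcal{M}$ returns $<1-\delta$ on the unit instance, so by rescaling it returns $<(1-\delta)z$ on the scaled instance,'' because $\mathcal{M}$ is an arbitrary strategyproof mechanism and need not commute with scaling. The correct move is to transport the consistency \emph{constraint} rather than the mechanism's output: the approximation-ratio function is what is scale-invariant, and re-deriving the Lemma~\ref{lem:underone} bound directly on $S(n,k+1,t,z)$ shows that the identical $\delta$ continues to work for every $z > 0$.
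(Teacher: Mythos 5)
Your proposal is correct and follows essentially the same route as the paper, whose proof of Corollary~\ref{cor:scale} is precisely ``the proof of Lemma~\ref{lem:underone} with all risk values scaled by $z$'' --- that is, re-deriving the ratio bound on $S(n,k+1,t,z)$ and noting the counts of $0$- and $z$-labeled points are unchanged, so the same $\delta(n,\gamma)$ works uniformly in $z$. Your explicit warning that one must transport the consistency \emph{constraint} (which is scale-invariant) rather than the mechanism's output (which need not be scale-equivariant) is exactly the right reading of the paper's terse one-line argument.
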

The proof is exactly as for Lemma~\ref{lem:underone}, scaling all risk values by $z$, and we also note that this means $\delta(n, \gamma)$ is also the same as in Lemma~\ref{lem:underone}.

\begin{proof}[Proof of Lemma~\ref{lem:largeinst}]
Take $\delta$ as in Lemma~\ref{lem:underone} and fix some $D > 1$. It suffices to show that for any $D'$ such that $1 + D' \cdot \frac{\delta}{2t+1} \geq D$, $\mathcal{M}$ must return some constant less than $1-\delta$ given instance $(S(n,k+1,t,1 + D'\cdot\frac{\delta}{2t+1}),0)$. For simplicity, let us denote $z(i) = 1+ i \cdot \frac{\delta}{2t+1}$ and $S(n, k+1, t, z(i))$ as $S^i$. Let us also define for any $j \leq n - k - 1$ the instance $S^i(j)$ as 
\[\{Z_t(0,0)^{k+1}\uplus Z_t(0,z(i+1))^j\uplus Z_t(0,z(i))^{n-k-j-1}\}.\]
Intuitively, as $j$ increases from 0 to $n-k-1$, we start from $S^i$ and incrementally change one agent type at a time to transition to $S^{i+1}$, and note that $S^i(j)$ has the same number of 0 entries as $S^i$, so by Corollary~\ref{cor:zeroopt} constant $0$ is the optimal solution to $S^i(j)$. 

We will first show that, for any $i < D'$, if $\mathcal{M}$ must return a constant less than $1 - \delta$ given instance $S^i$ and advice $0$, then given instance $S^{i+1}$ and advice 0, $\mathcal{M}$ must still return a constant less than $1 - \delta$. To do this, we will show by induction on $j$ that $\mathcal{M}$ must return a constant less than $1 - \delta$ given $(S^i(j),0)$ for all $j \leq n - k -1$. The result follows from the fact that $S^i(0) = S^i$ and $S^i(n - k - 1) = S^{i+1}$ 

\paragraph{Induction on $j$.} 
We want to show for any $j < n - k - 1$ that, if $\mathcal{M}$ must return a constant less than $1 - \delta$ given $(S^i(j),0)$, it must still return a constant less than $1 - \delta$ given $(S^i(j+1),0)$. Observe that $S^i(0) = S^i$, so the base case $j = 0$ is true from the claim statement. Assuming ,then, that $\mathcal{M}$ returns a constant less than $1 - \delta$ given $(S^i(j), 0)$ for some $j < n-k-1$, we want to show that given $S^i(j+1)$ mechanism $\mathcal{M}$ must still return some constant less than $1 - \delta$. First, we will show that given instance $(S^i(j+1), 0)$, $\mathcal{M}$ cannot return any $c \geq z(i+1)$ while maintaining $1 + \gamma$ consistency. Assume otherwise, then by Observation~\ref{obs:singlepeak} the global risk $R(c, S^i(j+1))$ increases as $c$ increases from $z(i+1)$ and $R(c, S^i(j+1)) \geq R({z(i+1)}, S^i(j+1))$. Then we calculate the approximation ratio $\frac{R({z(i+1)}, S^i(j+1))}{R(0, S^i(j+1))}$:
    \begin{align*}
        & \frac{|\{y \in S^i(j+1):y = 0\}|\cdot z(i+1) + |\{y \in S^i(j+1): y = z(i)\}|\cdot |z(i+1) - z(i)|}{|\{y \in S^i(j+1):y = z(i)\}|\cdot z(i) + |\{y \in S^i(j+1): y = z(i+1)\}|\cdot z(i+1)} \\
         = & \frac{|\{y \in S^i(j+1):y = 0\}|\cdot z(i+1) + (n - k - j - 2) \cdot |z(i+1) - z(i)|}{(n - k - j - 2)\cdot z(i) + (j + 1)\cdot z(i+1)} \\
         > & \frac{|\{y \in S^{i}(j+1):y = 0\}|\cdot z(i+1)}{(n - k - 1)\cdot z(i+1)} = \frac{|\{y \in S^{i+1}:y = 0\}|\cdot z(i+1)}{(n - k - 1)\cdot z(i+1)}\\
         = & \frac{R({z(i+1)}, S^{i+1})}{R(0, S^{i+1})} > \frac{R({(1-\delta)\cdot z(i+1)}, S^{i+1})}{R(0, S^{i+1})}.
    \end{align*}
    The first inequality is because $z(i+1) > z(i)$, and the second is by Observation~\ref{obs:singlepeak}. By Corollary~\ref{cor:scale}, we know that this is at least $1 + \gamma$, and, therefore, $\mathcal{M}$ returning $c \geq z(i+1)$ does not satisfy $1 + \gamma$ consistency.

    We can then show that given $(S^i(j+1), 0)$ mechanism $\mathcal{M}$ may not return any $c \in [1-\delta, z(i)]$. To see this, consider that instances $S^i(j)$ and $S^i(j+1)$ only differ by the report of one agent denoted $\bar{a}$, switching from type $S_{\bar{a}} = Z_t(0,z(i))$ to type $\bar{S}_{\bar{a}} = Z_t(0,z(i+1))$. Observe by Observation~\ref{obs:singlepeak} that $R_{\bar{a}}(c, S_{\bar{a}})$ increases as $c$ decreases from $z(i)$ since ${z(i)}$ is the personal risk minimizing function for $S_{\bar{a}}$. Since the induction hypothesis (on $j$) states that given $(S^i(j), 0)$, mechanism $\mathcal{M}$ returns a constant less than $1 - \delta$, returning $c \in [1 - \delta, z(i)]$ when the instance switches to $S^i(j+1)$ would decrease the risk for agent $\bar{a}$, and therefore if $\bar{a}$ had true type $S_{\bar{a}}$ they would have incentive to misreport their type as $\bar{S}_{\bar{a}}$, violating strategyproofness of $\mathcal{M}$.

    Now we consider the case when $\mathcal{M}$ returns some $c \in (z(i), z(i+1))$. Observe by Observation~\ref{obs:singlepeak} that $R_{\bar{a}}(c, S_{\bar{a}})$ increases as $c$ increases from $z(i)$, so $R_{\bar{a}}(c, S_{\bar{a}}) < R_{\bar{a}}({z(i+1)}, S_{\bar{a}})$. We can calculate the difference $R_{\bar{a}}({1-\delta}, S_{\bar{a}}) - R_{\bar{a}}({z(i+1)}, S_{\bar{a}})$ as follows:
    \begin{align*}
         & \frac{1}{2t+1}[|\{y \in S_{\bar{a}}: y = 0\}|(1 - \delta) + |\{y \in S_{\bar{a}}: y = z(i)\}|\cdot (z(i) - (1 - \delta)) \\
        & - |\{y \in S_{\bar{a}}: y = 0\}|\cdot z(i+1) - |\{y \in S_{\bar{a}}: y = z(i)\}|\cdot (z(i+1) - z(i))] \\
        = & \frac{1}{2t+1}[t \cdot (1 - \delta - z(i+1)) + (t+1)\cdot (z(i) - (1 - \delta) - (z(i+1) - z(i)))] \\
        = & \frac{1}{2t+1}[- t\cdot(\delta + (i+1)\frac{\delta}{2t+1}) + (t+1)(\delta + i\frac{\delta}{2t+1}-\frac{\delta}{2t+1})] \\
        = & \frac{1}{2t+1}\cdot \frac{\delta}{2t+1}\cdot i,
    \end{align*}
    and note that this is nonnegative for all $i \geq 0$. Then $R_{\bar{a}}(c, S_{\bar{a}}) < R_{\bar{a}}({z(i+1)}, S_{\bar{a}}) \leq R_{\bar{a}}({1 - \delta}, S_{\bar{a}})$, so if $\mathcal{M}$ returns $c \in (z(i), z(i+1))$ given $(S^i(j+1), 0)$, agent $\bar{a}$ with true type $S_{\bar{a}}$ would have incentive to misreport their type as $\bar{S}_a$ and decrease their personal risk. Then, $\mathcal{M}$ must return some $c < 1 - \delta$ given  $(S^i(j+1), 0)$. Setting $j = n - k - 2$, we get that $\mathcal{M}$ must return a constant below $1 - \delta$ given instance $S^i(n - k - 1) = S^{i+1}$, thus proving the claim.

We now know that, for any $i < D'$, if $\mathcal{M}$ must return a constant less than $1 - \delta$ given $(S^i, 0)$, then, given $(S^{i+1}, 0)$, mechanism $\mathcal{M}$ must still return a constant less that $1 - \delta$. We can now prove the main result that $\mathcal{M}$ must return some constant less than $1 - \delta$ given $(S^{D'},0)$, which we do by induction on $i$.

\paragraph{Induction on $i$.}
We will show that given $(S^i, 0)$ for any $i \leq D'$, $\mathcal{M}$ must return some constant less than $1 - \delta$. Observe that for $i = 0$, this holds by Lemma~\ref{lem:underone}. For any $i \leq D'$, we know by Lemma~\ref{lem:zeroopt} that $0$ is the optimal constant to $S^i$. We have shown above that if $\mathcal{M}$ must return a constant less than $1 - \delta$ given $(S^i,0)$, it must return a constant less than $1 - \delta$ given $(S^{i+1},0)$. Thus, the claim holds for all $i \leq D'$, and therefore $\mathcal{M}$ must return a constant less than 1 given instance $S^{D'} = S(n, k+1, t, D')$ and advice $0$, thus completing the proof.
\end{proof}

Finally, we want to bound the approximation of strategyproof mechanisms with a function that approaches as $n$ grows large $1 + 4/\gamma$. If we fix any (rational) $\gamma \in (0, 2]$ and $\beta < 1+4/\gamma$, define $r(n, D) := \frac{D-1}{D} \cdot \frac{4 + \gamma - \frac{9}{n}}{\gamma + \frac{4}{n}}$. We use the following two lemmas, the first of which allows us to consider $r(n, D)$ arbitrarily close to $1 + 4/\gamma$.

\begin{lemma}\label{lem:robustfunction}
For any rational $\gamma \in (0,2]$ and $\varepsilon > 0$, there exist $n, k \in \mathbb{N}$ such that $k = n\cdot\frac{\gamma}{\gamma+2}$, $n > k+1$ for which $r(n, D) > 1 + 4/\gamma - \varepsilon$ for all $D \geq \frac{2}{\varepsilon}+\frac{8}{\gamma \varepsilon}$. 
\end{lemma}
\begin{proof}

Observe that for any $n, D> 0$ we have that $r(n,D) < 1+4/\gamma$. 
    First, since $\lim_{n \to \infty}\frac{4 + \gamma -9/n}{\gamma + 4/n} = 1 + 4/\gamma$ and this is continuous in $n$, we can find $n'$ such that $\frac{4 + \gamma - 9/n'}{\gamma + 4/n'} \geq 1 + 4/\gamma - \varepsilon/2$. Then taking $D \geq \frac{2}{\varepsilon}(1 + 4/\gamma - \varepsilon/2)$, we can see that 
    \[r(n', D) = \left(1 - \frac{1}{D}\right)\cdot \frac{4 + \gamma - 9/n'}{\gamma + 4/n'} \geq \frac{1 + 4/\gamma - \varepsilon/2 - \varepsilon/2}{1 + 4/\gamma - \varepsilon/2} \cdot (1 + 4/\gamma - \varepsilon/2) = 1 + 4/\gamma - \varepsilon.\]
  
  Fix any $n', D$ that satisfy this. Then we can find $n, k \in \mathbb{N}$ such that $n \geq n'$, $k = n\cdot \frac{\gamma}{\gamma + 2}$, and $n > k+1$, and $r(n, D) > \beta$ must also hold because $r(n, D)$ is increasing in $n$. Concretely, first we let $\frac{\gamma}{\gamma + 2} = \frac{p}{q}$ for some $p, q \in \mathbb{N}$, which must exist because $\gamma$ is rational, and note that $q \geq p+1$. For any $\lambda \in \mathbb{N}$ such that $2\lambda q \geq n'$, we can set $n = 2\lambda q$ and $k = n\cdot \frac{\gamma}{\gamma+2} = 2\gamma p$, which satisfies $n \geq 2\gamma (p+1) > k + 1$.
\end{proof}
We can also bound the ratio achieved by certain outputs on instances of form $\{Z_t(D,0)^{k} \uplus\{Z_t(D,D)^{n-k}\}$.

\begin{restatable}{rLem}{lemBoundByFunc}\label{lem:boundbyfunc}
     For fixed number of agents $n > 0$ and any $a < 1$, $D > 1$, for all $D' \geq D$, if we define instance $S = \{Z_t(D',0)^{k} \uplus\{Z_t(D',D')^{n-k}\}$, then $\frac{R(a, S)}{R(D',S)} \geq r(n,D)$.
\end{restatable}

The proof of this we defer to Appendix~\ref{appendix:constantLB}. Note that any $D'$ that satisfies Lemma~\ref{lem:largeinst} also satisfies this result. 

We can now complete the proof of the main result. 

\begin{proof}[Proof of Theorem~\ref{thm:rationalconstantlb}]
    
Taking $D = \frac{2}{\varepsilon} + \frac{8}{\gamma \varepsilon}$, we know by Lemma~\ref{lem:robustfunction} that there exist $n, k \in \mathbb{N}$ such that $k = n\cdot \frac{\gamma}{\gamma + 2}, n > k+1$, and $r(n, D) > 1 + 4/\gamma - \varepsilon$. Observe that $T > D$ by definition of $T$. Letting $t = \lceil n(\gamma + 2)\rceil$, we know by Lemma~\ref{lem:largeinst} that $\mathcal{M}$ must return some constant less than 1 given instance and correct advice $(S(n,k+1,t,D'),0)$ for any $D' \geq D$, so we consider instance $S(n, k+1, t, T)$. Let us assign indices to the agents of $S^0 = S(n, k+1, t, T)$:
    \[S_1^0 = \ldots = S_{k+1}^0 = Z_t(0,0) \text{ and }S_{k+2}^0 = \ldots = S_n^0 = Z_t(0,T).\]

We now construct a series of instances for which we will prove that $\mathcal{M}$ must return a constant less than 1 in order to satisfy strategyproofness. We define agent types $S_j^i$, for $i \in [0, n]$ and $j\in[n]$:
\begin{align*}
i \leq k+1: S_j^i = \begin{cases}
    Z_t(T,0) & j \leq i \\
    Z_t(0,0) & i < j \leq k+1 \\
    Z_t(0,T) & k +1< j \leq n
    \end{cases},
i > k+1: S_j^i = \begin{cases}
        Z_t(T,0) & j \leq k+1 \\
        Z_t(T,T) & k+1 < j \leq i \\
        Z_t(0,T) & i < j \leq n
    \end{cases}.
\end{align*}
Let each instance $S^i = \biguplus_{l = 1}^nS_l^i$, which is consistent with the original definition of $S^0$, and observe that for every $i \in [n]$, $S^{i-1}$ and $S^i$ that only differ by the ith agent's type. If we let $S_{-j}^i = \biguplus_{l \in [n]\backslash j}S_l^i$, we can see that $S_{-i}^{i-1} = S_{-i}^i$. 

Our induction hypothesis is that, given $(S^i, 0)$, mechanism $\mathcal{M}$ must return a constant less than 1, which we know holds for $i = 0$. Now assume that this holds for $i - 1$ and $\mathcal{M}(S^{i-1}, 0) = a$ for some $a < 1$. We first consider when $1 \leq i \leq k + 1$. Recall that 
$S_i^{i-1} = Z_t(0,0)$ and $S_i^i = Z_t(T,0)$.
Both of these have unique personal risk minimizers $0$. Assume for contradiction that $\mathcal{M}(S^i, 0) = b$ for some $b \geq 1$. Then since $R_i(a, S_i^i) < R_i(b, S_i^i)$ by Observation~\ref{obs:singlepeak} for $0 < a < b$, using $S^{i-1} = S_i^{i-1} \uplus S_{-i}^i$ and $S^i = S_i^i \uplus S_{-i}^i$, we get that 
\[R_i(\mathcal{M}(S_i^{i-1} \uplus S_{-i}^i, 0), S_i^i) < R_i(\mathcal{M}(S_i^i \uplus S_{-i}^i, 0), S_i^i).\]
This means by fixing the types of all other agents to $S_{-i}^i$, an agent with type $S_i^i$ can misreport their type as $S_i^{i-1}$ and decrease their personal risk, which violates strategyproofness. Then $\mathcal{M}$ cannot return $b \geq 1$ given instance $(S^i, 0)$. Thus, the claim holds for $i = k + 1$. 

We can use a similar argument for all $k+1 < i \leq n$.
Recall that $S_i^{i-1} = Z_t(0,T)$ and $S_i^i = Z_t(T,T)$. Both of these have unique personal risk minimizer $T$. Assume for contradiction that $\mathcal{M}(S^i, 0) = b$ for some $b \geq 1$. Then, since $R_i(a, S_i^i) > R_i(b, S_i^i)$ by Observation~\ref{obs:singlepeak} and $T \geq b > a$, as $\mathcal{M}$ cannot output a constant above $T$,
we get that 
\[R_i(\mathcal{M}(S_i^{i-1} \uplus S_{-i}^i, 0), S_i^i) > R_i(\mathcal{M}(S_i^i \uplus S_{-i}^i, 0), S_i^i).\]
Then, if the types of all other agents is $S_{-i}^i$, an agent with type $S_{i-1}^i$ can misreport their type as $S_i^i$ and decrease their personal risk, again violating strategyproofness. Then, $\mathcal{M}$ cannot return $b$ having seen instance $(S^i, 0)$. 

Thus, $\mathcal{M}$ must return some $a < 1$ given $(S^n,0)$, with which we can now bound the robustness of $\mathcal{M}$. Recall that $S^n$ is defined by agent types
$S_1^n = \dots = S_{k+1}^n = Z_t(T,0)$ and $S_{k+2}^n = \dots = S_n^n = Z_t(T,T)$.
We know that $\mathcal{M}$ returns $a$ for some $a < 1$, so by Lemma~\ref{lem:boundbyfunc} we have that $\frac{R(a, S^n)}{R(T, S^n)} \geq r(n, D) > \beta$ and $\mathcal{M}$ cannot be $\beta$ robust. \qedhere
\end{proof}

\subsection{Homogeneous linear functions}\label{subsec:PFAlinear}

We leverage our results for constant functions to derive analogous results for the class $\mathcal{F}_{L}$  of homogeneous linear functions  over  $\mathbb{R}$, comprised of functions $f(x) = ax$ for some $a \in \mathbb{R}$. First, we provide a mechanism \lmech~for fitting homogeneous linear functions that is built on \mech. We then prove for \lmech \ similar results as for \mech:  it achieves the same consistency-robustness tradeoff,  a similar approximation with respect to advice error, and has analogous (group-)strategyproof properties. Finally, we provide a lower bound on the consistency-robustness tradeoff for  the family of homogeneous linear functions with upper bounded linear coefficient $a$.

\paragraph{The mechanism.} Let us first define for an instance-function pair the following mapping between the constant function setting and the homogeneous linear function setting. Let $x_{i,j} \in \mathbb{Z}$ for simplicity, but note that other cases can be dealt with by assigning weights to points. Given instance $S$, we take each pair $(x_{i,j}, y_{i,j})$ and make $|x_{i,j}|$ copies of the point $\frac{y_{i,j}}{x_{i,j}}$ to get $S_{C}$. Now consider a function $f \in \mathcal{F}_{L}$ such that $f(x) = ax$ for all $x$, which we denote $f_{L, a}$, with corresponding constant function $a$. Using the same parameter $\gamma \in [0,2]$ as in the \mech~mechanism, we design the following mechanism over homogeneous linear functions that leverages this mapping. We will let $\{\cdot\}^z$ for $z \in \mathbb{N}$ denotes $z$ copies of $(\cdot)$.

\vspace{.2cm}
\begin{algorithm}[H]
\setstretch{1.1}
\SetKwInOut{Input}{Input}
\Input{parameter $\gamma \in [0,2]$, dataset $S =\uplus_{i=1}^nS_i$, advice $\Tilde{f} = f_{L, \Tilde{a}} \in \mathcal{F}_L$}
$b \leftarrow \mech_{\gamma,\mathbb{R}}(\uplus_{i=1}^n\uplus_{j =1}^{|S_i|}\{\frac{y_{i,j}}{x_{i,j}}\}^{|x_{i,j}|}, \Tilde{a})$\\
\Return $f_{L, b}$ 
\caption{\lmechname (\lmech)}
\label{mech:lpfa}
\end{algorithm}
\vspace{.2cm}

We  denote the output of this as $\lmech_{\gamma}(S, \Tilde{f})$. The most computationally expensive part is when \mech~is run on $S_C$, resulting in total runtime $O(x_{\max}S\log|x_{\max}S|)$ for $x_{\max} = \max_{i,j}|x_{i,j}|$. Let us define the advice error in this setting for an instance $S$ with advice $\Tilde{f}(x) = f_{L, \Tilde{a}}$ and set of optimal solutions $F_{opt} = \arg\min_{f \in \mathcal{F}_L}R(f, S)$ as $\eta_L = \min_{f_{L, a^*} \in F_{opt}}\frac{|a^* - \Tilde{a}|}{R(f_{L, a^*}, S)}.$ Then we can show that \lmech~achieves the same guarantees as \mech~ by leveraging the fact that the risk in the homogeneous linear setting is a constant factor of the risk in the constant setting. 

\begin{theorem}\label{thm:LPFAguarantees}
    For any $\gamma \in (0, 2]$ and instance $S$,  $\lmech_\gamma$ is $1 + \gamma$ consistent and  $1+4/\gamma$ robust; it
  achieves approximation $\min\{1+4/\gamma, 1 + \gamma + \eta_L\cdot |S|/|S_C|\}$, and
if $\mech_{\gamma, \mathbb{R}}$ is (group)strategyproof on $S_C$, then $\lmech_\gamma$ is (group)strategyproof on $S.$ 
\end{theorem}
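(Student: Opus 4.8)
The plan is to reduce every claim to the corresponding guarantee for $\mech_{\gamma,\mathbb{R}}$ on the transformed instance $S_C$, exploiting a single scaling identity between the two settings. First I would record that the homogeneous-linear risk is a fixed positive multiple of the constant risk on $S_C$: since $|a x_{i,j} - y_{i,j}| = |x_{i,j}|\cdot|a - y_{i,j}/x_{i,j}|$ and $S_C$ contains exactly $|x_{i,j}|$ copies of $y_{i,j}/x_{i,j}$, summing gives $R(f_{L,a},S) = \frac{|S_C|}{|S|}R(a,S_C)$ for every $a\in\mathbb{R}$. The same computation restricted to agent $i$'s points yields the per-agent identity $R_i(f_{L,a},S_i) = \frac{|S_{C,i}|}{|S_i|}R_i(a,S_{C,i})$, where $S_{C,i}$ is the portion of $S_C$ generated by agent $i$ and $|S_{C,i}|=\sum_j|x_{i,j}|$. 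These two identities are the only nontrivial ingredients.

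For consistency and robustness, note that the factor $|S_C|/|S|$ is independent of $a$, so it cancels in any risk ratio: writing $b=\mech_{\gamma,\mathbb{R}}(S_C,\Tilde{a})$ and $a^*=\arg\min_a R(a,S_C)$ (which, by the identity, is also the coefficient of the optimal linear function $f^*=f_{L,a^*}$), we get $\frac{R(f_{L,b},S)}{R(f^*,S)} = \frac{R(b,S_C)}{R(a^*,S_C)}$. Consistency then follows immediately from Lemma~\ref{lem:PFAconsistency}, since correct linear advice $\Tilde{f} = f^*$ corresponds exactly to the correct constant advice $\Tilde{a} = a^*$ for $S_C$; robustness follows identically from Lemma~\ref{lem:PFArobust} for arbitrary $\Tilde{a}$. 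For the error-tolerant bound I would apply Theorem~\ref{thm:errortolerant} to $\mech_{\gamma,\mathbb{R}}$ on $S_C$, obtaining $\min\{1+4/\gamma,\,1+\gamma+\eta_C\}$ where $\eta_C=\min_{a^*}|a^*-\Tilde{a}|/R(a^*,S_C)$ is the constant-setting advice error on $S_C$; converting $\eta_C$ to $\eta_L$ through the denominator scaling $R(a^*,S_C)=\frac{|S|}{|S_C|}R(f_{L,a^*},S)$ introduces the dataset-size ratio and yields the stated $\min$.

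The strategyproofness claim is the part requiring the most care, and I expect it to be the main obstacle. The key structural facts are: (i) the output $f_{L,b}$ of $\lmech_\gamma$ is determined entirely by $b=\mech_{\gamma,\mathbb{R}}(S_C,\Tilde{a})$; (ii) by the per-agent identity, agent $i$'s ranking of two outputs $f_{L,b}$ and $f_{L,b'}$ coincides with their ranking of the constants $b$ and $b'$ on $S_{C,i}$, because the multiplier $|S_{C,i}|/|S_i|$ is a fixed positive constant that preserves both weak and strict inequalities; and (iii) any linear-setting misreport $\{\bar y_{i,j}\}_j$ induces a constant-setting report for $S_{C,i}$ of a restricted form, namely $|x_{i,j}|$ copies of $\bar y_{i,j}/x_{i,j}$ per index $j$, where the group sizes $|x_{i,j}|$ are fixed because the $x_{i,j}$ are public. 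Consequently the set of deviations available to agent $i$ (or to a coalition) in the linear setting embeds into the set of deviations available on $S_C$ in the constant setting. I would then argue that, because $\mech_{\gamma,\mathbb{R}}$ is (group-)strategyproof on $S_C$ --- no (coalition of) agent(s) can profitably deviate over all constant-setting reports --- no profitable deviation can exist over the restricted image either, so $\lmech_\gamma$ is (group-)strategyproof on $S$. For the group case the per-agent identity transfers the ``all weakly better, at least one strictly better'' conditions verbatim, and I would note that the hypothesis ``$\mech_{\gamma,\mathbb{R}}$ group-strategyproof on $S_C$'' is precisely the requirement (via Lemma~\ref{lem:PFASP}) that each $|S_{C,i}|=\sum_j|x_{i,j}|$ be odd. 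The delicate points are verifying that restricting the manipulation space cannot create new profitable deviations and that the per-agent scaling never reverses a preference, both of which follow from positivity of the multipliers.
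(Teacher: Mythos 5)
Your proposal follows essentially the same route as the paper: your scaling identity is exactly Lemma~\ref{lem:consttolin}, your ratio-cancellation and minimizer-correspondence observations are Corollaries~\ref{cor:sameratio} and~\ref{cor:sameminimizer}, and you reduce consistency, robustness, and (group-)strategyproofness to the guarantees of $\mech_{\gamma,\mathbb{R}}$ on $S_C$ just as the paper does in Lemmas~\ref{lem:consistrobustlin}, \ref{lem:errorlin}, and~\ref{lem:strategyprooflin}. Your strategyproofness argument (deviations in the linear setting embed into a subset of the constant-setting deviations, and positive per-agent scaling preserves all weak and strict preference comparisons) is in fact spelled out more carefully than the paper's own sketch.

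One step does not go through as written: the error-tolerant bound. Applying Theorem~\ref{thm:errortolerant} on $S_C$ gives $1+\gamma+\eta_C$ with $\eta_C=\min_{a^*}|a^*-\Tilde{a}|/R(a^*,S_C)$, and substituting your own identity $R(a^*,S_C)=\frac{|S|}{|S_C|}R(f_{L,a^*},S)$ (valid termwise since the optimal sets coincide by Corollary~\ref{cor:sameminimizer}) gives $\eta_C=\eta_L\cdot\frac{|S_C|}{|S|}$. So the bound you actually obtain is $1+\gamma+\eta_L\cdot|S_C|/|S|$, which carries the \emph{reciprocal} of the factor appearing in the statement; your claim that the conversion ``yields the stated min'' papers over this inversion. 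You are in good company: the paper's own proof of Lemma~\ref{lem:errorlin} derives the same relation ($\eta_L=\eta\cdot|S|/|S_C|$, i.e.\ $\eta=\eta_L\cdot|S_C|/|S|$) and then asserts the stated bound, so the statement and its proof disagree inside the paper as well --- one of the two ratios is an error, and a careful write-up on your part should either prove the $|S_C|/|S|$ version or explain why the stated version holds, rather than assert the match. A separate, minor overstatement: Lemma~\ref{lem:PFASP} gives oddness of each $|S_{C,i}|=\sum_j|x_{i,j}|$ as a \emph{sufficient} condition for group-strategyproofness of $\mech$ on $S_C$, not a characterization, so the theorem's hypothesis is not ``precisely'' that oddness requirement; this aside is not load-bearing in your argument, since you correctly invoke the hypothesis itself rather than the oddness condition.
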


To prove this, we will use the notation $S_C$ as defined in \lmech~when it is clear which instance $S$ it is derived from. Recall that we define global empirical risk for any instance $S$ and function $f_{L,a}$ as
\[R(f_{L,a}, S) = \frac{1}{|S|}\sum_{(x_{i,j}, y_{i,j}) \in S}|f_{L,a}(x_{i,j}) - y_{i,j}| = \frac{1}{|S|}\sum_{(x_{i,j}, y_{i,j})\in S}|ax_{i,j} - y_{i,j}|.\]
We can show the following relationship between $R(f_{L,a}, S)$ and $R(f_a, S_C)$.

\begin{restatable}{rLem}{lemLinear}\label{lem:consttolin}
    Consider instance $S$ and function $f_{L,a} \in \mathcal{F}_{L}$. Then $R(f_a, S_C) = \frac{|S|}{|S_C|}R(f_{L,a}, S)$.
    Furthermore, for every $i \in [n]$, $R_i(f_a, S_{C,i}) = \frac{|S_i|}{|S_{C,i}|}R_i(f_{L,a}, S)$.
\end{restatable}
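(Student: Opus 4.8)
The plan is to prove both identities by directly expanding the risk on $S_C$ and exploiting a single algebraic cancellation. By construction, $S_C$ contains, for each data point $(x_{i,j}, y_{i,j}) \in S$, exactly $|x_{i,j}|$ copies of the projected value $y_{i,j}/x_{i,j}$. So I would first write out the empirical risk of the constant function $f_a$ on $S_C$ and group the copies belonging to each original point, giving
\[
R(f_a, S_C) = \frac{1}{|S_C|}\sum_{(x_{i,j},y_{i,j}) \in S} |x_{i,j}| \cdot \left| a - \frac{y_{i,j}}{x_{i,j}}\right|,
\]
where the factor $|x_{i,j}|$ accounts for the multiplicity of each projected value in $S_C$.

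The key step is the cancellation $|x_{i,j}| \cdot |a - y_{i,j}/x_{i,j}| = |a x_{i,j} - y_{i,j}|$, which follows by pulling $|x_{i,j}|$ inside the absolute value via $|x_{i,j}|\cdot|z| = |x_{i,j} z|$. Substituting this identity term by term yields $R(f_a, S_C) = \frac{1}{|S_C|}\sum_{(x_{i,j},y_{i,j}) \in S} |a x_{i,j} - y_{i,j}|$, and the remaining summation is exactly $|S| \cdot R(f_{L,a}, S)$ by the definition of the homogeneous-linear risk recalled immediately before the lemma. Dividing by $|S_C|$ then gives the first claim $R(f_a, S_C) = \frac{|S|}{|S_C|}\, R(f_{L,a}, S)$. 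For the per-agent identity, I would run the identical argument with all sums restricted to agent $i$'s points: $S_{C,i}$ consists of $|x_{i,j}|$ copies of $y_{i,j}/x_{i,j}$ over $j \in [|S_i|]$, so the same cancellation produces $R_i(f_a, S_{C,i}) = \frac{1}{|S_{C,i}|}\sum_{j} |a x_{i,j} - y_{i,j}| = \frac{|S_i|}{|S_{C,i}|}\, R_i(f_{L,a}, S)$.

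The only genuine subtlety, and the one step I would verify with care, is the treatment of points with $x_{i,j}=0$: such a point contributes zero copies to $S_C$ (since $|x_{i,j}|=0$) yet still contributes the term $|y_{i,j}|$ to $R(f_{L,a}, S)$, which would break the exact equality. I would resolve this either by assuming $x_{i,j}\neq 0$ for all points (consistent with the ``for simplicity'' integer assumption in the construction, since a point at $x=0$ carries no information about the slope $a$) or by observing that all such points contribute only an $a$-independent additive constant that can be isolated from both risks. Beyond this edge case, the lemma is a routine unpacking of definitions, and I do not anticipate a substantive obstacle.
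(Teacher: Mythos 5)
Your proposal is correct and follows essentially the same argument as the paper: expand the risk on $S_C$ by grouping the $|x_{i,j}|$ copies of $y_{i,j}/x_{i,j}$, use the cancellation $|x_{i,j}|\cdot\left|a - \frac{y_{i,j}}{x_{i,j}}\right| = |ax_{i,j} - y_{i,j}|$, and identify the resulting sum with $|S|\cdot R(f_{L,a},S)$, with the per-agent claim following by restricting to agent $i$'s points. Your explicit handling of the $x_{i,j}=0$ edge case is a point of care the paper's proof passes over silently, but it does not change the route of the argument.
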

\begin{proof} The result follows immediately from this progression of calculations.
    \begin{align*}
        R(f_a, S_C) = & \frac{1}{|S_C|}\sum_{(x'_{i,j}, y'_{i,j}) \in S_C}|f_a(x'_{i,j}) - y'_{i,j}| \\
        = & \frac{1}{|S_C|}\sum_{i=1}^n\sum_{j=1}^{|S_i|}|x_{i,j}||a - \frac{y_{i,j}}{x_{i,j}}| \\
        = & \frac{1}{|S_C|}\sum_{i=1}^n\sum_{j=1}^{|S_i|}|ax_{i,j} - y_{i,j}| \\
        = & \frac{1}{|S_C|}\sum_{(x_{i,j}, y_{i,j}) \in S}|f_{L,a}(x_{i,j}) - y_{i,j}| = \frac{|S|}{|S_C|}R(f_{L,a}, S).
    \end{align*}
    The same argument can be made for the personal risk of any agent $i$.
\end{proof}

From this Lemma we can get the two following useful corollaries.

\begin{corollary}\label{cor:sameratio}
    For any instance $S$, for any $a, b \in \mathbb{R}$, the following holds:
    \[\frac{R(f_{L,a}, S)}{R(f_{L,b}, S)} = \frac{R(f_a, S_C)}{R(f_b, S_C)}.\]
\end{corollary}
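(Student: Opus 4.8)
The plan is to apply Lemma~\ref{lem:consttolin} directly to both the numerator and the denominator and observe that the instance-dependent scaling factor cancels. Concretely, Lemma~\ref{lem:consttolin} gives $R(f_a, S_C) = \frac{|S|}{|S_C|} R(f_{L,a}, S)$, and applying the same lemma with $b$ in place of $a$ gives $R(f_b, S_C) = \frac{|S|}{|S_C|} R(f_{L,b}, S)$. Both identities share the same multiplicative constant $\frac{|S|}{|S_C|}$, which depends only on the instance $S$ and its mapped version $S_C$ and not on the chosen function. Taking the ratio of the two identities therefore yields
\[
\frac{R(f_a, S_C)}{R(f_b, S_C)} = \frac{\frac{|S|}{|S_C|} R(f_{L,a}, S)}{\frac{|S|}{|S_C|} R(f_{L,b}, S)} = \frac{R(f_{L,a}, S)}{R(f_{L,b}, S)},
\]
which is precisely the claimed equality.

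The only point worth a moment of care is the well-definedness of the ratios: the statement implicitly presumes $R(f_{L,b}, S) \neq 0$, equivalently $R(f_b, S_C) \neq 0$. Since $\frac{|S|}{|S_C|} > 0$ for any nonempty instance, the identity of Lemma~\ref{lem:consttolin} guarantees that $R(f_b, S_C) = 0$ if and only if $R(f_{L,b}, S) = 0$, so the two sides are simultaneously defined and the cancellation is legitimate. I expect no genuine obstacle here; the corollary is a one-line consequence of the linear scaling already established in Lemma~\ref{lem:consttolin}, and all the substantive work has been carried out in proving that lemma.
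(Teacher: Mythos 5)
Your proof is correct and matches the paper's approach exactly: the paper states Corollary~\ref{cor:sameratio} as an immediate consequence of Lemma~\ref{lem:consttolin}, with the common factor $\frac{|S|}{|S_C|}$ cancelling in the ratio, which is precisely your argument. Your additional remark on the simultaneous (non)vanishing of the two denominators is a small but valid extra care that the paper leaves implicit.
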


\begin{corollary} \label{cor:sameminimizer}
    For any instance $S$, let $\mathcal{F}_{L,opt}$ be the set of risk minimizing homogeneous linear functions for $S$ and $\mathcal{F}_{C, opt}$ the set of risk minimizing constant functions for $S_C$. Then $f_{L,a} \in \mathcal{F}_{L, opt}$ if and only if $f_a \in \mathcal{F}_{C,opt}$.
\end{corollary}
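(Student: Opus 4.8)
The plan is to derive Corollary~\ref{cor:sameminimizer} directly from Lemma~\ref{lem:consttolin}, which is the substantive result; the corollary itself is just the observation that rescaling an objective by a fixed positive constant does not change its set of minimizers. First I would record that Lemma~\ref{lem:consttolin} gives, for \emph{every} $a \in \mathbb{R}$, the exact identity $R(f_a, S_C) = \frac{|S|}{|S_C|} R(f_{L,a}, S)$, and crucially that the factor $c := |S|/|S_C|$ is a strictly positive constant not depending on $a$. Since the constant functions over $S_C$ and the homogeneous linear functions over $S$ are parametrized by the same real parameter, minimizing $R(f_a, S_C)$ over $a$ is exactly the same optimization problem, up to scaling the objective by $c$, as minimizing $R(f_{L,a}, S)$ over $a$.

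Concretely, I would argue the forward direction as follows: $f_{L,a} \in \mathcal{F}_{L,opt}$ means $R(f_{L,a}, S) \leq R(f_{L,b}, S)$ for all $b \in \mathbb{R}$. Multiplying both sides by the positive constant $c$ and applying Lemma~\ref{lem:consttolin} to each side yields $R(f_a, S_C) \leq R(f_b, S_C)$ for all $b$, which is precisely the statement $f_a \in \mathcal{F}_{C,opt}$. The reverse implication follows by reading the same chain of inequalities backwards, again using that multiplication by $c > 0$ preserves the inequality in both directions; no additional idea is needed.

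There is essentially no technical obstacle, since the real work was already carried out in Lemma~\ref{lem:consttolin}. The one point I would be careful about is to route the argument through the equality in Lemma~\ref{lem:consttolin} rather than through Corollary~\ref{cor:sameratio}: the latter expresses a \emph{ratio} of risks and would force one to worry about the degenerate case in which some $R(f_{L,b}, S)$ equals zero, making the ratio undefined. Working from the equality sidesteps this entirely and lets the zero-risk case be handled uniformly with all others, so I would state the proof in that form.
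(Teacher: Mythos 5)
Your proposal is correct and matches the paper's intent exactly: the paper states this corollary without proof as an immediate consequence of Lemma~\ref{lem:consttolin}, and the argument it implicitly relies on is precisely yours, namely that the risks differ by the positive constant factor $|S|/|S_C|$ independent of $a$, so the sets of minimizers coincide. Your remark about routing through the equality in Lemma~\ref{lem:consttolin} rather than the ratio in Corollary~\ref{cor:sameratio} (to avoid undefined ratios when the optimal risk is zero) is a sensible refinement, not a deviation.
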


We can now prove the consistency and robustness component of Theorem~\ref{thm:LPFAguarantees}.
\begin{lemma}\label{lem:consistrobustlin}
    For any $\gamma \in (0, 2]$, $\lmech_{\gamma}$ achieves $1 + \gamma$ consistency and $1+4/\gamma$ robustness.
\end{lemma}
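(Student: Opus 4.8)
The plan is to reduce the consistency and robustness of $\lmech_\gamma$ entirely to the already-established guarantees of $\mech_{\gamma,\mathbb{R}}$ on the transformed instance $S_C$, using the risk-correspondence established in Lemma~\ref{lem:consttolin} and its corollaries. The key observation is that $\lmech_\gamma$ on $S$ is \emph{defined} to return $f_{L,b}$ where $b = \mech_{\gamma,\mathbb{R}}(S_C, \tilde{a})$, so the approximation ratio of $\lmech_\gamma$ in the linear setting is exactly the approximation ratio of $\mech_{\gamma,\mathbb{R}}$ on $S_C$ in the constant setting.

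Concretely, first I would fix $\gamma \in (0,2]$ and an instance $S$ with advice $\tilde{f} = f_{L,\tilde{a}}$, and let $b = \mech_{\gamma,\mathbb{R}}(S_C, \tilde{a})$ so that $\lmech_\gamma(S,\tilde{f}) = f_{L,b}$. Let $f_{L,a^*}$ be an optimal homogeneous linear function for $S$. By Corollary~\ref{cor:sameminimizer}, the constant $f_{a^*}$ is then an optimal constant function for $S_C$, and moreover $\tilde{a}$ is exactly the advice passed to $\mech$. For \textbf{robustness}, I would write
\[
\frac{R(f_{L,b}, S)}{R(f_{L,a^*}, S)} = \frac{R(f_b, S_C)}{R(f_{a^*}, S_C)},
\]
where the equality is Corollary~\ref{cor:sameratio}. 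Since $f_{a^*}$ is optimal for $S_C$ and $f_b = \mech_{\gamma,\mathbb{R}}(S_C,\tilde{a})$, the right-hand ratio is at most $1 + 4/\gamma$ by the robustness guarantee of Theorem~\ref{thm:robustconsist} (equivalently Lemma~\ref{lem:PFArobust}) applied to $\mech$ on instance $S_C$. This gives $1 + 4/\gamma$ robustness for $\lmech_\gamma$.

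For \textbf{consistency}, the consistency of $\lmech_\gamma$ requires the advice to be an optimal \emph{linear} function, i.e. $\tilde{f} = f_{L,a^*}$ with $a^* = \tilde{a}$ optimal for $S$. By Corollary~\ref{cor:sameminimizer}, $f_{\tilde{a}} = f_{a^*}$ is then optimal for $S_C$, so passing $\tilde{a}$ to $\mech_{\gamma,\mathbb{R}}(S_C,\cdot)$ is feeding it correct advice in the constant setting. Applying the $1+\gamma$ consistency of $\mech$ (Lemma~\ref{lem:PFAconsistency}) on $S_C$ and again invoking Corollary~\ref{cor:sameratio} to pull the ratio back to the linear setting yields
\[
\frac{R(f_{L,b}, S)}{R(f_{L,a^*}, S)} = \frac{R(f_b, S_C)}{R(f_{a^*}, S_C)} \leq 1 + \gamma,
\]
establishing $1+\gamma$ consistency.

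I do not expect any serious obstacle here: the entire argument is a translation through the ratio-preserving and minimizer-preserving corollaries, with the heavy lifting already done in the constant-function analysis. The only point demanding care is bookkeeping about which advice is ``correct'' in each setting—ensuring that optimality of the linear advice for $S$ translates, via Corollary~\ref{cor:sameminimizer}, to optimality of the corresponding constant advice for $S_C$, so that the consistency guarantee of $\mech$ genuinely applies. Everything else is an immediate consequence of Corollary~\ref{cor:sameratio} sandwiching the linear approximation ratio between identical constant-setting ratios.
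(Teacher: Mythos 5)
Your proposal is correct and matches the paper's own proof essentially step for step: both arguments pass the approximation ratio through Corollary~\ref{cor:sameratio} to the constant-function setting, use Corollary~\ref{cor:sameminimizer} to transfer optimality of $f_{L,a^*}$ for $S$ to optimality of $f_{a^*}$ for $S_C$ (so that correct linear advice becomes correct constant advice), and then invoke the consistency and robustness guarantees of $\mech_{\gamma,\mathbb{R}}$ from Theorem~\ref{thm:robustconsist}. Your explicit bookkeeping that correct advice in the linear setting maps to correct advice for $S_C$ is exactly the point the paper handles implicitly with ``applying the same argument to the case where the advice is correct.''
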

\begin{proof}
    Consider an instance $S$ and advice $\Tilde{f} = f_{L,\Tilde{a}}\in \mathcal{F}_{L}$, and fix $\gamma \in (0, 2]$. Take any risk minimizing homogeneous linear function for $S$ denoted $f_{L,a^*}$. If we let $f_{L.b}$ be the output of~\lmech, then by applying Corollary~\ref{cor:sameratio} we get 
    \[\frac{R(\lmech_\gamma(S,f_{L,\Tilde{a}}),S)}{R(f_{L,a^*},S)} = \frac{R(f_{L,b},S)}{R(f_{L,a^*},S)} = \frac{R(f_b, S_C)}{R(f_{a^*},S_C)} = \frac{R(\mech_{\gamma,\mathbb{R}}(S_C,\Tilde{a}),S_C)}{R(f_{a^*},S_C)}.\]
   By Corollary~\ref{cor:sameminimizer}, we know that $f_{a^*}$ is a risk minimizing constant function for instance $S_C$, then since we know that \mech~ achieves robustness $1+4/\gamma$ from Theorem~\ref{thm:robustconsist}, this ratio is also bounded by $1+4/\gamma$ and \lmech~ also achieves $1+4/\gamma$ robustness.

    By applying the same argument to the case where the advice is correct, we get
     \[\frac{R(\lmech_{\gamma}(S, f_{L,\Tilde{a}}),S)}{R(f_{a^*}, S)} =  \frac{R(\mech_{\gamma, {\mathbb{R}}}(S_C, f_{a^*}),S_C)}{R(f_{a^*}, S_C)} \leq 1+\gamma.\]
     Thus $\lmech_{\gamma}$ also achieves the same consistency as $\mech_{\gamma, {\mathbb{R}}}$.
\end{proof}

We can also extend our approximation results for $\mech_{\gamma, \mathbb{R}}$ in terms of error to $\lmech_{\gamma}$. 

\begin{lemma}\label{lem:errorlin}
    For $\gamma \in (0, 2]$ and instance $S$, mechanism $\lmech_{\gamma}$ achieves approximation $\min\{ 1+4/\gamma, 1 + \gamma + \eta_l \frac{|S|}{|S_C|}\}$.
\end{lemma}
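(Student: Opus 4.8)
The plan is to reduce entirely to the constant-function setting via the same instance map $S \mapsto S_C$ used in the proof of Lemma~\ref{lem:consistrobustlin}, and then invoke the error-tolerant guarantee for \mech~(Theorem~\ref{thm:errortolerant}) on $S_C$. The $1 + 4/\gamma$ branch of the minimum is already established by the robustness half of Lemma~\ref{lem:consistrobustlin}, so it suffices to show that $\lmech_\gamma$ achieves approximation at most $1 + \gamma + \eta_L \cdot |S|/|S_C|$. The whole point is that $\lmech_\gamma(S,\tilde{f})$ is defined to return $f_{L,b}$ where $b = \mech_{\gamma,\mathbb{R}}(S_C,\tilde{a})$, so the behavior of \lmech~on $S$ is literally the behavior of \mech~on $S_C$ read through the mapping.

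The key steps, in order, are as follows. First, fix a risk-minimizing $f_{L,a^*}$ for $S$ and, exactly as in Lemma~\ref{lem:consistrobustlin}, use Corollary~\ref{cor:sameratio} to rewrite the approximation ratio of \lmech~on $S$ as $R(f_b,S_C)/R(f_{a^*},S_C)$, and use Corollary~\ref{cor:sameminimizer} to conclude that $f_{a^*}$ is a risk minimizer for the constant instance $S_C$. Hence this quantity is exactly the approximation ratio of $\mech_{\gamma,\mathbb{R}}$ on $S_C$ with advice $\tilde{a}$, which by Theorem~\ref{thm:errortolerant} is bounded by $1 + \gamma + \eta(S_C,\tilde{a})$, where $\eta(S_C,\tilde{a}) = \min_{a^*}\frac{|a^* - \tilde{a}|}{R(f_{a^*},S_C)}$ is the constant-setting advice error on $S_C$. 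The final step is to translate $\eta(S_C,\tilde{a})$ back into the linear-setting error $\eta_L$: by Corollary~\ref{cor:sameminimizer} the minimizing set of $a^*$ is identical in both settings, and by Lemma~\ref{lem:consttolin} the optimal risks differ by the fixed factor $R(f_{a^*},S_C) = \tfrac{|S|}{|S_C|}R(f_{L,a^*},S)$, while the numerator $|a^* - \tilde{a}|$ is unchanged since the map preserves the scalar $a$. Substituting this identity into $\eta(S_C,\tilde{a})$ rescales the error by the instance-size ratio, yielding the term $\eta_L \cdot |S|/|S_C|$ and hence the claimed bound.

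The main obstacle is precisely this last conversion of the advice-error term between the two settings, and in particular getting the direction of the size-ratio factor right. Because $\eta$ is a ratio of a distance to an optimal risk and only the denominator rescales under the map while the numerator is invariant, one must carefully track whether the factor is $|S|/|S_C|$ or its reciprocal; this hinges on verifying that the scaling in Lemma~\ref{lem:consttolin} is applied uniformly over the shared minimizer set guaranteed by Corollary~\ref{cor:sameminimizer}, so that a single factor can be pulled out of the minimization. Everything else is a direct substitution into the already-proved constant-function results, so no new structural argument is needed beyond this bookkeeping.
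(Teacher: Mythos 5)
Your route is the same as the paper's: identify the approximation ratio of $\lmech_\gamma$ on $S$ with that of $\mech_{\gamma,\mathbb{R}}$ on $S_C$ via Corollary~\ref{cor:sameratio} and Corollary~\ref{cor:sameminimizer}, invoke Theorem~\ref{thm:errortolerant} on $S_C$ to get the bound $1+\gamma+\eta(S_C,\tilde{a})$ (with the $1+4/\gamma$ branch coming from robustness), and then convert the constant-setting error $\eta(S_C,\tilde{a})$ into the linear-setting error $\eta_L$. Up to the final step, this is exactly the paper's proof.

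The gap is in the one step you yourself flag as the crux, and you resolve it with inverted algebra. Substituting the identity $R(f_{a^*},S_C)=\frac{|S|}{|S_C|}R(f_{L,a^*},S)$ from Lemma~\ref{lem:consttolin} into the definition of the constant-setting error places the factor $\frac{|S|}{|S_C|}$ in the \emph{denominator}, so it exits the minimization as its reciprocal:
\[
\eta(S_C,\tilde{a})\;=\;\min_{a^*}\frac{|a^*-\tilde{a}|}{\tfrac{|S|}{|S_C|}\,R(f_{L,a^*},S)}\;=\;\frac{|S_C|}{|S|}\,\eta_L,
\]
not $\frac{|S|}{|S_C|}\,\eta_L$ as you assert. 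Carried out correctly, your argument proves the bound $1+\gamma+\eta_L\cdot\frac{|S_C|}{|S|}$, which is not the inequality in the statement. To be fair, this inversion is present in the paper itself: its proof records the relation $\eta_L=\eta\cdot\frac{|S|}{|S_C|}$ (equivalent to the display above, and correct) and then declares the proof complete, even though substituting $\eta=\eta_L\cdot\frac{|S_C|}{|S|}$ into $1+\gamma+\eta$ yields the factor $\frac{|S_C|}{|S|}$, not the one in the statement. The stated form is in fact false: take two agents with $S_1=\{(M,0),(M,0)\}$, $S_2=\{(M,M)\}$, advice $f_{L,2}$, and $\gamma=1/2$. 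The unique optimum is $f_{L,0}$ with risk $M/3$, so $\eta_L=6/M$ and the stated bound is $1.5+6/M^2$; but the $\lambda|S_C|=1.8M$ advice copies together with agent $2$'s $M$ projected points outweigh agent $1$'s $2M$ points, so $\lmech_{1/2}$ outputs $f_{L,1}$, whose ratio is $2$ --- exceeding the stated bound for $M\geq 4$, while respecting the corrected bound $1.5+\eta_L\cdot\frac{|S_C|}{|S|}=7.5$. So the factor must be $\frac{|S_C|}{|S|}$ (the average of $|x_{i,j}|$), and the lemma as written needs the ratio flipped. In short: your strategy matches the paper's, but at the decisive step you asserted the direction that matches the (erroneous) statement rather than deriving it; doing the derivation would have surfaced the discrepancy.
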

\begin{proof}
    Fix advice $f_{L,\Tilde{a}}$ and let $f_{L,b}$ be the output of \lmech~given $(S, f_{L,\Tilde{a}})$. We know by definition of $b$ in \lmech~ that $f_b = \mech_{\gamma,\mathbb{R}}(S_C,\Tilde{a})$. If we let $f_{L,a^*}=\arg\min_{f_{L,a^*} \in F_{opt}}\frac{|a^*-\Tilde{a}|}{R(f_{L,a^*},S)}$, we get that by applying Corollary~\ref{cor:sameratio} that
    \[\frac{R(f_{L,b}, S)}{R(f_{L,a^*}, S)} = \frac{R(b, S_C)}{R(f_{a^*}, S_C)}.\]
    Since $b$ is the output of $\mech_{\gamma, \mathbb{R}}(S_C,\Tilde{a})$ by definition of Mechanism~\ref{mech:lpfa}, and since by Corollary~\ref{cor:sameminimizer} we know that $f_{a^*}$ is an optimal constant function for $S_C$, we get that this equals
    \[\frac{R(\mech_{\gamma,\mathbb{R}}(S_C,\Tilde{a}),S_C)}{R(f_{a^*},S_C)} \leq 1+\gamma + \eta,\]
    the inequality coming from Theorem~\ref{thm:errortolerant}. Observing that $\eta_L = \eta\frac{|S|}{|S_C|}$ completes the proof.
\end{proof}

An interpretation of $\frac{|S|}{|S_C|}$ is the inverse of the average over all $|x_{i,j}|$. Thus, the impact of the error decreases as the input points get further from 0 on average. Finally, we can also apply our strategyproof results from $\mech_{\gamma,\mathbb{R}}$ to $\lmech_\gamma$.

\begin{lemma}\label{lem:strategyprooflin}
    For any $\gamma \in (0, 2]$ and instance $S$, if in the corresponding mapped set $S_C$ each $S_{C,i}$ has a unique personal risk minimizer over $\mathbb{R}$ for all $i \in [n]$, then $\lmech_{\gamma}$ is group-strategyproof on $S$.
\end{lemma}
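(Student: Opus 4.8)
The plan is to reduce group-strategyproofness of $\lmech_\gamma$ on $S$ entirely to group-strategyproofness of $\mech_{\gamma,\mathbb{R}}$ on the mapped instance $S_C$, exploiting that $\lmech_\gamma$ does nothing more than run $\mech_{\gamma,\mathbb{R}}$ on $S_C$ (Mechanism~\ref{mech:lpfa}) and then relabel the returned constant $b$ as $f_{L,b}$. The crucial structural fact is Lemma~\ref{lem:consttolin}: for each agent $i$ and every constant $a$, $R_i(f_{L,a},S) = \frac{|S_{C,i}|}{|S_i|}\,R_i(f_a,S_{C,i})$. Since the multiplicities $|x_{i,j}|$ are public, the factor $|S_{C,i}|/|S_i|$ is a fixed positive constant that depends neither on the output function nor on agent $i$'s reported labels. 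Hence, for any two outputs $a, a'$, agent $i$ (weakly, resp.\ strictly) prefers $f_{L,a}$ to $f_{L,a'}$ in the linear setting if and only if it (weakly, resp.\ strictly) prefers $f_a$ to $f_{a'}$ in the constant setting on $S_{C,i}$. So the two settings induce identical preference orderings over outcomes for every agent.

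Next I would make the strategy-space correspondence precise. A coalition $C$ misreporting labels $\bar y_{i,j}$ in the linear setting induces, through the map of \lmech, exactly the constant-setting report $\bar S_{C,i}$ consisting of $|x_{i,j}|$ copies of $\bar y_{i,j}/x_{i,j}$ for each $i \in C$, while agents outside $C$ keep $S_{C,i}$. Because $x_{i,j}$ is public, this deviation changes only the \emph{values} of the points in $S_{C,i}$, not their multiplicities, so it is a legitimate (indeed restricted) label deviation in the constant setting on $S_C$. Writing $a = \mech_{\gamma,\mathbb{R}}(S_C,\tilde a)$ and $\bar a = \mech_{\gamma,\mathbb{R}}(\bar S_C,\tilde a)$, the outputs of $\lmech_\gamma$ on the true and misreported instances are $f_{L,a}$ and $f_{L,\bar a}$. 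By the preference equivalence above, the coalition's gains in the linear setting, measured against true types, coincide in sign with its gains in the constant setting on $S_C$. Therefore, if $\mech_{\gamma,\mathbb{R}}$ is group-strategyproof on $S_C$ --- so that no coalition can all weakly improve with someone strictly improving by switching to $\bar S_{C,i}$ --- then no coalition can profitably deviate in the linear setting either. This establishes the reduction asserted in Theorem~\ref{thm:LPFAguarantees}.

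It remains to show that $\mech_{\gamma,\mathbb{R}}$ is group-strategyproof on $S_C$ under the stated hypothesis. Here I would invoke the group-strategyproofness argument of Lemma~\ref{lem:PFASP}: that proof only uses that each agent's \emph{true} reported data has a unique personal risk minimizer (it derives this uniqueness from $|S_i|$ being odd, but oddness is used nowhere else), allowing the reduction to each agent effectively reporting a single preferred constant equal to its median (Observation~\ref{obs:mechmed}). Since by hypothesis every $S_{C,i}$ has a unique personal risk minimizer over $\mathbb{R}$, the identical argument --- an agent whose unique optimum lies on one side of the output strictly loses from a move to the other side, so it cannot belong to a beneficial coalition, and the tie-breaking-by-largest-median contradiction closes the case --- shows $\mech_{\gamma,\mathbb{R}}$ is group-strategyproof on $S_C$. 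Combined with the reduction, this yields group-strategyproofness of $\lmech_\gamma$ on $S$.

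The step I expect to be most delicate is verifying the strategy-space correspondence: I must confirm that restricting constant-setting deviations to those with fixed multiplicities (inherited from the public $x_{i,j}$) does not weaken the applicability of $\mech$'s group-strategyproofness --- which is fine, since group-strategyproofness against \emph{all} label deviations a fortiori covers this restricted family --- and that the scaling constant $|S_{C,i}|/|S_i|$ is genuinely report-independent, so that strict and weak preference are preserved exactly. The remaining care is purely in matching the uniqueness hypothesis on $S_C$ to the single condition actually consumed by the proof of Lemma~\ref{lem:PFASP}.
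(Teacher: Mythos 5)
Your proposal is correct and follows essentially the same route as the paper, which proves this lemma by exactly the reduction you describe: a profitable coalition against $\lmech_\gamma$ on $S$ would, via the per-agent risk scaling of Lemma~\ref{lem:consttolin}, yield a profitable coalition against $\mech_{\gamma,\mathbb{R}}$ on $S_C$, contradicting its group-strategyproofness. In fact you are more careful than the paper's brief sketch on the one delicate point: the paper invokes group-strategyproofness of $\mech$ without remarking that Lemma~\ref{lem:PFASP} is stated for odd $|S_i|$, whereas you correctly observe that its proof only consumes uniqueness of each agent's personal risk minimizer, which is precisely the hypothesis on $S_{C,i}$ in this lemma.
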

This lemma follows from the fact that each agent $i$ can only control his mapped $S_{C,i}$, and since Lemma~\ref{lem:consttolin} shows that the personal risk any agent $i$ with type $S_i$ incurs from $\lmech$ is a constant factor of the personal risk an agent with type $S_{C,i}$ would incur from $\mech$, the existence of a coalition that can manipulate $\lmech$ to their benefit would imply the existence of a corresponding coalition in $\mech$, which violates the group-strategyproofness of $\mech$.

\begin{proof}[Proof of Theorem~\ref{thm:LPFAguarantees}.] Theorem~\ref{thm:LPFAguarantees} follows immediately from Lemma~\ref{lem:consistrobustlin}, Lemma~\ref{lem:errorlin}, and Lemma~\ref{lem:strategyprooflin}.
\end{proof}

\paragraph{The hardness result.}

Next, we show that for any $T$ sufficiently and arbitrarily large, no strategyproof mechanism for the class of homogeneous linear functions $\mathcal{F}_{L_{\leq T}} = \{f_a(x) = ax : a \leq T\}$ can achieve a better tradeoff than $1 + \gamma$ consistency and $1 + 4/\gamma$ robustness, the proof of which we defer to Appendix~\ref{appendix:linear}.

\begin{restatable}{rThm}{thmLinear}
\label{thm:linear}
Let $\epsilon > 0$. Then, for any rational $\gamma \in (0,2]$ and $T> \frac{2}{\varepsilon} + \frac{8}{\gamma \varepsilon}$, there is no deterministic strategyproof mechanism for the class of homogeneous linear functions $\mathcal{F}_{L_{\leq T}}$ that is $1+\gamma$ consistent and $1 + 4/ \gamma - \epsilon$ robust.
\end{restatable}

Thus, we have provided a lower bound for the consistency-robustness tradeoff of a slightly restricted class of homogeneous linear functions which matches the performance of \lmech.

\section{Strategyproof classification with advice} \label{sec:classification}

We now consider the binary classification problem where agents have a shared input, meaning each agent reports binary labels for the same set of points. The function class in this setting is a set of specific labelings of the shared points. \citet{meir2012algorithms} differentiate between selecting from two labelings and from more than two labelings. They show that deterministic mechanisms selecting from more than two labelings cannot achieve sublinear approximation, and randomized mechanisms cannot achieve better than $3 - 2/n$ approximation. We present results that show it is impossible to leverage correct advice to achieve consistency lower than these bounds while maintaining comparable robustness, in both the deterministic case and the randomized case (Section~\ref{subsec:genclassdet}). For two labelings, they provide a deterministic mechanism that achieves a 3 approximation and several randomized mechanisms that achieve a 2 approximation, both of which they show to be tight. Unlike the general decision setting, in the two labelings setting, we are able to design mechanisms that simultaneously achieve constant consistency and robustness (Section~\ref{subsec:binclass}). We are able to use \mech~ to get a deterministic mechanism with the same $1 + \gamma$ consistency and $1 + 4/\gamma$ robustness tradeoff, and we also provide our \binmech~ randomized mechanism that achieves, for $\gamma \in (0,1]$, $1+\gamma$ consistency and $1 + 1/\gamma$ robustness. All missing proofs are in Appendix~\ref{appendix:classification}.

\subsection{Any number of labelings}\label{subsec:genclassdet}

Let us first formally define the binary classification problem. The dataset $S$ is composed of a shared input $X = \{x_1, \dots, x_{|S_i|}\}$ with private labeling $Y_i\in \{0,1\}^{|S_i|}$ for each agent $i$. When it is clear what $X$ is, we may drop it and use $S$ to represent the $y$ values. The families of functions $\mathcal{C}$ that we now consider are comprised of labelings $c: X \to \{0, 1\}$ that assign specific labels to each $x_j$. The goal of the mechanism designer then is to select a labeling in $\mathcal{C}$. 

\paragraph{Deterministic mechanisms.} Designing deterministic classification mechanisms for arbitrary $\mathcal{C}$ is a very difficult problem. In fact, it was shown by~\citet{meir2012algorithms} that, even when restricting to selecting from sets of only three labelings, no deterministic strategyproof mechanism can achieve an approximation sublinear with respect to the number of agents. We prove that it is impossible to leverage correct advice to improve this approximation while maintaining bounded robustness.

\begin{restatable}{rThm}{thmSublinearConsist}\label{thm:sublinearconsist}
    There exists a function class $\mathcal{C}$ such that $|\mathcal{C}| = 3$ for which no deterministic strategyproof classification mechanism over $\mathcal{C}$ for $n > 10$ agents can simultaneously achieve bounded robustness and $o(n)$ consistency.
\end{restatable}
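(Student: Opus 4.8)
The plan is to construct an explicit class $\mathcal{C}$ of three labelings on a shared input and then derive a contradiction from the assumption that a deterministic strategyproof mechanism $\mathcal{M}$ achieves both $o(n)$ consistency and bounded (say $\beta$) robustness. I would follow the spirit of the impossibility construction of \citet{meir2012algorithms}: pick three labelings $c_1, c_2, c_3$ that are pairwise ``far apart'' on disjoint blocks of the shared points, so that for each pair there is a large set of points on which they disagree. The key feature I want is that by choosing agents' true labelings appropriately, the global risk minimizer can be forced to be, say, $c_1$ with very small optimal risk (only a handful of agents disagree with it), while $c_2$ and $c_3$ each incur risk $\Theta(1)$. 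With correct advice $\tilde f = c_1$, $o(n)$ consistency then forces $\mathcal{M}$ to output $c_1$ (since outputting $c_2$ or $c_3$ would give an $\Omega(n)$ ratio against the tiny optimum), and bounded robustness must hold for every possible advice.

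The heart of the argument is a strategyproofness-driven transition between instances, analogous to the inductive transitions in the proof of Theorem~\ref{thm:rationalconstantlb}. First I would pin down what $\mathcal{M}$ must output on a ``base'' instance with correct advice using consistency. Then I would modify the true labels of a carefully chosen sequence of agents one at a time, keeping the advice fixed at $c_1$, and argue via strategyproofness that the mechanism's output cannot change in a way that would let a deviating agent reduce their personal risk. Concretely, an agent whose true labeling agrees with $c_1$ on the relevant block prefers $c_1$ to $c_2$ (or $c_3$), so if the mechanism ever switched its output toward $c_2$ when that agent changed report, the agent could gain by misreporting; this chains the constraint ``$\mathcal{M}$ outputs $c_1$'' across the whole sequence of instances. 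The goal is to reach a terminal instance where the true optimum is now $c_2$ (or $c_3$) with optimal risk $\Theta(1)$, yet $\mathcal{M}$ is still forced to output $c_1$, on which it incurs risk $\Omega(n)$ times the optimum, contradicting bounded robustness.

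The main obstacle I anticipate is the combinatorial bookkeeping needed to make the three labelings interact correctly: with only three labelings and the $0$-$1$ loss, I need the disagreement blocks arranged so that (i) each single-agent type change both flips the identity of the global optimum in the right way and (ii) leaves each transitioning agent with a strict preference that strategyproofness can exploit. Because classification is discrete, I cannot ``nudge'' the output continuously as in the regression lower bound; instead I must ensure that at every step the only outputs consistent with strategyproofness-plus-the-induction-hypothesis are the ones I want, which requires the pairwise disagreement counts to be balanced so that no intermediate labeling becomes a tempting near-optimal output. I would handle the $n > 10$ side condition precisely here, choosing the block sizes and the number of ``disagreeing'' agents as explicit functions of $n$ so that the optimal risk on the base instance is $\Theta(1/n)$ (forcing the $o(n)$-consistent mechanism's hand) while the terminal instance has $\Theta(1)$ optimal risk and $\Omega(n)$ approximation for the forced output.

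Finally I would assemble these pieces: the consistency bound fixes the base output, the strategyproofness chain (an induction over the sequence of one-agent modifications, each invoking a personal-risk comparison of the form ``agent $i$ strictly prefers the retained output to the alternative'') propagates it, and the robustness bound is violated at the terminus. Since this must hold for \emph{every} bounded $\beta$ and every $o(n)$ consistency guarantee, letting $n$ grow drives the contradiction, yielding the claimed impossibility.
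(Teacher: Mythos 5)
There is a genuine gap, and it sits exactly at the step your argument leans on most heavily: the strategyproofness chain. When you change agent $i$'s true type from $A$ (which prefers $c_1$) to $B$ (which prefers $c_2$) and the mechanism's output switches from $c_1$ to $c_2$, \emph{neither} direction of strategyproofness is violated: an agent with true type $A$ who reports truthfully receives $c_1$, its preferred outcome among the two, and an agent with true type $B$ who reports truthfully receives $c_2$, again its preferred outcome. Misreporting hurts both. Strategyproofness can only forbid the output from moving to $c_2$ at a transition if the \emph{old} type strictly prefers $c_2$ to $c_1$, or the \emph{new} type strictly prefers $c_1$ to $c_2$ --- but the transitions you need (agents acquiring preference for $c_2$, so that the global optimum eventually becomes $c_2$) are precisely the transitions where neither condition holds, so SP gives no constraint at all. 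Consistency cannot plug the hole either: with advice fixed at $c_1$, consistency binds only while $c_1$ remains the optimum with the competitors' ratio above the consistency bound, and once enough agents favor $c_2$ that ratio drops to a constant and the mechanism is free to switch. This is the structural reason the regression-style chain of Theorem~\ref{thm:rationalconstantlb} does not transfer: in the continuous setting one can separate the deviator's gain from the forbidden outputs by small perturbations, whereas with three discrete labelings any profile change that moves preference toward $c_2$ simultaneously licenses the output to jump there. The paper instead fixes the advice $\tilde{c}$, observes that the induced no-advice mechanism $\mathcal{M}_{\tilde{c}}$ is strategyproof and (by bounded robustness) must be exact on zero-risk instances, embeds the six preference orders over $\mathcal{C}$ into labelings whose personal risks realize those orders (Table~2), and applies the Gibbard--Satterthwaite theorem (Theorem~\ref{thm:gibbard}) to conclude $\mathcal{M}_{\tilde{c}} \circ g$ is dictatorial; dictatorship is the global structure that your local chain cannot reproduce.

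A second, independent problem is the endgame. Your terminal instance has incorrect advice ($c_1$) and approximation ratio $\Omega(n)$, and you claim this contradicts bounded robustness. It does not: "bounded" here means finite, and a ratio of $\Omega(n)$ for each fixed $n$ is perfectly compatible with a finite robustness guarantee (say $\beta(n) = 10n$); indeed the theorem asserts robustness must be \emph{unbounded}, i.e., infinite. A robustness contradiction would require forcing a suboptimal output on an instance with \emph{zero} optimal risk, which your chain cannot reach for the reason above. The paper's contradiction is instead against consistency: the dictator instance has \emph{correct} advice $\tilde{c}$ (all non-dictator agents have zero risk under $\tilde{c}$), yet the mechanism must output the dictator's favorite $c' \neq \tilde{c}$, giving ratio at least $(n-1)/9$ and hence consistency $\Omega(n)$. (Note also the internal inconsistency in your sketch: under normalized $0$-$1$ loss an instance with optimal risk $\Theta(1)$ cannot exhibit an $\Omega(n)$ ratio, since all risks are at most $1$.) To salvage your outline you would essentially need to re-derive dictatorship-type structure from SP plus exactness on zero-risk profiles, which is what the reduction to Gibbard--Satterthwaite accomplishes in one step.
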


Let us first formally define the voting problem. For some finite set of alternatives (or candidates) $\mathcal{A}$, each of $n$ agents has a strict preference order $\succ_i$ over $\mathcal{A}$, and $\succ$ is the set of all agents' preference orders, also known as a preference profile. If we let $\mathcal{L}(\mathcal{A})$ be the set of all strict orders over $\mathcal{A}$, a voting rule is a function $v: \mathcal{L}(\mathcal{A})^n \to \mathcal{A}$. A voting rule is dictatorial if it always selects the favorite alternative of some agent $i'$. The following is a well-known result from the voting setting.

\begin{theorem}[\citet{gibbard1973manipulation,satterthwaite1975strategy}]\label{thm:gibbard}
    Let $v$ be a deterministic voting rule over alternatives $\mathcal{A}$ such that $|\mathcal{A}|\geq 3$. If $v$ is onto $\mathcal{A}$ and strategyproof, then it is dictatorial.
\end{theorem}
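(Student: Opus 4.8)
The plan is to give a self-contained argument that extracts two structural consequences of strategyproofness and then runs a pivotal-voter argument to locate the dictator. Write $v$ for the rule. The starting observation is a two-sided comparison: if $\succ$ and $\succ'$ differ only in agent $i$'s report and $v(\succ) = a \ne b = v(\succ')$, then strategyproofness applied at $\succ$ forces $a \succ_i b$ (else $i$ gains by reporting $\succ'_i$), while strategyproofness applied at $\succ'$ forces $b \succ'_i a$. First I would use this to establish a monotonicity property: if $v(\succ) = a$ and $\succ^\dagger$ is obtained from $\succ$ by moving $a$ weakly upward in each agent's order while leaving the relative order of the remaining alternatives unchanged, then $v(\succ^\dagger) = a$. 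This follows by changing one agent at a time and ruling out any outcome change via the two-sided comparison.

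Next I would combine monotonicity with ontoness to obtain unanimity (Pareto): if every agent ranks $a$ first, then $v(\succ) = a$. Since $v$ is onto, some profile yields $a$; pushing $a$ to the top for each agent in turn keeps the outcome at $a$ by monotonicity, and the two-sided comparison makes the outcome insensitive to how the alternatives below $a$ are arranged, so the outcome is $a$ whenever $a$ is unanimously top-ranked. A direct consequence, which I will use below, is that an alternative ranked last by every agent is never selected.

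The core is a pivotal-voter argument. Fix an alternative $b$ and consider the profiles $\succ^0, \dots, \succ^n$ in which, in $\succ^k$, agents $1,\dots,k$ rank $b$ first and agents $k{+}1,\dots,n$ rank $b$ last, with the other alternatives held in a fixed order. By the previous step $v(\succ^0) \ne b$ (here $b$ is unanimously last) while $v(\succ^n) = b$ (here $b$ is unanimously first), so there is a pivotal index $n^*$ with $v(\succ^{n^*-1}) = a \ne b$ and $v(\succ^{n^*}) = b$. I would then argue that agent $n^*$ is a dictator. The main steps are (i) that the identity of the pivot $n^*$ is independent of how the other alternatives are arranged, an IIA-type statement derived from monotonicity and the two-sided comparison, and (ii) that for any third alternative $c$ — here $|\mathcal{A}| \ge 3$ is essential — agent $n^*$ can unilaterally force the outcome from $b$ to $c$ by placing $c$ at the top of its own order, which lets one transfer the pivotal power of $n^*$ from the pair $\{a,b\}$ to every pair of alternatives and conclude that $v$ always selects the top choice of $n^*$.

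The step I expect to be the main obstacle is (ii): pinning down that the same agent $n^*$ is pivotal for every alternative and that this power is global, holding at all profiles rather than only along the constructed chain. This requires using a third alternative as a ``pivot'' to move $b$ past $c$ without disturbing the comparison between the other two, which is precisely where the hypothesis $|\mathcal{A}| \ge 3$ cannot be dropped and where the bookkeeping via monotonicity and the two-sided comparison is most delicate. An alternative route would reduce the statement to Arrow's impossibility theorem by building from $v$ a social welfare function that inherits Pareto efficiency and independence of irrelevant alternatives; however, this merely relocates the same combinatorial difficulty into the proof of Arrow's theorem, so I would prefer the direct pivotal argument above.
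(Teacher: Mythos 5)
The paper offers no proof of this statement: it is the classical Gibbard--Satterthwaite theorem, cited from \citet{gibbard1973manipulation} and \citet{satterthwaite1975strategy} and invoked as a black box in the proof of Theorem~\ref{thm:sublinearconsist}. So your attempt can only be judged against the classical result itself. The portions you actually execute are correct and are the standard opening of the direct pivotal-voter proof. The two-sided comparison is exactly right (strategyproofness at $\succ$ forces $a \succ_i b$, at $\succ'$ forces $b \succ'_i a$). Your derivation of monotonicity is sound: a one-agent flip of the outcome from $a$ to $b$ would require $a \succ_i b$ before the change and $b \succ_i^\dagger a$ after, which is impossible when only $a$ moved up. Unanimity from ontoness plus monotonicity also closes, though two of your assertions are glossed: ``the outcome is insensitive to how the alternatives below $a$ are arranged'' and ``an alternative ranked last by every agent is never selected'' are not immediate but each follow from one more one-agent-at-a-time chain with the two-sided comparison (a flip away from the current winner would force an agent to rank the abandoned winner below the new one, contradicting the top or bottom placement). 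These are presentational rather than mathematical omissions.

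The genuine gap is the one you flagged yourself: steps (i) and (ii) --- that the pivotal index $n^*$ is independent of how $\mathcal{A}\setminus\{b\}$ is arranged, and that $n^*$'s pivotal power over $b$ propagates to every pair of alternatives and to \emph{arbitrary} profiles --- are announced as goals but never proven, and they constitute essentially the entire mathematical content of the theorem; everything up to the existence of a pivotal voter is routine. The standard way to close this (Beno\^{i}t, Sen, and Reny give variants) is: first use monotonicity to show that $n^*$ obtains $b$ on any profile where $n^*$ ranks $b$ first, not merely along the constructed chain; then, for an arbitrary alternative $c \neq b$, rerun the bottom-to-top chain for $c$ to produce a pivotal voter $m^*$ for $c$; finally, construct a profile in which $b$, $c$, and a third alternative interact --- this is precisely where $|\mathcal{A}| \geq 3$ enters --- to force $m^* = n^*$, whence $n^*$'s top choice wins everywhere. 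None of this bookkeeping appears in your writeup. The approach is the right one and is known to succeed, so nothing in your plan would fail; but as submitted this is an accurate proof sketch rather than a proof, with the dictatorship conclusion asserted at exactly the point where, by your own admission, the argument is most delicate.
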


To prove Theorem~\ref{thm:sublinearconsist}, we will define a mapping from every deterministic strategyproof classification mechanism over $\mathcal{C}$ to a corresponding deterministic strategyproof voting mechanism over $\mathcal{C}$. Then, imposing this characterization of voting rules restricts the behavior of the classification mechanisms, allowing us to prove the lower bound.

\begin{table}\label{table}
\begin{center}
$\begin{array}{|c|c|c|c|c|}
    \hline
    \succ_i &  g_i(\succ_i) & Err(c_1, g_i(\succ_i)) & Err(c_2, g_i(\succ_i)) & Err(c_3, g_i(\succ_i)) \\
    \hline
    c_1 \succ_i c_2 \succ_i c_3 & \{1,1,1,1,1,1,1,1,1\} & 0 & 6 & 9 \\
    c_1 \succ_i c_3 \succ_i c_2 & \{1,1,1,1,1,1,0,0,0\} & 3 & 9 & 6 \\
    c_2 \succ_i c_1 \succ_i c_3 & \{0,0,0,0,1,1,1,1,1\} & 4 & 2 & 5 \\
    c_2 \succ_i c_3 \succ_i c_1 & \{0,0,0,0,0,0,1,1,1\} & 6 & 0 & 3 \\
    c_3 \succ_i c_1 \succ_i c_2 & \{0,0,1,1,1,1,0,0,0\} & 5 & 7 & 4 \\
    c_3 \succ_i c_2 \succ_i c_1 & \{0,0,0,0,0,0,0,0,0\} & 9 & 3 & 0 \\
    \hline
\end{array}$
\end{center}
\caption{Definition of function $g_i$ for each preference profile and relative number of errors of each mapped preference with respect to labelings in $\mathcal{C}$.}
\end{table}

\begin{proof}[Proof of Theorem~\ref{thm:sublinearconsist}]
    We present a subset of instances on which any strategyproof mechanism must behave a certain way, and by enforcing this behavior on a specially designed instance, we can lower bound consistency. Fixing some $n > 10$ and letting $m = 9$, we introduce the set of labelings we consider:\[c_1=\{1,1,1,1,1,1,1,1,1\}, c_2 = \{0,0,0,0,0,0,1,1,1\}, c_3 = \{0,0,0,0,0,0,0,0,0\},\]
and define $\mathcal{C} = \{c_1, c_2, c_3\}$.
    Let $\mathcal{M}$ be any deterministic strategyproof classification mechanism (with advice) over $\mathcal{C}$ that achieves bounded robustness. Assume advice $\Tilde{c} \in \mathcal{C}$ is given, and denote $\mathcal{M}_{\Tilde{c}}$ as the mechanism (without advice) induced by $\Tilde{c}$. In order for $\mathcal{M}$ to have bounded robustness, $\mathcal{M}_{\Tilde{c}}$ must have bounded approximation for any $\Tilde{c}$.

    The related voting setting is defined with $\mathcal{C}$ as the set of alternatives. In order to design a voting rule that uses $\mathcal{M}_{\Tilde{c}}$ as an intermediate step, we explicitly define a one-to-one mapping from preferences in $\mathcal{L}(\mathcal{C})$ to labelings in $\{0,1\}^m$ in Table~\ref{table}. Note that we use $Err(c, S_i)= \sum_{j = 1}^m \mathbf{1}(c(x_j) \neq y_{i,j})$ to denote the relative error of labeling $c$ with respect to $S_i$. We denote this $g_i$ for each $i \in [n]$ but note that they are all equivalent. Crucially, we ensure that the empirical personal risks an agent with labeling $g_i(\succ_i)$ experiences with respect to each labeling in $\mathcal{C}$ to agree with $\succ_i$, so for any $c, c' \in \mathcal{C}$ it must hold that 
    \begin{equation}\label{eq:maintainrisk}
        c \succ_i c' \to R_i(c, g_i(\succ_i)) < R_i(c', g_i(\succ_i)).
    \end{equation}  
    It is easy to verify using the relative number of errors in the table that this holds for each possible ordering $\succ_i$. Then we define the function $g$ that takes as input a preference profile $\succ \in \mathcal{L}(\mathcal{C})^n$ and returns a full set of $n$ labelings as follows: $g(\succ) = \{g_1(\succ_1), \dots, g_n(\succ_n)\}$.

Let $v = \mathcal{M}_{\Tilde{c}} \circ g$ and observe that $v$ is a voting rule. We prove that since $\mathcal{M}_{\Tilde{c}}$ has bounded approximation and is strategyproof, $v$ must be onto $\mathcal{C}$ and strategyproof, allowing us to apply Gibbard-Satterthwaite to $v$. To prove that $v$ is onto $\mathcal{C}$, it is sufficient to show that for every $c \in \mathcal{C}$, there must exist an instance $S$ such that $\mathcal{M}_{\Tilde{c}}(S) = c$ and $S = g(\succ)$ for some $\succ \in \mathcal{L}(\mathcal{C})^n$. Observe that for every $c$, there exists exactly one $\succ_i$ such that $R_i(c, g_i(\succ_i)) = 0$, and we denote this as $\succ_c$. For example, $\succ_{c_1}$ is defined by $c_1 \succ_{c_1} c_2 \succ_{c_1} c_3$. If all agents have labeling $S_i = g_i(\succ_c)$, then $c$ is the optimal solution, incurring 0 global risk. Thus returning $c'\neq c$ would result in unbounded approximation, so $\mathcal{M}_{\Tilde{c}}(S) = c$. 

We now prove that $v$ is strategyproof by contradiction. Assume instead that $v$ is not strategyproof, so there exists for some $i \in [n]$ preferences $\succ_{-i} \in \mathcal{L}(\mathcal{C})^{n-1}$ and  $\succ_i, \overline{\succ}_i \in \mathcal{L}(\mathcal{C})$ such that $v(\succ_{-i}, \overline{\succ}_i) \succ_i v(\succ_{-i}, \succ_i)$,
meaning agent $i$ with true preference $\succ_i$ gains by misreporting their preference as $\overline{\succ}_i$ when the other agents report $\succ_{-i}$. If we let $\bar{c} = v(\succ_{-i}, \overline{\succ}_i)$ and $c = v(\succ_{-i}, \succ_i)$, then by definition of $v$ we know that $\mathcal{M}_{\Tilde{c}}(g(\succ_{-i}, \overline{\succ}_i)) = \bar{c}$ and $\mathcal{M}_{\Tilde{c}}(g(\succ_{-i}, \succ_i)) = c$. By Equation~\ref{eq:maintainrisk}, we also know that $R_i(\bar{c}, g_i(\succ_i)) < R_i(c, g_i(\succ_i))$ Thus, when running $\mathcal{M}_{\Tilde{c}}$ on instance $g(\succ_{-i}, \succ_i)$, the ith agent can misreport their labeling as $g_i(\overline{\succ}_i)$ and lower their risk, violating the strategyproofness of $\mathcal{M}_{\Tilde{c}}$. Therefore $v$ is onto $\mathcal{C}$ and strategyproof, so by Theorem~\ref{thm:gibbard}, $v$ must be dictatorial.

Let $i'$ be the dictator in $v$, meaning on every voting problem instance, $v$ returns the top choice of $i'$. Then on any classification instance where for every $i$, $S_i = g_i(\succ_i)$ for some $\succ_i \in \mathcal{L}(\mathcal{C})$, $\mathcal{M}_{\Tilde{c}}$ must return $\arg\min_{c \in \mathcal{C}}R_i(c, S_{i'})$. Then consider the following instance defined by advice $\Tilde{c}$. For $i \neq i'$, let $S_i = g_i(\succ_{\Tilde{c}})$, and recall that this means $R_i(\Tilde{c}, S_i) = 0$. Then let $S_{i'} = g_{i'}(\succ_{c'})$ for some $c' \neq \Tilde{c}$. We know $\mathcal{M}_{\Tilde{c}}(S) = c'$, and we can calculate the global risk:
\[R(c', S) = \frac{1}{n}\sum_{i \neq i'}R_i(c', S_i) + R_{i'}(c', S_{i'}) = \frac{n-1}{n}R_i(c', g_i(\succ_{\Tilde{c}})) \geq \frac{n-1}{9n}.\]
Similarly, we calculate the global risk with respect to $\Tilde{c}$:
\[R(\Tilde{c}, S) = \frac{1}{n}\sum_{i \neq i'}R_i(\Tilde{c}, S_i) + R_{i'}(\Tilde{c}, S_{i'}) = \frac{1}{n}R_{i'}(\Tilde{c}, S_{i'}) \leq \frac{1}{n}.\]
Then since $n > 10$, $\Tilde{c}$ is the optimal solution and therefore $\frac{R(c', S)}{R(\Tilde{c}, S)} \geq \frac{n-1}{9} = O(n)$.
\end{proof}

\paragraph{Randomized mechanisms.} While it is impossible for a deterministic strategyproof mechanism selecting amongst three labelings to achieve sublinear approximation, \citet{meir2012algorithms} show that randomization in the mechanism makes it possible to achieve a 3 approximation while maintaining strategyproofness. The global risk incurred by a randomized mechanism $\mathcal{M}$ with output $f$ on instance $S$ is now defined $R(\mathcal{M}(S), S) = \mathbb{E}[R(f,S)]$, and the risk that agent $i$ with type $S_i$ seeks to minimize is $R_i(\mathcal{M}(S), S_i) = \mathbb{E}[R(f,S_i)]$, where the expectation is over the randomness in $\mathcal{M}$. It seems plausible, given these results, that having correct advice may allow for further improvement while maintaining constant approximation with incorrect advice. We show that, as with the deterministic case, this is impossible, and any improvement from 3 in the consistency results in robustness that scales with $n$.

We make a reasonable restriction that allows us to work with finite instances, namely that our mechanisms are $\mu$-granular for some $\mu > 0$. This means that if we represent any output of our mechanism as a probability vector $\mathbf{p}$ over the labelings, every $\mathbf{p}$ can be expressed as $\mu \cdot \mathbf{q}$ for some integer vector $\mathbf{q}$. Intuitively, every probability is a multiple of $\mu$. This is a reasonable assumption for any mechanism that can be implementable digitally. Then, our main theorem is as follows.

\begin{restatable}{rThm}{thmRandClass}\label{thm:randomclassification}
     Let $\mathcal{M}$ be a strategyproof randomized mechanism with advice, and let $\mathcal{M}$ be $\mu$-granular for some $\mu \in (0, 1]$. Then for any $\varepsilon > 0$ and $n \geq 2$, if $\mathcal{M}$  achieves consistency $3 - 2/n - \varepsilon$, then $\mathcal{M}$ is $\Omega(n)$ robust.
\end{restatable}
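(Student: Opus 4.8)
The plan is to fix an arbitrary advice $\tilde c$ and play the consistency requirement against the robustness requirement \emph{within the single strategyproof rule} $\mathcal{M}_{\tilde c}$. Since the advice is exogenous and not reported by any agent, there is no cross-advice strategic constraint to exploit, and the entire tension lives in one mechanism that must be near-optimal whenever $\tilde c$ happens to minimize the risk, yet tolerable when $\tilde c$ is a poor choice. As in the proof of Theorem~\ref{thm:sublinearconsist}, I would reuse the three labelings $c_1,c_2,c_3$ and the preference embedding $g$ of Table~\ref{table}, so that $v := \mathcal{M}_{\tilde c}\circ g$ is a strategyproof \emph{random} social choice function over $\{c_1,c_2,c_3\}$. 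The engine of the argument is the random analogue of the Gibbard--Satterthwaite theorem (Gibbard, 1977): every strategyproof random rule on at least three alternatives is a convex combination of \emph{unilateral} schemes (each depending only on a single agent's reported ranking) and \emph{duple} schemes (each supported on a fixed pair of alternatives), where the degenerate constant scheme ``always output $\tilde c$'' is an allowed component. The $\mu$-granularity assumption makes the relevant instance space finite and, crucially, will turn a strict consistency improvement into a mixture-weight gap bounded below by a function of $\varepsilon$ and $\mu$.

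Next I would extract structure from consistency. Using the error counts in Table~\ref{table}, I would build the family of near-unanimous profiles in which all but a few agents rank $\tilde c$ at the top, so that $\tilde c$ is the unique risk minimizer, and express the consistency ratio on such a profile as an explicit linear function of the probabilities $v$ assigns to $c_1,c_2,c_3$. The point is that no mixture of components treating $\tilde c$ symmetrically with the other labelings can beat the advice-free lower bound of $3-2/n$ of \citet{meir2012algorithms} on the worst such profile (the hard instance of that bound can be taken to be $\tilde c$-optimal). Hence beating $3-2/n$ by $\varepsilon$ forces $v$ to assign $\tilde c$ probability strictly above what these symmetric components allow, which by the Gibbard decomposition can only come from components that are \emph{biased toward} $\tilde c$ --- a constant-$\tilde c$ scheme or a duple leaning to $\tilde c$ --- carrying total weight $w \geq w(\varepsilon,\mu) > 0$.

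Finally I would refute robustness with a single adversarial profile. I would choose a profile on which a different labeling $c'$ is the risk minimizer with optimal risk $O(1/n)$ while $\tilde c$ carries risk $\Omega(1)$, and on which every $\tilde c$-biased component identified above still places probability $\Omega(1)$ on $\tilde c$: a constant-$\tilde c$ scheme does so automatically, and for a $\tilde c$-favoring duple this is arranged by keeping the alternative \emph{outside} the pair (namely $c'$) optimal while both alternatives in the duple incur $\Omega(1)$ risk, so the duple contributes $\Omega(1)$ risk no matter how it resolves. Then the expected risk of $v$ is at least $\Omega(1)\cdot w = \Omega(1)$ while the optimum is $O(1/n)$, yielding robustness $\Omega(n)$ and contradicting the assumed $o(n)$ robustness.

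The main obstacle is the coupling between the consistency and robustness steps: I must certify that the very components consistency forces into the mixture are exactly the ones charged $\Omega(1)$ by the adversarial profile, and that strategyproofness --- equivalently, the monotonicity built into Gibbard's unilateral and duple schemes --- prevents the bias toward $\tilde c$ from being ``switched off'' precisely on that profile. Getting this charging scheme right, while keeping the bias weight $w$ bounded below through $\mu$-granularity and verifying the ratios against the concrete error numbers of Table~\ref{table}, is where the real work lies; the Gibbard decomposition and the two instance families are comparatively routine once the charging is pinned down.
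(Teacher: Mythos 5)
Your overall charging scheme (use consistency to force weight onto non-dictatorial components, then charge each duple component on a profile where its excluded labeling is optimal) matches the paper's strategy in spirit, but the engine you invoke to obtain the decomposition does not apply, and this is a genuine gap. Gibbard's 1977 theorem characterizes random rules that are strategyproof in the stochastic-dominance sense, i.e., for \emph{every} vNM utility consistent with the reported ordinal ranking. The rule $v = \mathcal{M}_{\tilde{c}} \circ g$ inherits from $\mathcal{M}_{\tilde{c}}$ only strategyproofness with respect to expected $0$-$1$ risk under the one specific cardinal utility profile induced by the error counts of Table~\ref{table} (e.g., risks $(0,6,9)/9$ for the ordering $c_1 \succ c_2 \succ c_3$). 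That is strictly weaker than SD-strategyproofness: an agent may be unable to profit under those particular numbers yet able to profit in expectation under a different utility consistent with the same ranking, so the conclusion that $v$ is a mixture of unilateral and duple schemes does not follow. This is precisely why the paper does not route the randomized case through the voting embedding at all; it instead imports the classification-specific characterization of \citet{meir2011tight} (Lemma~\ref{lem:MisRDD}), which states that a $\mu$-granular, expected-risk-strategyproof randomized mechanism is a random-dictator/duple lottery over the rich instance family $(X,\mathcal{C},\mathcal{S})(k)$ for $k \geq 2/\mu$. The embedding trick is sound for the deterministic Theorem~\ref{thm:sublinearconsist} because, for deterministic outputs, risk-strategyproofness under the embedding coincides with ordinal strategyproofness; for lotteries the two notions come apart.

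Second, even granting a decomposition, your consistency step fails quantitatively. The profiles you can feed to $v$ are only those in the image of $g$: six types on nine points with constant risk gaps. The hard instances behind the $3-2/n$ random-dictator bound of \citet{meir2012algorithms} require near-tie types whose top two labelings differ in risk by $O(1/k)$; in the paper's construction, agent $1$ has risks $k-1$ versus $k+1$ out of $3k$ for its top two labelings, while the remaining agents have risks $1$ versus $2k-1$. No such types exist in the image of $g$: a short computation shows that on Table-1 profiles a random dictator's approximation ratio is bounded by a constant strictly below $3$ (around $2.8$), so beating $3 - 2/n - \varepsilon$ for small $\varepsilon$ forces no mixture-weight gap whatsoever in your framework, and the argument collapses exactly in the regime the theorem addresses. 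The repair is what the paper does: work directly with classification instances on $m = 3k$ points with $k = \Theta(\max\{1/\mu,\, n/\varepsilon\})$, use Lemma~\ref{lem:MisRDD} for the decomposition, lower-bound the random-dictator part by $3 - 2/n - (2n+96)/k$ on an advice-correct instance to force total duple weight $\alpha > \varepsilon/6$, and then charge $\max_{w,w'} p_{w,w'} \geq \alpha/3$ against a ratio of $k+1$ on the instance where that duple's excluded labeling is optimal, yielding robustness $\Omega(n)$.
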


Similar to the proof of Theorem~\ref{thm:sublinearconsist}, this result leverages a characterization of strategyproof randomized mechanisms from voting theory, which, by restricting the behavior of such mechanisms on specific instances, results in our lower bound.


\subsection{Two labelings} \label{subsec:binclass}

We can reduce the problem of selecting from two labelings to selecting from the two constant labelings of all 0s ($c_0$) or all 1s ($c_1$), which we call the $\{c_0,c_1\}$-classification problem. This allows us to apply any binary classification mechanism to the $\{c_0,c_1\}$-classification problem.

\begin{restatable}{rThm}{thmToBinary}\label{thm:binto01}
        Let $\mathcal{M}$ be a strategyproof $\{c_0,c_1\}$-classification mechanism that is $\alpha$ consistent and $\beta$ robust. Then for any labelings $\{c', c''\}$ there exists a strategyproof classification mechanism $\mathcal{M}'$ over $\{c', c''\}$ that uses $\mathcal{M}$ as a blackbox that is $\alpha$ consistent and $\beta$ robust. Furthermore, $\mathcal{M}'$ is deterministic if $\mathcal{M}$ is deterministic. 
\end{restatable}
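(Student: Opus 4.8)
The plan is to reduce the general two-labeling problem to the $\{c_0, c_1\}$-classification problem by constructing an explicit relabeling of each agent's data, running $\mathcal{M}$ on the relabeled instance, and translating the output back. Given target labelings $\{c', c''\}$ over the shared input $X = \{x_1, \dots, x_m\}$, let $P = \{j : c'(x_j) \neq c''(x_j)\}$ be the set of points on which the two labelings disagree. On the points outside $P$ the two labelings agree, so they contribute the same loss regardless of which labeling is selected and are irrelevant to both risk comparisons and incentives; I would therefore project attention to the coordinates in $P$. The key map sends each agent's reported labeling $Y_i$ to a $\{c_0, c_1\}$-instance label $\bar{Y}_i$ by setting, for each $j \in P$, $\bar{y}_{i,j} = 0$ if $y_{i,j} = c'(x_j)$ and $\bar{y}_{i,j} = 1$ if $y_{i,j} = c''(x_j)$ (agreement with $c'$ is encoded as preferring the all-$0$s labeling, agreement with $c''$ as preferring all-$1$s). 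Similarly the advice $\tilde{c} \in \{c', c''\}$ maps to $c_0$ or $c_1$. The mechanism $\mathcal{M}'$ runs $\mathcal{M}$ on this relabeled instance restricted to $P$ and outputs $c'$ if $\mathcal{M}$ returns $c_0$ and $c''$ if $\mathcal{M}$ returns $c_1$ (in the randomized case, it outputs the corresponding distribution over $\{c', c''\}$).

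The crucial structural fact I would establish is that this map is risk-preserving up to the additive constant coming from points outside $P$: for any agent $i$, the personal $0$-$1$ loss of selecting $c'$ under $Y_i$ equals the loss of selecting $c_0$ under $\bar{Y}_i$ on the coordinates $P$, plus the fixed disagreement on $[m]\setminus P$ that is common to both options. I would make this precise by writing $R_i(c', Y_i) = \tfrac{1}{m}\bigl(|\{j \in P : y_{i,j} \neq c'(x_j)\}| + D_i\bigr)$ where $D_i$ counts the off-$P$ disagreements, and observing that $|\{j \in P : y_{i,j} \neq c'(x_j)\}|$ is exactly the number of coordinates on which $\bar{Y}_i$ equals $1$, i.e.\ the $\{c_0,c_1\}$-loss of $c_0$ on $\bar{Y}_i$ restricted to $P$. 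Summing over agents yields the analogous identity for global risk. Because the off-$P$ term $D_i$ is identical whether $\mathcal{M}'$ selects $c'$ or $c''$, the \emph{differences} in risk between the two options are preserved exactly by the relabeling, and the ratio of risks is controlled once I account for these constants.

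From this correspondence the three claimed properties follow. For strategyproofness, any profitable deviation for agent $i$ under $\mathcal{M}'$ would, via the risk-preserving identity, correspond to a deviation of the relabeled report $\bar{Y}_i$ that strictly decreases agent $i$'s $\{c_0,c_1\}$-personal risk (the additive $D_i$ cancels in the comparison), contradicting the strategyproofness of $\mathcal{M}$; here I must check that every valid relabeled report of agent $i$ under $\mathcal{M}$ is the image of some valid original report, which holds because the relabeling restricted to $P$ is a bijection between labelings of $X$ (on $P$) and $\{0,1\}^P$. For consistency and robustness I would take an optimal labeling $f^*$ for the original instance; since $\mathcal{C} = \{c', c''\}$ and both $c', c''$ share the same off-$P$ losses, the optimum is determined entirely by the $P$-coordinates, so its image is an optimal $\{c_0,c_1\}$-labeling for $\bar{S}$, and the approximation guarantees of $\mathcal{M}$ transfer directly to $\mathcal{M}'$ with the same $\alpha$ and $\beta$.

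The main obstacle is the risk-ratio bookkeeping: consistency and robustness are stated as ratios $R(\mathcal{M}'(S), S)/R(f^*, S)$, and the additive off-$P$ constant $D = \sum_i D_i$ appears in \emph{both} numerator and denominator, so it does not simply cancel in a ratio the way it cancels in risk \emph{differences}. I would need to verify that adding the same nonnegative constant to numerator and denominator can only \emph{decrease} the ratio when the numerator exceeds the denominator (which it does, since the ratio is at least $1$), so $\frac{R_P(\mathcal{M}'(S)) + D}{R_P(f^*) + D} \le \frac{R_P(\mathcal{M}'(S))}{R_P(f^*)} \le \beta$, where $R_P$ denotes risk restricted to the $P$-coordinates. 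This monotonicity argument, together with the fact that $R_P$ on $\bar{S}$ is exactly the $\{c_0,c_1\}$-risk to which $\mathcal{M}$'s guarantees apply, is the technically delicate point; the remaining steps are the routine verification that the relabeling is well-defined and bijective on $P$ and preserves the determinism of $\mathcal{M}$.
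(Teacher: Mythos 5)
Your proposal is correct and follows essentially the same route as the paper's proof: the paper also restricts to the disagreement set $J$, relabels via agreement with $c'$ (as $0$) versus $c''$ (as $1$), runs $\mathcal{M}$ on the transformed instance, maps the output back, and handles the ratio bookkeeping by exactly the monotonicity observation you flag (adding the common off-$J$ error to both numerator and denominator of a ratio at least $1$ can only decrease it). The bijectivity check you raise for strategyproofness is treated only implicitly in the paper, but your treatment is consistent with it.
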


We can directly apply this theorem to the guarantees for $PFA_{\gamma,\{0,1\}}$ from Theorem~\ref{thm:robustconsist} to derive positive results in the two labelings setting as follows.
\begin{restatable}{rThm}{thmTwolabels}\label{thm:dettwolabels}

There exists a deterministic strategyproof binary decision mechanism over arbitrarily labelings $\{c',c''\}$ with parameter $\gamma \in (0, 2]$ that, using $\mech_{\gamma,\{0,1\}}$ as a blackbox, achieves $1 + \gamma$ consistency and $1+4/\gamma$ robustness. 
\end{restatable}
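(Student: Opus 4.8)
The plan is to combine the blackbox reduction in Theorem~\ref{thm:binto01} with the guarantees of $\mech_{\gamma, \{0,1\}}$ established in Theorem~\ref{thm:robustconsist}. The key observation is that the $\{c_0, c_1\}$-classification problem coincides exactly with the constant function regression problem over the value set $\arbset = \{0, 1\}$: each agent reports a multiset of binary labels, one per shared point, and the mechanism outputs either the constant $0$ (i.e.\ $c_0$) or the constant $1$ (i.e.\ $c_1$). Since for any $a, y \in \{0,1\}$ the absolute loss $|a - y|$ equals the $0$-$1$ loss $\mathbf{1}(a \neq y)$, the global risk $R$ and every personal risk $R_i$ in the regression instance agree precisely, value by value, with those in the corresponding classification instance. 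Consequently $\mech_{\gamma, \{0,1\}}$ is literally a $\{c_0, c_1\}$-classification mechanism, and all of its guarantees carry over without re-derivation.

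First I would instantiate Theorem~\ref{thm:robustconsist} with $\arbset = \{0, 1\}$. This yields that $\mech_{\gamma, \{0,1\}}$, viewed as a $\{c_0, c_1\}$-classification mechanism, is strategyproof, $1 + \gamma$ consistent, and $1 + 4/\gamma$ robust for every $\gamma \in (0, 2]$. Here consistency and robustness are measured against exactly the same risk $R$ as in the regression setting, which equals the classification risk by the loss-equivalence above, so the numerical guarantees transfer verbatim.

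Next I would apply Theorem~\ref{thm:binto01} with $\mathcal{M} = \mech_{\gamma, \{0,1\}}$, $\alpha = 1 + \gamma$, and $\beta = 1 + 4/\gamma$. Because $\mech_{\gamma, \{0,1\}}$ is deterministic, the theorem produces a deterministic strategyproof mechanism $\mathcal{M}'$ over any pair of labelings $\{c', c''\}$ that uses $\mech_{\gamma, \{0,1\}}$ as a blackbox and preserves both the $1 + \gamma$ consistency and the $1 + 4/\gamma$ robustness. This $\mathcal{M}'$ is exactly the mechanism claimed in the statement, completing the argument.

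The only genuine content is verifying the loss-equivalence that lets the regression guarantees of Theorem~\ref{thm:robustconsist} transfer to the binary classification risk; everything else is a direct appeal to the composition of the two cited theorems. Thus the main \emph{obstacle} is conceptual rather than technical: confirming that restricting $\arbset$ to $\{0,1\}$ makes the absolute-loss regression problem and the $0$-$1$-loss classification problem identical instance-by-instance, so that strategyproofness, consistency, and robustness all coincide. Once that identification is in hand, Theorem~\ref{thm:robustconsist} and Theorem~\ref{thm:binto01} compose immediately to give the result.
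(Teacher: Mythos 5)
Your proposal is correct and takes essentially the same route as the paper: the paper likewise identifies $\{c_0,c_1\}$-classification with constant-function regression over $\arbset = \{0,1\}$ (where $0$-$1$ loss coincides with absolute loss) and obtains the result by directly composing the guarantees of $\mech_{\gamma,\{0,1\}}$ from Theorem~\ref{thm:robustconsist} with the blackbox reduction of Theorem~\ref{thm:binto01}. Your explicit verification of the loss equivalence is the only step the paper leaves implicit.
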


 We also provide a \binmechname mechanism for $\{c_0,c_1\}$-classification that is parametrized by $\gamma \in (0,1]$, denoted $\binmech_\gamma$, with the following guarantees.

\vspace{.2cm}
\begin{algorithm}[H]
\setstretch{1.1}
\SetKwInOut{Input}{Input}
\Input{ parameter $\gamma \in (0,1]$, dataset $S =\uplus_{i=1}^nS_i$, advice $\Tilde{c} \in \{c_0,c_1\}$}

$P \leftarrow \frac{1}{n}\sum_{i\in[n]} \mathbf{1}(R_i(c_1,S_i) \geq R_i(c_0, S_i))$, $N \leftarrow 1-P$\;
\If{$\Tilde{c} = c_1$}
{\Return $c_1$ w.p. $\frac{(P/\gamma)^2}{(P/\gamma)^2+(1-P)^2}$ and $c_0$ w.p. $\frac{(1-P)^2}{(P/\gamma)^2+(1-P)^2}$}
\Else{
 \Return $c_1$ w.p. $\frac{P^2}{P^2+((1-P)/\gamma)^2}$ and $c_0$ w.p. $\frac{((1-P)/\gamma)^2}{P^2+((1-P)/\gamma)^2}$}
\caption{\binmechname (\binmech)}
\label{mech:srda}
\end{algorithm}
\vspace{.2cm}

\begin{restatable}{rThm}{thmRandMech}\label{thm:randtwolabels}
    For any $\gamma \in (0,1]$, $\binmech_{\gamma}$ is strategyproof and achieves $1+\gamma$ consistency and $1+1/\gamma$ robustness. Furthermore, $\binmech_{\gamma}$ is group-strategyproof on any instance where $m$ is odd.
\end{restatable}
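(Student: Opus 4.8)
The plan is to first reparametrize the instance. For each agent $i$, let $a_i \in [0,1]$ be the fraction of reported labels equal to $1$, so that $R_i(c_0, S_i) = a_i$ and $R_i(c_1, S_i) = 1 - a_i$; thus agent $i$ (weakly) prefers $c_1$ exactly when $a_i \geq 1/2$. Let $P$ be the fraction of agents preferring $c_1$ and $N = 1 - P$ those preferring $c_0$; this is the only statistic of the reports that $\binmech_\gamma$ uses, and the mechanism outputs $c_1$ with some probability $q_1 = q_1(P, \Tilde{c})$ and $c_0$ with $q_0 = 1 - q_1$. At the global level $R(c_0, S) = \bar{a}$ and $R(c_1, S) = 1 - \bar{a}$ with $\bar{a} = \frac{1}{n}\sum_i a_i$, so $c_1$ is optimal iff $\bar{a} \geq 1/2$.

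Next I would prove strategyproofness. An agent's true expected risk is $R_i(\binmech_\gamma(S), S_i) = a_i + q_1\,(1 - 2 a_i)$, which is linear in $q_1$ with a slope whose sign is fixed by the agent's true preference: an agent preferring $c_1$ ($a_i \geq 1/2$) wants $q_1$ large, one preferring $c_0$ wants $q_1$ small. Since an agent can influence the output only through their own $\pm 1/n$ contribution to $P$, the key step is to check that $q_1(\cdot, \Tilde{c})$ is nondecreasing in $P$ for each fixed advice $\Tilde{c}$; both branch expressions have the form $\frac{1}{1 + c\,((1-P)/P)^2}$ for a positive constant $c$ depending on $\gamma$, hence are increasing. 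Reporting one's true preference therefore moves $q_1$ in exactly the desired direction, so no misreport helps. For group-strategyproofness when $m$ is odd, I use that odd $m$ rules out ties, so every agent has a strict preference; any coalition deviation changes $P$ in a definite direction, and a net increase (resp.\ decrease) in $P$ must come from some coalition member flipping their contributed bit from $0$ to $1$ (resp.\ $1$ to $0$), i.e.\ a strict $c_0$-preferrer (resp.\ $c_1$-preferrer), who is then strictly hurt by the induced change in $q_1$, contradicting that all members weakly benefit. This mirrors the median argument of Lemma~\ref{lem:PFASP}.

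The analytic heart is consistency and robustness, which I would reduce to worst-case ratio bounds followed by an algebraic collapse. For a fixed $P$ and a fixed sign of $\bar a - 1/2$, the extreme instances push the $a_i$ to their allowed endpoints: when $c_1$ is optimal, $\bar a$ is maximized by setting $a_i = 1$ for the $c_1$-preferrers and $a_i \to 1/2$ for the rest, giving $\frac{R(c_0,S)}{R(c_1,S)} \leq \frac{1+P}{1-P}$; symmetrically, when $c_0$ is optimal but the advice (wrongly) points to $c_1$, $\frac{R(c_1,S)}{R(c_0,S)} \leq \frac{2-P}{P}$. Substituting these into the expected-risk ratio, which is $q_1 + q_0\,\frac{R(c_0,S)}{R(c_1,S)}$ when $c_1$ is optimal and $q_1\,\frac{R(c_1,S)}{R(c_0,S)} + q_0$ when $c_0$ is optimal, together with the mechanism's squared weights and the $1/\gamma$ scaling, every term cancels down to the single inequality $[(\gamma+1)P - \gamma]^2 \geq 0$. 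This yields exactly $1 + \gamma$ in the consistent case and $1 + 1/\gamma$ in the inconsistent case, each tight at $P = \gamma/(\gamma+1)$. The symmetric advice-$c_0$ branch follows by exchanging $c_0 \leftrightarrow c_1$ and $P \leftrightarrow N$, and since $\gamma \leq 1$ gives $1 + \gamma \leq 1 + 1/\gamma$, the overall robustness is $1 + 1/\gamma$.

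I expect the main obstacle to be this last step: correctly identifying, for each fixed value of $P$, the instance that maximizes the approximation ratio (the preference-to-risk coupling producing the $\frac{1+P}{1-P}$ and $\frac{2-P}{P}$ bounds), and then verifying that the precise square weights are exactly what make the resulting expression factor as a perfect square rather than merely stay bounded. The boundary values $P \in \{0,1\}$ also need a brief separate check, since there the ratio bounds blow up while the corresponding $q_1$ (or $q_0$) vanishes; keeping the expected-risk ratio as a single fraction rather than splitting off $q_0 \leq 1$ handles these cases cleanly.
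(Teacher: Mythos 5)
Your proposal is correct and takes essentially the same route as the paper's proof: write the expected risk as the squared-weight convex combination of $R(c_0,S)$ and $R(c_1,S)$, bound the risk ratio by $\frac{1+P}{1-P}$ (resp.\ $\frac{1+N}{1-N}$) --- which the paper imports as Lemma~\ref{lem:binaryhelper} from \citet{meir2012algorithms} and you re-derive by pushing the $a_i$ to their endpoints --- and then maximize the resulting function of $P$, with the worst case at $P=\gamma/(1+\gamma)$ giving exactly $1+\gamma$ and $1+1/\gamma$. The only differences are cosmetic: you certify the bound by completing the square, $\frac{1}{\gamma}\left[(1+\gamma)P-\gamma\right]^2\geq 0$, where the paper differentiates $h_\gamma$ and locates its critical point, and your monotonicity-of-$q_1$ argument for (group-)strategyproofness is a fleshed-out version of the paper's one-line reasoning.
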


$\binmech_{\gamma}$ is clearly strategyproof because any agent misreporting their type can only increase the probability the mechanism selects a classifier that mislabels at least half of their points, thus weakly increasing their personal risk. By the same reasoning, it is group-strategyproof for odd $m$ as misreporting can only increase the probability the mechanism selects a classifier that mislabels \emph{more} than half of their points, increasing their personal risk. 

Let us define $N = 1-P$ as the proportion of agents that prefer $c_0$ over $c_1$. To prove the consistency and robustness results, we leverage the following lemma.

\begin{lemma}[\citet{meir2012algorithms}] \label{lem:binaryhelper}
    If $c^*=c_1$ is the optimal binary classifier for instance $S$, then if $r^*$ represents the minimum risk possible on instance $S$ and $P$ is defined as in Mechanism~\ref{mech:srda}, we get that
    \[1-r^* \leq \frac{1+P}{1-P}r^*.\]
    Similarly, if $c^* = c_0$ and $N$ is defined as in Mechanism~\ref{mech:srda}, we get that 
    \[1-r^* \leq \frac{1+N}{1-N} r^*.\]
\end{lemma}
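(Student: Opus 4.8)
The plan is to reduce the statement to two elementary counting bounds, exploiting the fact that under $0$-$1$ loss the two opposite constant labelings are \emph{complementary}: at every point exactly one of $c_0,c_1$ is correct, so for every agent $R_i(c_0,S_i)+R_i(c_1,S_i)=1$, and averaging over all agents gives $R(c_0,S)+R(c_1,S)=1$. Consequently, when $c^*=c_1$ is optimal we have $r^*=R(c_1,S)$ and $1-r^*=R(c_0,S)$, so the claimed inequality is exactly the statement that the approximation ratio $R(c_0,S)/R(c_1,S)$ of the losing labeling is at most $(1+P)/(1-P)$. Recall that $P$ is the fraction of agents who prefer $c_1$ and $N=1-P$ the fraction who prefer $c_0$; in terms of per-agent risk these are, by the complementarity $R_i(c_0,S_i)+R_i(c_1,S_i)=1$, precisely the agents with $R_i(c_1,S_i)\le 1/2$ and with $R_i(c_1,S_i)>1/2$, respectively.

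First I would lower bound the optimal risk. Every one of the $Nn$ agents counted by $N$ has $R_i(c_1,S_i)>1/2$, and all remaining per-agent risks are nonnegative, so averaging yields $r^*=R(c_1,S)\ge \tfrac12 N=\tfrac{1-P}{2}$. Next I would upper bound the losing risk by splitting the agents at the same $1/2$ threshold: each of the $Pn$ agents preferring $c_1$ contributes at most $1$ to $R(c_0,S)$, while each of the $Nn$ agents preferring $c_0$ has $R_i(c_0,S_i)<1/2$; averaging gives $1-r^*=R(c_0,S)\le P+\tfrac12 N=\tfrac{1+P}{2}$. Dividing the second bound by the first produces
\[\frac{1-r^*}{r^*}\le \frac{(1+P)/2}{(1-P)/2}=\frac{1+P}{1-P},\]
which rearranges to $1-r^*\le \frac{1+P}{1-P}r^*$, the first claim.

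The case $c^*=c_0$ is completely symmetric: swapping the roles of $c_0,c_1$ and of $P,N$, the same two threshold estimates give $r^*=R(c_0,S)\ge \tfrac{1-N}{2}$ (the $Pn$ agents preferring $c_1$ have $R_i(c_0,S_i)\ge 1/2$) and $1-r^*=R(c_1,S)\le \tfrac{1+N}{2}$, hence $\frac{1-r^*}{r^*}\le \frac{1+N}{1-N}$.

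Since the argument is short, the content is in the bookkeeping rather than any deep idea, and that is where I expect the only real care to be needed: orienting each per-agent inequality in the right direction (which group is bounded below by $1/2$ and which is bounded above by $1/2$ versus trivially by $1$), and correctly matching $P$ and $N$ to the two sides of the $1/2$ threshold, since it is precisely this matching that makes the constants collapse to $(1+P)/(1-P)$ rather than some other ratio. Minor boundary points to check are the split at exactly $R_i=1/2$ (which is handled by using weak versus strict preference consistently, and cannot occur when $m$ is odd) and the degenerate case $r^*=0$, where $c_1$ is already perfect, $P=1$, and the bound is vacuously infinite.
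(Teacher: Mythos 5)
Your proof is correct. There is nothing in the paper to compare it against: the lemma is imported from \citet{meir2012algorithms} without proof, and your derivation---complementarity $R_i(c_0,S_i)+R_i(c_1,S_i)=1$ pointwise, so $r^*=R(c_1,S)\ge \frac{1-P}{2}$ from the agents with $R_i(c_1,S_i)>1/2$ and $1-r^*=R(c_0,S)\le P+\frac{1-P}{2}=\frac{1+P}{2}$ from the trivial bound on the rest---is precisely the standard counting argument behind the cited result, and your extremal configuration ($Pn$ all-ones labelers, $Nn$ agents just above the $1/2$ threshold) shows the constant $\frac{1+P}{1-P}$ is not improvable. Two details you handled well: you read $P$ in the intended sense (the fraction of agents weakly preferring $c_1$), consistent with the paper's prose defining $N=1-P$ as the proportion preferring $c_0$, even though the indicator in Mechanism~\ref{mech:srda} is written with the inequality reversed (an apparent typo in the paper); and the edge cases you flag (ties at $R_i=1/2$, and $P=1$, where the right-hand side is read as vacuously infinite and the lemma is anyway only invoked with coefficient $(1-P)^2=0$ in Lemma~\ref{lem:binaryconsist}) are exactly the boundary points that need a convention, with either tie-breaking direction preserving the weak inequalities you use.
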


With this, we can bound the consistency of \binmech.

\begin{lemma} \label{lem:binaryconsist}
    For any $\gamma \in(0,1]$, $\binmech_\gamma$ achieves $1+\gamma$ consistency.
\end{lemma}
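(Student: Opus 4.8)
The plan is to compute the expected risk of $\binmech_\gamma$ under correct advice directly and bound its ratio to the optimum using the structural inequality of Lemma~\ref{lem:binaryhelper}. First I would fix an instance $S$, let $c^*$ be the optimal classifier, and impose correct advice $\tilde{c} = c^*$. Since both the mechanism and the objective are symmetric under swapping $c_0 \leftrightarrow c_1$ (and correspondingly $P \leftrightarrow N = 1-P$), it suffices to analyze the case $c^* = c_1$; the case $c^* = c_0$ then follows verbatim from the $\tilde{c} = c_0$ branch of the mechanism with $N$ in place of $P$. Here $P$ denotes the fraction of agents weakly preferring $c_1$, as in Mechanism~\ref{mech:srda}.

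In the case $c^* = \tilde{c} = c_1$, the mechanism returns $c_1$ with probability $q_1 = \frac{(P/\gamma)^2}{(P/\gamma)^2 + (1-P)^2}$ and $c_0$ with probability $q_0 = 1 - q_1$. Writing $r^* = R(c_1, S)$ for the optimal risk, the key observation is that under $0$-$1$ loss every label is misclassified by exactly one of $c_0, c_1$, so $R(c_0, S) + R(c_1, S) = 1$, giving $R(c_0,S) = 1 - r^*$. Hence the consistency ratio is $\frac{\mathbb{E}[R(\binmech_\gamma(S),S)]}{r^*} = q_1 + q_0 \cdot \frac{1 - r^*}{r^*}$. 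At this point I would invoke Lemma~\ref{lem:binaryhelper}, which in the case $c^*=c_1$ yields $\frac{1-r^*}{r^*} \leq \frac{1+P}{1-P}$, replacing the instance-dependent risk ratio by a bound depending only on $P$.

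Substituting the explicit probabilities and this bound, the target inequality becomes purely a statement about $P$ and $\gamma$: using $q_0 \cdot \frac{1+P}{1-P} = \frac{(1-P)^2}{(P/\gamma)^2 + (1-P)^2}\cdot \frac{1+P}{1-P} = \frac{1-P^2}{(P/\gamma)^2+(1-P)^2}$, the ratio simplifies to $\frac{P^2/\gamma^2 + 1 - P^2}{P^2/\gamma^2 + (1-P)^2}$, and after multiplying numerator and denominator by $\gamma^2$, the claim $\frac{\mathbb{E}[R]}{r^*}\le 1+\gamma$ reduces to showing $P^2 + \gamma^2(1-P)^2 - 2\gamma P(1-P) \geq 0$. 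The one real step is recognizing that this expression is the perfect square $\bigl(P - \gamma(1-P)\bigr)^2$, which is nonnegative for all $P \in [0,1]$ and $\gamma \in (0,1]$ (with equality exactly at $P = \gamma/(1+\gamma)$); everything else is routine algebra. I expect the only mild subtleties to be the bookkeeping in this reduction, a brief check of the degenerate case $P=1$ (where $q_0 = 0$ makes the argument trivial and the $\frac{1+P}{1-P}$ term never multiplies a nonzero weight), and confirming that the restriction $\gamma \in (0,1]$ is exactly what keeps the probabilities well-defined; the core inequality itself collapses cleanly to a square.
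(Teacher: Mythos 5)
Your proof is correct, and its skeleton is identical to the paper's: the same WLOG reduction to $c^* = \tilde{c} = c_1$, the same decomposition of the expected risk over the two outcomes, the same use of $R(c_0,S) + R(c_1,S) = 1$, and the same invocation of Lemma~\ref{lem:binaryhelper} to replace the instance-dependent ratio $(1-r^*)/r^*$ by $(1+P)/(1-P)$. Where you part ways is the finishing step. The paper defines $h_\gamma(P) = \frac{1+(1/\gamma^2-1)P^2}{1-2P+(1/\gamma^2+1)P^2}$, differentiates it, locates the maximum at $P = \gamma/(1+\gamma)$ by a sign analysis of the derivative, and evaluates $h_\gamma$ there to obtain $1+\gamma$. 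You instead prove $h_\gamma(P) \leq 1+\gamma$ directly: cross-multiplying reduces the claim to $P^2 - 2\gamma P(1-P) + \gamma^2(1-P)^2 \geq 0$, which is the perfect square $\bigl(P - \gamma(1-P)\bigr)^2$. This is more elementary (no calculus, no analysis of a rational derivative) and arguably cleaner, and it loses nothing: equality in your square holds exactly at $P = \gamma/(1+\gamma)$, which is precisely the paper's maximizer, so the tightness information is recovered as well. One small correction to a side remark: the restriction $\gamma \in (0,1]$ is not what keeps the probabilities well-defined --- they are well-defined for every $\gamma > 0$, and your perfect-square inequality also holds for all $\gamma > 0$. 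That restriction matters elsewhere in the paper, namely so that the robustness bound $1+1/\gamma$ dominates the consistency bound $1+\gamma$, i.e., so the mechanism's worst-case guarantee over arbitrary advice is $1+1/\gamma$.
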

\begin{proof}
Assume WLOG that $\Tilde{c} = c^* = c_1$. We can express the risk of $\binmech_\gamma$ as follows:
    \[R(\binmech_\gamma(S, \Tilde{c}),S) = \frac{(P/\gamma)^2}{(P/\gamma)^2+N^2}\cdot R(c_1, S) + \frac{N^2}{(P/\gamma)^2+N^2}\cdot R(c_0, S).\]
    Let $r^* = R(c_1, S)$ represent the minimum risk achievable for instance $S$ and observe that $R(c_0, S) = 1-r^*$. Substituting this, as well as $N = 1- P$, gives us the following:
    \begin{align*}R(\binmech_\gamma(S, \Tilde{c}), S) & = \frac{(P/\gamma)^2}{(P/\gamma)^2+(1-P)^2}\cdot r^* + \frac{(1-P)^2}{(P/\gamma)^2+(1-P)^2}\cdot(1-r^*) \\
    & \leq \frac{(P/\gamma)^2}{(P/\gamma)^2+(1-P)^2}\cdot r^* + \frac{(1-P)^2}{(P/\gamma)^2+(1-P)^2}\cdot \frac{1+P}{1-P}\cdot r^* \\
    & = \frac{1 + (1/\gamma^2-1)\cdot P^2}{1 - 2P + (1/\gamma^2+1)\cdot P^2}r^*.
    \end{align*}
    The inequality comes from applying Lemma~\ref{lem:binaryhelper}. Let us then define $h_\gamma(P) = \frac{1 + (1/\gamma^2-1)\cdot P^2}{1 - 2P + (1/\gamma^2+1)\cdot P^2}$. Taking the derivative we get that
    \[\frac{\partial h_\gamma}{\partial P}(P) = \frac{2-4P-2(\frac{1}{\gamma^2}-1)\cdot P^2}{(1-2P+(\frac{1}{\gamma^2}+1)\cdot P^2)^2}.\]
    Observe that for any $\gamma \in (0,1]$ the denominator is positive on the domain $P \in (0,1)$, and analyzing the quadratic formula in the numerator shows that $\frac{\partial h_\gamma}{\partial P}$ is equal to $0$ at $P = \frac{\gamma}{1+\gamma}$, negative for $P \in (0, \frac{\gamma}{1+\gamma})$, and positive for $P \in (\frac{\gamma}{1+\gamma}, 1)$. Thus, $h_\gamma(P)$ achieves its maximum (over the domain $P\in [0,1]$) at $P = \frac{\gamma}{1+\gamma}$. This allows us to achieve our consistency bound:
    \begin{align*}
        R(\binmech_\gamma(S,\Tilde{c}), S) & \leq h_\gamma\left(\frac{\gamma}{1+\gamma}\right)\cdot r^* \\
        & = \frac{1 + (\frac{1}{\gamma^2}-1)\cdot (\frac{\gamma}{1+\gamma})^2}{1 - 2\frac{\gamma}{1+\gamma} + (\frac{1}{\gamma^2}+1)\cdot (\frac{\gamma}{1+\gamma})^2} \cdot r^*  = (1+\gamma)\cdot r^*. \qedhere
    \end{align*}
    \end{proof}

Similarly, we can bound the robustness of \binmech.
\begin{lemma} \label{lem:binaryrobust}
    For any $\gamma \in (0,1]$, $\binmech_\gamma$ achieve $1+1/\gamma$ robustness.
\end{lemma}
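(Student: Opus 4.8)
The plan is to mirror the structure of the consistency proof (Lemma~\ref{lem:binaryconsist}): reduce to a single case by symmetry, bound the robustness ratio by an explicit one-variable rational function, and then maximize that function. First I would note that swapping the roles of $c_0$ and $c_1$ — which swaps $P$ and $N$ and interchanges the two branches of $\binmech_\gamma$ — is a symmetry of the problem, so it suffices to bound the ratio $R(\binmech_\gamma(S,\Tilde{c}),S)/R(c^*,S)$ when the optimal classifier is $c^* = c_1$. If the advice is correct, $\Tilde{c} = c_1$, the robustness ratio coincides with the consistency ratio and is at most $1+\gamma$ by Lemma~\ref{lem:binaryconsist}; since $\gamma \le 1$ gives $1+\gamma \le 1 + 1/\gamma$, this case already lies within the claimed bound. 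The only remaining, and binding, case is incorrect advice, $\Tilde{c} = c_0$ while $c^* = c_1$.

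For that case I would write the mechanism's risk using the $\Tilde{c}=c_0$ branch,
\[ R(\binmech_\gamma(S, c_0), S) = \frac{P^2}{P^2 + (N/\gamma)^2}\, R(c_1, S) + \frac{(N/\gamma)^2}{P^2 + (N/\gamma)^2}\, R(c_0, S), \]
and then substitute $r^* = R(c_1,S)$, $R(c_0,S) = 1 - r^*$, and $N = 1-P$. Because the coefficient multiplying $R(c_0,S)$ is nonnegative, I can replace $1 - r^*$ by its upper bound $\tfrac{1+P}{1-P}r^*$ from the $c^*=c_1$ case of Lemma~\ref{lem:binaryhelper}. After cancelling the $(1-P)$ factor and multiplying numerator and denominator by $\gamma^2$, this yields $\frac{R(\binmech_\gamma(S,c_0),S)}{r^*} \le g_\gamma(P)$ with
\[ g_\gamma(P) = \frac{1 - (1-\gamma^2)P^2}{(1+\gamma^2)P^2 - 2P + 1}. \]

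It then remains to show $g_\gamma(P) \le 1 + 1/\gamma$ for all $P \in [0,1]$, which I would establish by elementary calculus exactly as in the consistency proof. The endpoints give $g_\gamma(0) = g_\gamma(1) = 1$, and differentiating shows the numerator of $g_\gamma'(P)$ is proportional to $(1-\gamma^2)P^2 - 2P + 1$, whose roots are $\tfrac{1}{1-\gamma}$ and $\tfrac{1}{1+\gamma}$; only $P^\star = \tfrac{1}{1+\gamma}$ lies in $[0,1]$. Evaluating $g_\gamma(P^\star)$ gives precisely $1 + 1/\gamma$, so this interior critical point is the maximum and the bound holds (with equality there). The symmetric case $c^*=c_0$, $\Tilde{c}=c_1$ follows identically via the $N$-version of Lemma~\ref{lem:binaryhelper}, completing the proof.

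The main obstacle is the one-variable optimization. The delicate parts are setting up $g_\gamma$ correctly after both the substitution and the application of the helper lemma, and verifying that its maximum over $[0,1]$ is attained at the interior critical point $P^\star = 1/(1+\gamma)$ with value exactly $1 + 1/\gamma$ — rather than at an endpoint or at the spurious root $1/(1-\gamma) > 1$ that falls outside the feasible range. Everything else is a routine mirror of the derivative analysis already carried out for consistency, so the work is in making the arithmetic check out.
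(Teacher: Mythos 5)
Your proof is correct and follows essentially the same route as the paper's: reduce to the incorrect-advice case via the consistency lemma, apply Lemma~\ref{lem:binaryhelper}, and maximize the resulting one-variable rational function at the interior critical point $\frac{1}{1+\gamma}$ to get $1+1/\gamma$. The only cosmetic difference is that you fix $c^* = c_1$ and take wrong advice $c_0$, while the paper fixes $\Tilde{c} = c_1$ and takes optimum $c_0$; by the $P \leftrightarrow N$ symmetry these give literally the same function to optimize, so the arguments coincide.
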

\begin{proof}
    Assume WLOG that $\Tilde{c} = c_1$, then we can express the risk of $\binmech_\gamma$ as follows:
    \[R(\binmech_\gamma(S, \Tilde{c}),S) = \frac{(P/\gamma)^2}{(P/\gamma)^2+N^2}\cdot R(c_1, S) + \frac{N^2}{(P/\gamma)^2+N^2}\cdot R(c_0, S).\]
    By Lemma~\ref{lem:binaryconsist} we know that if $c^* = c_1$ this is bounded above by $(1+\gamma)\cdot R(c_1, S) = (1+\gamma)\cdot r^*$.
    Consider the other case where $c^* = c_0$, so $r^* = R(c_0, S)$ is the minimum risk achievable for instance $S$ and observe that $R(c_1, S) = 1-r^*$. Substituting this, as well as $P = 1-N$, we get that
    \begin{align*}
        R(\binmech_\gamma(S, \Tilde{c}), S) & = \frac{((1-N)/\gamma)^2}{((1-N)/\gamma)^2+N^2}\cdot (1-r^*) + \frac{N^2}{((1-N)/\gamma)^2+N^2}\cdot r^* \\
        & \leq \frac{((1-N)/\gamma)^2}{((1-N)/\gamma)^2+N^2}\cdot \frac{1+N}{1-N}\cdot r^* + \frac{N^2}{((1-N)/\gamma)^2+N^2}\cdot r^* \\
        & = \frac{(1-N)^2}{(1-N)^2 + \gamma^2N^2}\cdot \frac{1+N}{1-N}\cdot r^* + \frac{\gamma^2N^2}{(1-N)^2+\gamma^2N^2}\cdot r^* \\
        & = \frac{1+(\gamma^2-1)N^2}{1-2N+(\gamma^2+1)N^2}\cdot r^*.
    \end{align*}
     The inequality comes from applying Lemma~\ref{lem:binaryhelper}. Let us then define $h_\gamma(N) = \frac{1+(\gamma^2-1)N^2}{1-2N+(\gamma^2+1)N^2}$. Taking the derivative we get that
    \[\frac{\partial h_\gamma}{\partial N}(N) = \frac{2-4N-2(\gamma^2-1)\cdot N^2}{(1-2N+(\gamma^2+1)N^2)^2}.\]
    Observe that for any $\gamma \in (0,1]$ the denominator is positive on the domain $N \in (0,1)$, and analyzing the quadratic formula in the numerator shows that $\frac{\partial h_\gamma}{\partial N}$ is equal to $0$ at $N = \frac{1}{1+\gamma}$, negative for $N \in (0, \frac{1}{1+\gamma})$, and positive for $N \in (\frac{1}{1+\gamma}, 1)$. Thus, $h_\gamma(N)$ achieves its maximum (over the domain $N\in [0,1]$) at $N = \frac{1}{1+\gamma}$. This allows us to achieve our robustness bound:
     \begin{align*}
         R(\binmech_\gamma(S, \Tilde{c}), S) & \leq h_\gamma\left(\frac{1}{1+\gamma}\right)\cdot r^* \\
         & = \frac{1+(\gamma^2-1)(\frac{1}{1+\gamma})^2}{1-2\frac{1}{1+\gamma} + (\gamma^2+1)(\frac{1}{1+\gamma})^2}\cdot r^* = \left(1+1/\gamma\right)\cdot r^*.
     \end{align*}
    Observe that $\gamma \leq 1/\gamma$ for all $\gamma \in (0,1]$, so \binmech~ achieves $\max\{1+\gamma, 1+1/\gamma\} = 1+1/\gamma$ robustness.
\end{proof}
Combining Lemma~\ref{lem:binaryconsist} and Lemma~\ref{lem:binaryrobust} completes the proof of Theorem~\ref{thm:randtwolabels}. We also get the following results for classification over any two labelings $\{c',c''\}$ by applying Theorem~\ref{thm:binto01} to the guarantees in Theorem~\ref{thm:randtwolabels}.

\begin{theorem}
    There exists a strategyproof binary decision mechanism over arbitrary labelings $\{c',c''\}$ with parameter $\gamma \in (0,1]$ that, using $\binmech_\gamma$ as a blackbox, is $1+\gamma$ consistent and $1 + 1/\gamma$ robust.
\end{theorem}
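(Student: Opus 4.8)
The plan is to obtain this theorem as a direct corollary of two results already established in the excerpt: the blackbox reduction of Theorem~\ref{thm:binto01} and the performance guarantees for $\binmech_\gamma$ in Theorem~\ref{thm:randtwolabels}. Concretely, I would instantiate the reduction with $\mathcal{M} = \binmech_\gamma$ and read off the resulting bounds.

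First I would check that $\binmech_\gamma$ meets the hypotheses demanded by Theorem~\ref{thm:binto01}: it must be a strategyproof mechanism for the $\{c_0, c_1\}$-classification problem with known consistency and robustness. This is exactly the content of Theorem~\ref{thm:randtwolabels}, which for every $\gamma \in (0,1]$ establishes that $\binmech_\gamma$ is strategyproof, $1+\gamma$ consistent, and $1+1/\gamma$ robust. I therefore set $\alpha = 1+\gamma$ and $\beta = 1 + 1/\gamma$ in the reduction.

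Second, applying Theorem~\ref{thm:binto01} with these values yields, for any pair of labelings $\{c', c''\}$, a strategyproof mechanism $\mathcal{M}'$ that invokes $\binmech_\gamma$ as a blackbox and inherits the same $\alpha = 1+\gamma$ consistency and $\beta = 1 + 1/\gamma$ robustness. Since $\binmech_\gamma$ is randomized, $\mathcal{M}'$ is likewise randomized; the determinism-preservation clause of the reduction is simply not invoked here. This $\mathcal{M}'$ is precisely the mechanism asserted to exist, completing the argument.

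There is no genuine obstacle in this final step beyond confirming that the input interface of $\binmech_\gamma$ matches the $\{c_0, c_1\}$-classification format assumed by the reduction, since all the substantive technical work already lives in the proofs of the two cited theorems. The one point worth verifying explicitly is that \emph{strategyproofness}, and not merely the two approximation ratios, is transferred by the reduction; this is guaranteed directly by the statement of Theorem~\ref{thm:binto01}, so the composition is immediate.
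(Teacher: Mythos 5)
Your proposal is correct and is exactly the paper's own argument: the paper derives this theorem by applying the blackbox reduction of Theorem~\ref{thm:binto01} to the guarantees of Theorem~\ref{thm:randtwolabels}, with $\alpha = 1+\gamma$ and $\beta = 1+1/\gamma$. Your additional checks (that strategyproofness transfers and that determinism-preservation is simply not invoked for a randomized mechanism) are consistent with how the reduction is stated and proved.
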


\section{The learning-theoretic setting}\label{sec:learning}

We have developed mechanisms for when each agent has a finite set of input points over which they have preferences and incur personal risk, which we refer to as the \emph{decision-theoretic} setting. In this section, we extend our results to the \emph{learning-theoretic} setting where each agent $i$ has a (publicly known) distribution $\mathcal{D}_i$ over input space $\mathcal{X}$ representing the relative interest that agent has in each point, as well as a private function $y_i: \mathcal{X} \to \mathcal{Y}$. His personal risk over a given function is now defined
$R_i(f) = \mathbb{E}_{x \sim \mathcal{D}_i}[\ell(f(x), y_i(x))]$. Global risk is the risk of an agent chosen uniformly at random, denoted $R(f) = \frac{1}{n}\sum_{i = 1}^nR_i(f)$. 

Our goal is to design mechanisms to minimize $R(f)$ over a given family of functions $\mathcal{F}$. However, this problem is intractable for general distributions $\mathcal{D}_i$, so we consider the empirical risk minimization problem. For each agent $i$, we create dataset $S_i$ as a sampling of $|S_i|$ points $\{x_{i,1}, \dots, x_{i,|S_i|}\}$ drawn iid from $\mathcal{D}_i$ with agents' true labelings $y_i(x_{i,j})$, which together form agent types $S_i = \{(x_{i,j}, y_{i,j})\}_{j \in [|S_i|]}$. We now wish to use the empirical risk, as defined in the decision-theoretic setting, as a proxy for the risk with respect to the distributions, or \emph{statistical risk}.  

It is easy to see that no mechanism that only uses sampled data is strategyproof; for example, in binary classification, an agent may prefer label 0 for the vast majority of points in $\mathcal{X}$, but by chance, the sampled points $X$ may be ones he labels 1. We can, however, ensure that the amount an agent may gain through misreport is negligible by proving $\varepsilon$-strategyproofness and $\varepsilon$-group-strategyproofness for small $\varepsilon$, at which point we assume agents do not bother to misreport.

Strategyproof mechanisms that optimize with respect to empirical risk are likely to perform well (both approximation-wise and strategyproof-wise) in the learning setting if the difference between the two risks is close with high probability. We can show that this occurs when data set $S$ is composed of sufficiently large samples $S_i$ from each $\mathcal{D}_i$, applying PAC-learning techniques using Rademacher complexity\citep{bartlett2002rademacher} to get the following theorem.

\begin{restatable}{rThm}{thmLearning} \label{thm:learning}
Let $\varepsilon, \delta > 0$ and consider function class $\mathcal{F}$ with $O(\sqrt{m})$ Rademacher complexity with respect to $m$ samples from distributions $\mathcal{D}_i$. Consider samples $S_i$ with size $m = \Theta(\log(n/\delta)/\varepsilon^2)$ of iid points drawn from $\mathcal{D}_i$ for all $i \in [n]$ and dataset $S = \uplus_{i = 1}^nS_i$. Given mechanism $\mathcal{M}$ taking as input $S$ and an advice, with probability $1 - \delta$ the following holds:
    \begin{itemize}[leftmargin=*]
        \item if $\mathcal{M}$ is (group)strategyproof for all advice $\Tilde{f} \in \mathcal{F}$ with respect to agents with personal risk $R_i(f, S_i)$, then $\mathcal{M}$ is $\varepsilon$-(group)strategyproof with respect to agents with personal risk $R_i(f)$, and
        \item if $\mathcal{M}$ is $\alpha$ consistent and $\beta$ robust with respect to minimizing $R(f, S)$ and $f^* , f^*_S\in \mathcal{F}$ are optimal solutions with respect to $R(f)$ and $R(f, S)$ respectively, then $R(\mathcal{M}(S, f^*_S)) \leq \alpha \cdot R(f^*) + \frac{\alpha + 1}{2} \varepsilon$ and $R(\mathcal{M}(S, \Tilde{f})) \leq \beta \cdot R(f^*) + \frac{\beta + 1}{2} \varepsilon$ for any advice $\Tilde{f} \in \mathcal{F}$.
    \end{itemize}
\end{restatable}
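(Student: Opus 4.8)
The plan is to reduce both bullets to a single uniform-convergence statement and then push the mechanism's decision-theoretic guarantees through it. Concretely, the key lemma I would establish first is: with probability at least $1-\delta$ over the draw of $S$, we have $|R_i(f, S_i) - R_i(f)| \le \varepsilon/2$ simultaneously for every agent $i \in [n]$ and every $f \in \mathcal{F}$. To prove this I would apply a standard Rademacher generalization bound to each agent separately. For a fixed agent $i$, the sample $S_i$ consists of $m$ i.i.d.\ draws from $\mathcal{D}_i$ labeled by the true $y_i$, so $R_i(\cdot, S_i)$ is exactly the empirical average of the loss while $R_i(\cdot)$ is its expectation; since the loss is bounded and Lipschitz in its first argument (the absolute loss is $1$-Lipschitz, the $0$-$1$ loss is handled by the usual classification bound), the loss class inherits the $O(\sqrt m)$ Rademacher complexity of $\mathcal{F}$ via Talagrand's contraction lemma, and the uniform gap over $\mathcal{F}$ is $O(1/\sqrt m) + O(\sqrt{\log(1/\delta')/m})$ with probability $1-\delta'$. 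Taking $\delta' = \delta/n$ and union bounding over the $n$ agents keeps the failure probability at most $\delta$, and the choice $m = \Theta(\log(n/\delta)/\varepsilon^2)$ drives the gap below $\varepsilon/2$. Averaging the per-agent inequalities immediately yields the global version $|R(f, S) - R(f)| \le \varepsilon/2$ for all $f \in \mathcal{F}$. I would condition on this good event for the rest of the argument.

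For the first bullet (strategyproofness), the crucial observation is that the mechanism's outputs $f$ (truthful) and $\bar f$ (under a misreport $\bar S_i$) are data-dependent but always lie in $\mathcal{F}$, so the uniform bound applies to both. Comparing the empirical gain $R_i(f, S_i) - R_i(\bar f, S_i)$ with the statistical gain $R_i(f) - R_i(\bar f)$, two applications of the per-agent inequality show these differ by at most $\varepsilon$. If $\mathcal{M}$ is strategyproof with respect to $R_i(\cdot, S_i)$, the empirical gain is nonpositive, so the statistical gain is at most $\varepsilon$, which is exactly $\varepsilon$-strategyproofness. For the group version, suppose every member of a coalition $C$ gains at least $\varepsilon$ statistically; then each has nonnegative empirical gain, so empirical $0$-group-strategyproofness forces every member's empirical gain to be exactly $0$, and the $\le\varepsilon$ gap between the empirical and statistical gains then pins each member's statistical gain to exactly $\varepsilon$, matching the definition of $\varepsilon$-group-strategyproofness.

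For the second bullet (consistency and robustness), I would set $\hat f = \mathcal{M}(S, f^*_S)$ and chain the global uniform bound with the decision-theoretic guarantee. Consistency with advice $f^*_S$ gives $R(\hat f, S) \le \alpha\, R(f^*_S, S)$, and since $f^*_S$ minimizes empirical risk while $f^*$ minimizes statistical risk, $R(f^*_S, S) \le R(f^*, S) \le R(f^*) + \varepsilon/2$. Combining with $R(\hat f) \le R(\hat f, S) + \varepsilon/2$ yields $R(\hat f) \le \alpha R(f^*) + \tfrac{\alpha+1}{2}\varepsilon$. The identical chain with the robustness bound $R(\mathcal{M}(S,\tilde f), S) \le \beta\, R(f^*_S, S)$, valid for every advice $\tilde f \in \mathcal{F}$, gives $R(\mathcal{M}(S, \tilde f)) \le \beta R(f^*) + \tfrac{\beta+1}{2}\varepsilon$.

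The main obstacle is the first lemma rather than the translations, which are short once uniform convergence is available. The subtlety there is that the objects we must control---the mechanism's output under truthful and strategic reports, and the empirical minimizer $f^*_S$---are all selected as a function of the same sample, so a pointwise concentration bound is insufficient and we genuinely need convergence that is uniform over all of $\mathcal{F}$; this is what forces the use of Rademacher complexity and, through the per-agent union bound, produces the $\log(n/\delta)$ factor in the sample size. A secondary technical point to handle carefully is passing from the Rademacher complexity of $\mathcal{F}$ to that of the induced loss class and ensuring the loss is bounded (for regression this requires a bounded label range, whereas the $0$-$1$ loss satisfies it automatically).
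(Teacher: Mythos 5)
Your proposal is correct and follows essentially the same route as the paper: condition on the event that empirical and statistical risks agree uniformly over $\mathcal{F}$ to within $\varepsilon/2$ (per agent and globally), then translate the empirical (group-)strategyproofness and consistency/robustness guarantees through the same chains of inequalities, yielding exactly the $\frac{\alpha+1}{2}\varepsilon$ and $\frac{\beta+1}{2}\varepsilon$ additive terms. The only difference is that the paper obtains the uniform-convergence lemma and the strategyproofness translation by citing results of Dekel et al., whereas you re-derive them (via Rademacher complexity, contraction, and a union bound over agents), which is a matter of presentation rather than of approach.
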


We first introduce a theorem that, with large enough samples $S_i$, bounds the difference between the empirical and statistical risks.
\begin{theorem}[\citet{dekel2010incentive}] \label{thm:complexity}
    Let $\varepsilon, \delta>0$. If function class $\mathcal{F}$ has $O(\sqrt{m})$ Rademacher complexity with respect to $m$ samples from distribution $\mathcal{D}_i$, then for samples $S_i$ composed of $m_i = \Theta(\log(1/\delta)/\varepsilon^2)$ iid draws from $\mathcal{D}_i$, it holds that
    \begin{enumerate}
        \item $\PP(|R_i(f) - R_i(f, S_i)| \leq \varepsilon \text{ for all }f \in \mathcal{F}) \geq 1 - \delta$ and 
        \item $\PP(|R(f) - R(f, S)| \leq \varepsilon \text{ for all } f \in \mathcal{F}) \geq 1 - n\delta.$
    \end{enumerate}
\end{theorem}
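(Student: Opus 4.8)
The plan is to treat this as a standard uniform-convergence statement and prove it by the symmetrization/Rademacher recipe applied per agent, then combined across agents by a union bound. First I would fix an agent $i$ and study the uniform deviation $g(S_i) = \sup_{f \in \mathcal{F}} |R_i(f) - R_i(f, S_i)|$, which is exactly the quantity Part~1 controls. Both loss functions in play are bounded: the $0$-$1$ loss lies in $[0,1]$, and for the absolute loss I would use the standing assumption that $\mathcal{Y}$ and the range of $\mathcal{F}$ lie in a bounded interval, so $\ell$ is bounded by some constant $c$. Boundedness is what makes $g$ a function of bounded differences: replacing a single sample $x_{i,j}$ shifts the empirical average $R_i(f, S_i)$ by at most $c/m$ for every $f$, hence shifts $g(S_i)$ by at most $c/m$.

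With bounded differences in hand, I would apply McDiarmid's inequality to get that $g(S_i)$ concentrates around its mean: with probability at least $1 - \delta$, $g(S_i) \le \mathbb{E}[g(S_i)] + c\sqrt{\log(1/\delta)/(2m)}$. The expectation is bounded by the standard symmetrization lemma, $\mathbb{E}[g(S_i)] \le 2\,\mathbb{E}[\mathfrak{R}]$, where $\mathfrak{R}$ is the Rademacher complexity of the loss-composed class on $m$ points. By the $O(\sqrt{m})$ Rademacher-complexity hypothesis, the normalized per-sample complexity $\frac{1}{m}\mathbb{E}[\mathfrak{R}]$ is $O(1/\sqrt{m})$, and composing with the $1$-Lipschitz absolute loss (via the Ledoux--Talagrand contraction step) or handling the $0$-$1$ loss through the usual binary-class argument does not inflate this. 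Combining the two bounds gives $g(S_i) \le O(1/\sqrt{m}) + c\sqrt{\log(1/\delta)/(2m)}$ with probability $1-\delta$; choosing $m = \Theta(\log(1/\delta)/\varepsilon^2)$ with a large enough hidden constant drives the right-hand side below $\varepsilon$, which is Part~1.

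For Part~2 I would apply Part~1 simultaneously to all $n$ agents. Since $S_1, \ldots, S_n$ are drawn independently, a union bound gives that, with probability at least $1 - n\delta$, the event $|R_i(f) - R_i(f, S_i)| \le \varepsilon$ holds for every $f \in \mathcal{F}$ and every $i \in [n]$ at once. On that event, using $R(f) = \frac{1}{n}\sum_i R_i(f)$ together with the fact that (for equal sample sizes $m$) the global empirical risk factors as $R(f, S) = \frac{1}{n}\sum_i R_i(f, S_i)$, the triangle inequality yields $|R(f) - R(f, S)| \le \frac{1}{n}\sum_i |R_i(f) - R_i(f, S_i)| \le \varepsilon$ uniformly over $f$. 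The only genuinely delicate point, and hence where I would be most careful, is the normalization convention in the $O(\sqrt{m})$ hypothesis: once one reads it as an unnormalized $O(\sqrt{m})$ bound so that the per-sample complexity is $O(1/\sqrt{m})$, and confirms the loss-composition step is harmless, the sample-size $m = \Theta(\log(1/\delta)/\varepsilon^2)$ falls out immediately. The remainder is a mechanical instantiation of textbook concentration, so I would keep it brief and cite the standard symmetrization and McDiarmid results rather than re-derive them.
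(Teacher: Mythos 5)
The paper does not prove this statement at all---it is imported verbatim from \citet{dekel2010incentive}---and your symmetrization-plus-McDiarmid argument, composed with the Lipschitz contraction step and a union bound over the $n$ agents, is precisely the standard proof underlying that citation, including the correct reading of the $O(\sqrt{m})$ normalization and the equal-sample-size decomposition $R(f,S) = \frac{1}{n}\sum_{i=1}^n R_i(f,S_i)$ that Part~2 needs (and which matches how the theorem is actually invoked in Section~\ref{sec:learning}, where all $|S_i| = m$). Your proposal is correct; the one assumption you add explicitly, boundedness of the loss (required for the bounded-differences step), is likewise assumed in the cited source, so nothing is lost.
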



\begin{proof}[Proof of Theorem~\ref{thm:learning}]

  By Theorem~\ref{thm:complexity}, we know that $|R_i(f) - R_i(f, S_i)| \leq \frac{\varepsilon}{2}$ for all $i \in [n]$ and $|R(f) - R(f, S)| \leq \frac{\varepsilon}{2}$ hold for all $f \in \mathcal{F}$ with probability at least $1 - \delta$. Assume that they do hold. Then the first statement comes directly from Theorem 6.1~\citep{dekel2010incentive} by fixing the advice and considering the subsequent without advice mechanism.

    For the consistency result in the second statement, given correct advice (with respect to empirical risk) $\Tilde{f} = f^*_S$ let $M(S, \Tilde{f}) = f$. Then,
    \[R(f) \leq R(f, S) + \frac{\varepsilon}{2} \leq \alpha\cdot R(f^*_S, S) + \frac{\varepsilon}{2} \leq \alpha \cdot R(f^*, S) + \frac{\varepsilon}{2}, \]
    where the inequalities come from applying the global risk assumption, consistency of $M$, and the fact that $f^*_S$ minimizes $R(f, S)$. We finally apply that $|R(f^*) - R(f^*, S)| \leq \frac{\varepsilon}{2}$ to complete the proof.

    A similar argument can be applied for the robustness result. 
    \end{proof}

We can apply this to our mechanisms, which we do below for \mech~by applying Theorem~\ref{thm:robustconsist}. 

\begin{theorem}
    Consider fitting a constant function over input space $\mathcal{X} = \mathbb{R}^k$ and output space $\mathcal{Y} = \arbset\subseteq \mathbb{R}$. Let $\varepsilon, \delta > 0$ and consider samples $S_i$ with odd $m = \Theta(\log(n/\delta)\varepsilon^2)$ drawn from distributions $\mathcal{D}_i$ with dataset $S = \uplus_{i = 1}^nS_i$. Then with probability $1 - \delta$, the following holds:
    \begin{itemize}
        \item for all $\gamma \in [0,2]$ and advice $\Tilde{a} \in {\arbset}$, $\mech_{\gamma, {\arbset}}(S, \Tilde{a})$ is $\varepsilon$-group-strategyproof with respect to agents with personal risk $R_i(f)$, and
        \item for all $\gamma \in (0, 2]$ and given $a^*_S \in \arg\min_{a \in {\arbset}}R(a, S),$ $a^* \in \arg\min_{a \in {\arbset}}R(a)$, $\mech_{\gamma, {\arbset}}(S, a^*_S)$ achieves approximation $(1 + \gamma)\cdot\left(1 + \Omega(1/n) / R(a^*)\right)$ and for all $\Tilde{a} \in {\arbset}$, $\mech_{\gamma, {\arbset}}(S, \Tilde{a})$ achieves approximation $\left(1 + 4/\gamma\right)\cdot \left(1 + \Omega(1/n)/R(a^*)\right)$. 
    \end{itemize}
\end{theorem}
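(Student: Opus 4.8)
The plan is to obtain this as a direct corollary of Theorem~\ref{thm:learning} instantiated with the \mech~mechanism, so the bulk of the work is checking that \mech~and the class of constant functions satisfy that theorem's hypotheses and then converting its additive PAC-style guarantee into the multiplicative approximation form stated here.

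First I would verify the three hypotheses needed to invoke Theorem~\ref{thm:learning}. (i) Group-strategyproofness: since the points are drawn i.i.d.\ and we are free to choose the sample size, we take $m$ odd, so Lemma~\ref{lem:PFASP} gives that $\mech_{\gamma,\arbset}$ is group-strategyproof with respect to the empirical personal risks $R_i(f, S_i)$ for every advice $\tilde a$, which is exactly the first precondition of Theorem~\ref{thm:learning}. (ii) Consistency and robustness: Theorem~\ref{thm:robustconsist} gives that $\mech_{\gamma,\arbset}$ is $(1+\gamma)$ consistent and $(1+4/\gamma)$ robust with respect to $R(f,S)$, so I would apply Theorem~\ref{thm:learning} with $\alpha = 1+\gamma$ and $\beta = 1+4/\gamma$. (iii) Rademacher complexity: I would confirm that the class $\mathcal{F}_{\arbset}$ of constant functions has $O(\sqrt m)$ Rademacher complexity under the absolute loss, the remaining assumption of Theorem~\ref{thm:learning}; this is a standard fact for this one-parameter class and is used already by \citet{dekel2010incentive} for exactly constant and homogeneous linear functions. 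The one point requiring care is boundedness of the loss: if $\arbset$ is unbounded the loss is unbounded and the complexity bound can fail, so I would either restrict to a bounded range of values or invoke the same boundedness convention used by \citet{dekel2010incentive}.

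With the hypotheses in place, the first bullet ($\varepsilon$-group-strategyproofness with respect to the statistical personal risks $R_i(f)$) is immediate from the first conclusion of Theorem~\ref{thm:learning}. For the second bullet I would take the additive conclusions $R(\mech_\gamma(S,a^*_S)) \le (1+\gamma) R(a^*) + \tfrac{(1+\gamma)+1}{2}\varepsilon$ and $R(\mech_\gamma(S,\tilde a)) \le (1+4/\gamma) R(a^*) + \tfrac{(1+4/\gamma)+1}{2}\varepsilon$ and rewrite each multiplicatively by factoring out $R(a^*)$, e.g.
\[
R(\mech_\gamma(S,a^*_S)) \le (1+\gamma)\,R(a^*)\left(1 + \frac{(2+\gamma)\varepsilon}{2(1+\gamma)R(a^*)}\right),
\]
so the ratio is $(1+\gamma)\bigl(1 + O(\varepsilon)/R(a^*)\bigr)$ and symmetrically $(1+4/\gamma)\bigl(1 + O(\varepsilon)/R(a^*)\bigr)$ in the robust case. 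Choosing the accuracy parameter $\varepsilon$ of order $1/n$, which fixes the sample size $m = \Theta(\log(n/\delta)/\varepsilon^2)$, turns the overhead into a term of order $1/n$, matching the claimed $\Omega(1/n)/R(a^*)$ factor inside the parentheses.

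The main obstacle I anticipate is not any single deep step but the bookkeeping that ties the three moving quantities $\varepsilon$, $m$, and $n$ together consistently across both bullets: the same $\varepsilon$ controls the strategyproofness slack and the approximation overhead, and one must pick it (and hence $m$) so that the additive error collapses to the claimed $1/n$-order term while keeping the total failure probability at $\delta$ via the union bound over the $n$ per-agent concentration events in Theorem~\ref{thm:complexity}. A secondary subtlety is ensuring the Rademacher-complexity assumption genuinely applies to constant functions over $\arbset \subseteq \mathbb{R}$ (the boundedness issue noted above), since everything downstream is a clean instantiation once that assumption is discharged.
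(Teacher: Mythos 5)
Your proposal takes essentially the same route as the paper, which obtains this theorem as a direct instantiation of Theorem~\ref{thm:learning} using the consistency and robustness guarantees of Theorem~\ref{thm:robustconsist} together with the group-strategyproofness of $\mech$ under odd sample sizes from Lemma~\ref{lem:PFASP}. The extra bookkeeping you describe --- verifying the Rademacher-complexity hypothesis and rewriting the additive slack $\tfrac{\alpha+1}{2}\varepsilon$ in the multiplicative form $\alpha\bigl(1 + \tfrac{\alpha+1}{2\alpha}\varepsilon/R(a^*)\bigr)$ so that the overhead matches the stated $1/n$-order term --- is precisely the routine work the paper leaves implicit, so there is no gap.
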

This result encapsulates selecting over all constant values in $\mathbb{R}$ and $\{0,1\}$ for the $\{c_0,c_1\}$-classification problem. Observe that the odd size $m$ condition is  reasonable since $S_i$ are now samples from a distribution. We obtain similar extension for \lmech.

\begin{theorem}
    Consider fitting a homogeneous linear function over input and output spaces $\mathcal{X} = \mathcal{Y} = \mathbb{R}$. Let $\varepsilon, \delta > 0$ and consider samples $S_i$ with size $m = \Theta(\log(n/\delta)\varepsilon^2)$ drawn from distributions $\mathcal{D}_i$ with dataset $S = \uplus_{i = 1}^nS_i$. Then with probability $1 - \delta$ the following holds:
    \begin{itemize}
        \item for all $\gamma \in [0,2]$ and advice $\Tilde{f} \in \mathcal{F}_{L}$, $\lmech_{\gamma}(S, \Tilde{f})$ is $\varepsilon$-strategyproof with respect to agents with personal risk $R_i(f)$ if $\lmech_{\gamma}(S,\Tilde{f})$ is strategyproof with respect to agents' with personal risk $R_i(f, S_i)$, and
        \item for all $\gamma \in (0, 2]$ and given $f^*_S \in \arg\min_{f \in \mathcal{F}_{L}}R(f, S)$ and $f^* \in \arg\min_{f \in \mathcal{F}_{L}}R(f)$, $\lmech_{\gamma}(S, f^*_S)$ achieves approximation $(1 + \gamma)\cdot\left(1 + \Omega(1/n)/ R(f^*)\right)$ and for all $\Tilde{f} \in \mathcal{F}_{L}$, $\lmech_{\gamma}(S, \Tilde{f})$ achieves approximation $\left(1+4/\gamma\right)\cdot \left(1 + \Omega(1/n)/R(f^*)\right)$. 
    \end{itemize}
\end{theorem}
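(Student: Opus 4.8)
The plan is to obtain this statement as a direct corollary of the empirical-risk guarantees for $\lmech_\gamma$ (Theorem~\ref{thm:LPFAguarantees}) combined with the generic learning-theoretic reduction of Theorem~\ref{thm:learning}; no new argument about the mechanism itself is needed. First I would record that, by Theorem~\ref{thm:LPFAguarantees}, $\lmech_\gamma$ is $1+\gamma$ consistent and $1+4/\gamma$ robust with respect to the empirical global risk $R(\cdot, S)$, and that it is (group)strategyproof with respect to the empirical personal risks $R_i(\cdot, S_i)$ exactly under the condition on the mapped instance $S_C$ isolated in Lemma~\ref{lem:strategyprooflin}. This last point is precisely why the strategyproofness conclusion must be stated \emph{conditionally}: unlike the constant-function case, where odd $m$ forces each $S_i$ to have a unique median and hence yields unconditional group-strategyproofness, the homogeneous-linear mapping $S \mapsto S_C$ need not produce unique personal-risk minimizers, so we can only promise $\varepsilon$-strategyproofness on those instances where the empirical mechanism is already strategyproof.

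Second, I would verify the hypothesis of Theorem~\ref{thm:learning}, namely that $\mathcal{F}_L$ has $O(\sqrt m)$ Rademacher complexity with respect to $m$ samples from each $\mathcal{D}_i$ under the absolute loss. The one place care is required is that $\mathcal{X} = \mathcal{Y} = \mathbb{R}$ is a priori unbounded, so I would invoke the standard boundedness assumption on the supports of the $\mathcal{D}_i$ and on the labels; with bounded inputs and outputs, the class of scalar homogeneous linear predictors composed with the $1$-Lipschitz absolute loss has Rademacher complexity scaling as $O(\sqrt m)$ by the contraction inequality, which lets us invoke Theorem~\ref{thm:complexity}. With sample size $m = \Theta(\log(n/\delta)/\varepsilon^2)$, this guarantees that with probability at least $1-\delta$ the events $|R_i(f) - R_i(f, S_i)| \le \varepsilon/2$ for all $i$ and $|R(f) - R(f,S)| \le \varepsilon/2$ hold simultaneously for all $f \in \mathcal{F}_L$.

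Third, conditioned on these high-probability events, I would apply Theorem~\ref{thm:learning} with $\mathcal{M} = \lmech_\gamma$, $\alpha = 1+\gamma$, and $\beta = 1+4/\gamma$. Its first bullet immediately yields the $\varepsilon$-strategyproofness conclusion (inheriting the conditional hypothesis noted above), and its second bullet yields the additive statistical-risk bounds
\begin{align*}
R(\lmech_\gamma(S, f^*_S)) &\le (1+\gamma)\,R(f^*) + \tfrac{2+\gamma}{2}\,\varepsilon,\\
R(\lmech_\gamma(S, \Tilde{f})) &\le (1+4/\gamma)\,R(f^*) + \bigl(1+\tfrac{2}{\gamma}\bigr)\,\varepsilon .
\end{align*}
Finally I would convert these additive bounds into the multiplicative form of the statement by dividing through by $R(f^*)$, so that each additive error term reappears as a correction factor: e.g.\ $R(\lmech_\gamma(S,f^*_S))/R(f^*) \le (1+\gamma)\bigl(1 + O(\varepsilon)/R(f^*)\bigr)$, and likewise the robustness ratio is at most $(1+4/\gamma)\bigl(1 + O(\varepsilon)/R(f^*)\bigr)$. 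Taking $\varepsilon$ on the order afforded by the chosen sample size gives the $\Omega(1/n)/R(f^*)$ correction matching the form of the preceding constant-function theorem.

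I expect the only genuine obstacle to be the Rademacher-complexity verification for linear predictors over the unbounded real line, which forces the (implicit) boundedness assumption; everything else is a mechanical composition of two already-established theorems, with the sole modeling subtlety being the conditional nature of the strategyproofness transfer for $\lmech_\gamma$.
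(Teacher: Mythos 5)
Your proposal is correct and takes essentially the same route as the paper, which proves this theorem in one line by applying the empirical guarantees of $\lmech_\gamma$ (Theorem~\ref{thm:LPFAguarantees}) to the generic learning-theoretic reduction of Theorem~\ref{thm:learning}, with the same remark that the strategyproofness transfer is conditional (the paper notes odd-size $S_{C,i}$ as a sufficient condition, matching your appeal to Lemma~\ref{lem:strategyprooflin}). Your treatment is in fact slightly more careful than the paper's, since you explicitly verify the $O(\sqrt{m})$ Rademacher-complexity hypothesis for $\mathcal{F}_L$ under the absolute loss and spell out the additive-to-multiplicative conversion, both of which the paper leaves implicit.
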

This follows directly from applying Theorem~\ref{thm:linear} regarding the performance of \lmech~to Theorem~\ref{thm:learning}. Note that a sufficient condition for $\mech_\gamma(S,\Tilde{f})$ to be strategyproof is if each $S_{C,i}$ in $S_C$ as defined in the mechanism has odd size. Finally, we extend the guarantees of \binmech. 
\begin{theorem}
    Consider selecting from the two constant binary functions $c_0, c_1$ over input space $\mathcal{X} = \mathbb{R}^k$ and output space $\mathcal{Y} = \{0, 1\}$. Let $\varepsilon, \delta > 0$ and consider samples $S_i$ with size odd $m = \Theta(\log(n/\delta)\varepsilon^2)$ drawn from distributions $\mathcal{D}_i$ with dataset $S = \uplus_{i = 1}^nS_i$. Then for all $\gamma \in (0,1]$, with probability $1 - \delta$ the following holds:
    \begin{itemize}
        \item for any advice $\Tilde{c} \in \{c_0,c_1\}$, $\binmech_{\gamma}(S,\Tilde{c})$ is $\varepsilon$-group-strategyproof with respect to agents with personal risk $R_i(f)$, and
        \item given $c^*_S \in \arg\min_{c \in \{c_0,c_1\}}R(c,S)$ and $c^* \in \arg\min_{c \in \{c_0,c_1\}}R(c)$, $\binmech_\gamma(S,c^*_S)$ achieves approximation $(1+\gamma)\cdot(1 + \Omega(1/n)/R(c^*))$ and for any $\Tilde{c} \in \{c_0,c_1\}$, $\binmech_\gamma(S,\Tilde{c})$ achieves approximation $(1 + 1/\gamma)\cdot(1 + \Omega(1/n)/R(c^*))$.
    \end{itemize}
\end{theorem}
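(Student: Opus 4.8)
The plan is to obtain this result as a direct corollary of the general transfer theorem (Theorem~\ref{thm:learning}) applied to the decision-theoretic guarantees of $\binmech_\gamma$ established in Theorem~\ref{thm:randtwolabels}, exactly mirroring the argument used for the constant-function extension immediately above. There are three things I need to supply: (i) that the hypotheses of Theorem~\ref{thm:learning} are met by $\binmech_\gamma$ on $\{c_0,c_1\}$-classification, (ii) that the transfer is valid for a \emph{randomized} mechanism, and (iii) the translation of the additive error bound produced by Theorem~\ref{thm:learning} into the multiplicative approximation form stated here.

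For (i), the function class is $\mathcal{F} = \{c_0,c_1\}$, which is finite of size two, so it trivially has $O(\sqrt{m})$ Rademacher complexity (indeed a union bound over two functions and $n$ agents already yields the uniform convergence guarantee of Theorem~\ref{thm:complexity}), and so the Rademacher hypothesis of Theorem~\ref{thm:learning} holds. By Theorem~\ref{thm:randtwolabels}, $\binmech_\gamma$ is strategyproof for every advice, is group-strategyproof whenever $m$ is odd, is $1+\gamma$ consistent, and is $1+1/\gamma$ robust, all with respect to the empirical risks $R_i(c,S_i)$ and $R(c,S)$. Since we draw samples of odd size $m$, the group-strategyproofness precondition is satisfied, so both bullets of Theorem~\ref{thm:learning} apply with $\alpha = 1+\gamma$ and $\beta = 1+1/\gamma$.

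For (ii), I would observe that once $S$ is drawn the output distribution of $\binmech_\gamma$ is fixed, and its expected statistical and empirical personal risks, $\mathbb{E}[R_i(c)]$ and $\mathbb{E}[R_i(c,S_i)]$, are convex combinations of the per-function risks over $c \in \{c_0,c_1\}$. Conditioning on the probability-$(1-\delta)$ event of Theorem~\ref{thm:complexity} that $|R_i(c) - R_i(c,S_i)| \le \varepsilon/2$ and $|R(c) - R(c,S)| \le \varepsilon/2$ hold simultaneously for both $c \in \{c_0,c_1\}$ and all agents, taking expectation over the mechanism's internal randomness preserves these inequalities for the expected risks. Hence the argument in the proof of Theorem~\ref{thm:learning} carries over verbatim to the randomized $\binmech_\gamma$, yielding $\varepsilon$-group-strategyproofness (first bullet) and the additive bounds $R(\binmech_\gamma(S,c^*_S)) \le (1+\gamma)R(c^*) + \tfrac{2+\gamma}{2}\varepsilon$ and $R(\binmech_\gamma(S,\tilde{c})) \le (1+1/\gamma)R(c^*) + \tfrac{2+1/\gamma}{2}\varepsilon$ (second bullet), with probability $1-\delta$.

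Finally, for (iii), dividing each additive bound by $R(c^*)$ gives approximation ratios $(1+\gamma) + \tfrac{2+\gamma}{2}\varepsilon/R(c^*)$ and $(1+1/\gamma) + \tfrac{2+1/\gamma}{2}\varepsilon/R(c^*)$; choosing $\varepsilon = \Theta(1/n)$ makes the additive penalties of order $1/n$, and factoring out $1+\gamma$ (resp. $1+1/\gamma$) rewrites these as $(1+\gamma)(1 + \Omega(1/n)/R(c^*))$ and $(1+1/\gamma)(1 + \Omega(1/n)/R(c^*))$, as claimed. The only genuinely delicate point is step (ii) — confirming that the expectation over the mechanism's randomness commutes with the uniform deviation bound — but because $\{c_0,c_1\}$ is finite this reduces to linearity of expectation over two terms and is immediate; the remainder is bookkeeping identical to the constant-function corollary, and I do not expect any substantive obstacle.
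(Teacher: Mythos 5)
Your proposal is correct and follows essentially the same route as the paper, which likewise obtains this result by directly applying the transfer theorem (Theorem~\ref{thm:learning}) to the decision-theoretic guarantees of $\binmech_\gamma$ in Theorem~\ref{thm:randtwolabels}, using the fact that the $0$-$1$ loss coincides with the absolute loss on $\{0,1\}$ labels. The paper states this in one line; your additional verifications (the trivial uniform-convergence bound for a two-element class, the commutation of the mechanism's internal randomness with the deviation bounds, and the additive-to-multiplicative bookkeeping) are exactly the details being implicitly invoked there.
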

This result follows directly from applying Theorem~\ref{thm:randtwolabels} to Theorem~\ref{thm:learning}, recognizing that binary loss is equivalent to absolute loss when using 0 and 1 as the labels. We only consider selecting between the two constant binary functions because they are well-defined over all of $\mathcal{X}$, whereas the general labelings we consider in the empirical binary classification setting are not.

\section*{Acknowledgments}
    The authors were supported by NSF grants CCF-2210501 and IIS-2147361.

\newpage

\bibliographystyle{abbrvnat}
\bibliography{biblio}

\newpage

\appendix
\section{Missing proofs from Section~\ref{sec:regression}} \label{appendix:regression}

\subsection{Missing proofs from Section~\ref{subsec:PFAneg} (lower bound on consistency-robustness tradeoff)}\label{appendix:constantLB}

\lemBoundByFunc*
\begin{proof}
    Recall that $S^n = \{Z_t(D',0)^{k+1}\uplus Z_t(D',D')^{n-k-1}\}$. Then we can perform the following series of computations.
\begin{align*}
\frac{R(a, S^n)}{R(D',S^n)} & \geq \frac{R(1, S^n)}{R(D', S^n)} = \frac{\frac{1}{n(2t+1)}\sum_{y \in S^n}|y - 1|}{\frac{1}{n(2t+1)}\sum_{y \in S^n}|y - D'|}\\
& = \frac{|\{y\in S^n: y = 0\}| + (D'-1)\cdot|\{y \in S^n: y = D'\}|}{D'\cdot |\{y \in S^n: y = 0\}|}\\
& = \frac{1}{D'} + \frac{D'-1}{D'}\cdot \frac{t(k+1) + (2t+1)(n-k-1)}{(t+1)(k+1)}\\
& \geq \frac{D'-1}{D'}\cdot \frac{t(k+1) + (2t+1)(n-k-1)}{(t+1)(k+1)}\\
& \geq \frac{D-1}{D}\cdot \frac{t(k+1) + (2t+1)(n-k-1)}{(t+1)(k+1)}.
\end{align*}
The last line comes from the fact that $D' \geq D$. We can then decompose the second term to get
\begin{align*}
    \frac{t(k+1) + (2t+1)(n-k-1)}{(t+1)(k+1)} & = 1 - \frac{1}{t+1} + \frac{2(n - k - 1)}{k+1} - \frac{n - k - 1}{(t+1)(k+1)} \\
    & \geq \frac{k + 2(n - k - 1)}{k+1} - \frac{n}{t(k+1)} \\
    & = \frac{k + 2(n - k - 1) - \frac{n}{t}}{k+1}.
\end{align*}
Substituting in $k = n\cdot\frac{\gamma}{\gamma + 2}$ and $t = \lceil n(\gamma + 2)\rceil$, we get that this equals
\begin{align*}
    \frac{n\cdot\frac{\gamma}{\gamma + 2} + 2(n - n\cdot\frac{\gamma}{\gamma + 2} - 1) - \frac{n}{\lceil n(\gamma + 2)\rceil}}{n\cdot\frac{\gamma}{\gamma + 2}+1} & = \frac{\gamma + 4 - \frac{2(\gamma + 2)}{n} - \frac{\gamma + 2}{\lceil n(\gamma + 2)\rceil}}{\gamma + \frac{\gamma + 2}{n}} \\
    & \geq \frac{4 + \gamma - \frac{8}{n} - \frac{\gamma + 2}{n(\gamma + 2)}}{\gamma + \frac{4}{n}} = \frac{4 + \gamma - \frac{9}{n}}{\gamma + \frac{4}{n}}. \qedhere
\end{align*}
\end{proof}

\subsection{Missing proofs from Section~\ref{subsec:PFAlinear} (homogeneous linear functions)}\label{appendix:linear}

\thmLinear*

We will first need to prove a series of supporting lemmas. Let us define the family of instances $S_L(n, k, t, z) = \{\{(t, 0), (t+1, 0)\}^k\uplus\{(t,0),(t+1,z)\}^{n-k}\}$ for $n, k, t, \in \mathbb{N}, k\leq n$ and observe that $(S_L(n,k,t,z))_C = S(n,k,t,z)$ as defined in the constant function setting. 

\begin{lemma} \label{lem:zerooptlin}
    For any $n, k, t \in \mathbb{N}$ such that $0 < k \leq n$ and $t \geq n$, and for any $z \in \mathbb{R}$, the optimal homogeneous linear function for instance $S_L(n, k, t, z)$ is $f_{L,0}$.
\end{lemma}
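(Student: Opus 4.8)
The plan is to avoid computing the weighted-median structure of $S_L(n,k,t,z)$ from scratch and instead reduce the claim to the already-established constant-function result via the $S \mapsto S_C$ mapping. The key fact stated just before the lemma is that $(S_L(n,k,t,z))_C = S(n,k,t,z)$, so the optimization problem for homogeneous linear functions over $S_L(n,k,t,z)$ is in exact correspondence with the optimization problem for constant functions over $S(n,k,t,z)$. Since we already know from Lemma~\ref{lem:zeroopt} that the optimal constant for $S(n,k,t,z)$ is $0$, Corollary~\ref{cor:sameminimizer} should transport this directly to the statement that $f_{L,0}$ is optimal for $S_L(n,k,t,z)$.

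\textbf{Steps in order.} First I would record that the hypotheses $0 < k \le n$ and $t \ge n$ are exactly those required by Lemma~\ref{lem:zeroopt}, and that they are preserved under the mapping (the values $n,k,t$ are unchanged). Second, I would invoke $(S_L(n,k,t,z))_C = S(n,k,t,z)$ and apply Lemma~\ref{lem:zeroopt} to conclude that the constant function $f_0$ lies in $\mathcal{F}_{C,opt}$ for the instance $(S_L(n,k,t,z))_C$; in fact, since the proof of Lemma~\ref{lem:zeroopt} shows that \emph{strictly} more than half the entries equal $0$, the minimizer $0$ is unique. Third, I would apply Corollary~\ref{cor:sameminimizer} to the instance $S_L(n,k,t,z)$: it gives $f_{L,a} \in \mathcal{F}_{L,opt}$ if and only if $f_a \in \mathcal{F}_{C,opt}$ for $(S_L(n,k,t,z))_C$. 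Taking $a = 0$ and using the previous step, this yields $f_{L,0} \in \mathcal{F}_{L,opt}$, and the "if and only if" also transfers uniqueness, so $f_{L,0}$ is the unique optimal homogeneous linear function.

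\textbf{Main obstacle.} The only delicate point is the verification that $(S_L(n,k,t,z))_C = S(n,k,t,z)$, since the mapping replaces each pair $(x_{i,j},y_{i,j})$ by $|x_{i,j}|$ copies of $y_{i,j}/x_{i,j}$; the pairs of the form $(t,0)$ and $(t+1,0)$ correctly produce $Z_t(0,0)$-style blocks of $0$'s, while a pair $(t+1,z)$ produces $t+1$ copies of $z/(t+1)$ rather than of $z$. However, this is harmless: Lemma~\ref{lem:zeroopt} holds for \emph{every} $z \in \mathbb{R}$, so whether the nonzero value is $z$ or $z/(t+1)$, the optimal constant is still $0$, and the argument above goes through verbatim after renaming the nonzero label. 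I expect essentially no further difficulty beyond stating these invocations carefully.
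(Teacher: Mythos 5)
Your proposal is correct and takes essentially the same route as the paper, which likewise proves the lemma in two steps: apply Lemma~\ref{lem:zeroopt} to the mapped constant-function instance and then invoke Corollary~\ref{cor:sameminimizer}. Your additional observation that the mapping actually produces $S(n,k,t,z/(t+1))$ rather than $S(n,k,t,z)$ (since the pair $(t+1,z)$ yields $t+1$ copies of $z/(t+1)$) is a genuine refinement of a detail the paper glosses over, and you handle it correctly: Lemma~\ref{lem:zeroopt} holds for every $z \in \mathbb{R}$, so the conclusion is unaffected.
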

\begin{proof}
    By Lemma~\ref{lem:zeroopt}, we know that constant function $f_{0}$ is optimal for instance $S(n, k, t, z)$. Applying Corollary~\ref{cor:sameminimizer} completes the proof.
\end{proof}

\begin{lemma}\label{lem:singlepeaklin}
    For any $n, k, t \in \mathbb{N}$ and any $S \in S_L(n, k, t, z)$, let $R(f, S)$ denote the absolute risk of homogeneous linear function $f$ with respect to $S$. Then if $f_{L,a}$ is the unique optimal homogeneous linear function over $S$, then for any $b, c$ such that $a < b < c$ or $a > b > c$, the following holds:
    \[R(f_{L,a}, S) < R(f_{L,b}, S) < R(f_{L,c}, S).\]
\end{lemma}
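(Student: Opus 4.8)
The plan is to transfer the single-peakedness property from the constant-function setting to the homogeneous linear setting via the instance mapping $S \mapsto S_C$ introduced for \lmech, exploiting that risks in the two settings differ only by a fixed positive scaling factor. The key ingredients are Lemma~\ref{lem:singlepeak} (single-peakedness for constant functions), Corollary~\ref{cor:sameminimizer} (the optimal constant for $S_C$ corresponds to the optimal homogeneous linear function for $S$), and Lemma~\ref{lem:consttolin} (the identity $R(f_a, S_C) = \frac{|S|}{|S_C|} R(f_{L,a}, S)$).

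First I would observe that for any instance $S$ of the form $S_L(n,k,t,z)$ the points have strictly positive first coordinates $t$ and $t+1$, so the mapped instance $S_C$ is well-defined (and in fact $S_C = S(n,k,t,z)$ by the remark preceding Lemma~\ref{lem:zerooptlin}). Next, since $f_{L,a}$ is the \emph{unique} optimal homogeneous linear function for $S$, Corollary~\ref{cor:sameminimizer} puts the minimizer sets $\mathcal{F}_{L,opt}$ and $\mathcal{F}_{C,opt}$ in bijective correspondence via $f_{L,x} \leftrightarrow f_x$, and hence $f_a$ is the \emph{unique} optimal constant function for $S_C$.

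Then I would apply Lemma~\ref{lem:singlepeak} to the set $S_C$ with unique optimal constant $a$: for any $b, c$ with $a < b < c$ or $a > b > c$, it gives $R(f_a, S_C) < R(f_b, S_C) < R(f_c, S_C)$. Finally, Lemma~\ref{lem:consttolin} yields $R(f_{L,x}, S) = \frac{|S_C|}{|S|} R(f_x, S_C)$ for every $x$, and since $\frac{|S_C|}{|S|}$ is a strictly positive constant independent of $x$, multiplying the chain of strict inequalities through by this factor preserves them, giving $R(f_{L,a}, S) < R(f_{L,b}, S) < R(f_{L,c}, S)$, as desired. (One could equally invoke Corollary~\ref{cor:sameratio} to preserve the comparisons directly.)

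The argument is essentially a reduction, so there is no substantial computation. The only point requiring care — the main, and minor, obstacle — is ensuring that \emph{uniqueness} of the minimizer is preserved under the correspondence of Corollary~\ref{cor:sameminimizer}; this is immediate because that corollary is an \emph{if and only if} statement, hence a bijection between the minimizer sets rather than a mere containment, so a singleton minimizer on the linear side forces a singleton on the constant side and conversely.
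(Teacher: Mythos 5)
Your proposal is correct and follows essentially the same route as the paper's proof: the paper likewise invokes Corollary~\ref{cor:sameminimizer} to transfer uniqueness of the minimizer to $S_C$, Lemma~\ref{lem:consttolin} for the positive scaling identity between $R(f_{L,a'},S)$ and $R(f_{a'},S_C)$, and Lemma~\ref{lem:singlepeak} to conclude. Your write-up just spells out the details (well-definedness of the mapping and preservation of uniqueness via the if-and-only-if) that the paper's two-line proof leaves implicit.
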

\begin{proof}
Consider $S_C$, which by Corollary~\ref{cor:sameminimizer} has unique optimal constant function $f_a$. Combining Lemma~\ref{lem:consttolin} which states that $R(f_{L,a'}, S) = \frac{|S_C|}{|S|}R(f_{a'}, S_C)$ and Observation~\ref{obs:singlepeak} completes the proof.
\end{proof}

\begin{lemma} \label{lem:underonelin}
    For any $\gamma \in (0,2]$, fix $n, k\in \mathbb{N}$ such that $k = n\cdot \frac{\gamma}{\gamma + 2}$ and $n > k + 1$ and let $t = \lceil n(\gamma + 2)\rceil$. Then for any deterministic mechanism $\mathcal{M}$ that is $1 + \gamma$ consistent, there exists $\delta = \delta(n, \gamma) > 0$ such that, given instance and correct advice $(S_L(n, k+1, t, 1),f_{L,0})$, if $\mathcal{M}$ returns $f_{L,c}$, $c < 1 - \delta$ must hold.
\end{lemma}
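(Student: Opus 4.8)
The plan is to avoid redoing any computation by reducing the statement to the constant-function ratio bound already established inside the proof of Lemma~\ref{lem:underone}, using the exact correspondence between the linear risk on $S_L(n,k+1,t,1)$ and the constant risk on its image under the $S \mapsto S_C$ mapping. First I would fix $S = S_L(n, k+1, t, 1)$ and record that $S_C = S(n, k+1, t, 1)$, which is exactly the observation made just before Lemma~\ref{lem:zerooptlin}. By Lemma~\ref{lem:zerooptlin}, $f_{L,0}$ is the optimal homogeneous linear function for $S$, so $f_{L,0}$ is indeed the correct advice; hence if $\mathcal{M}$ is $1+\gamma$ consistent and returns $f_{L,c}$ on input $(S, f_{L,0})$, then $\frac{R(f_{L,c},S)}{R(f_{L,0},S)} \le 1+\gamma$.

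The crux is the next step: by Corollary~\ref{cor:sameratio} this linear approximation ratio equals $\frac{R(f_c, S_C)}{R(f_0, S_C)} = \frac{R(c, S(n,k+1,t,1))}{R(0, S(n,k+1,t,1))}$, i.e.\ the approximation ratio of selecting the \emph{constant} $c$ on the constant instance. The proof of Lemma~\ref{lem:underone} already shows that for $c \in [0,1]$ this quantity is at least $1 + c\bigl(\gamma + \tfrac{3}{2n-\gamma-2}\bigr)$. For $c > 1$, single-peakedness (Lemma~\ref{lem:singlepeaklin}, or equivalently its constant analogue applied to $S_C$) forces the ratio above its value at $c=1$, which already strictly exceeds $1+\gamma$; and for $c \le 0$ the desired bound $c < 1-\delta$ holds trivially. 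Thus the only binding case is $c \in [0,1]$.

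Finally, imposing $1 + c\bigl(\gamma + \tfrac{3}{2n-\gamma-2}\bigr) \le 1+\gamma$ yields $c \le \frac{\gamma}{\gamma + \frac{3}{2n-\gamma-2}}$, so the claim holds with
\[
\delta = \delta(n,\gamma) = 1 - \frac{\gamma}{\gamma + \frac{3}{2n-\gamma-2}} > 0.
\]
I expect no genuine obstacle here, since the entire content is the recognition that Corollary~\ref{cor:sameratio} transports the already-derived constant-function ratio bound verbatim to the linear setting; the only care needed is to dispatch the peripheral cases $c>1$ and $c\le 0$ (via single-peakedness and triviality, respectively) rather than re-run the algebra of Lemma~\ref{lem:underone}.
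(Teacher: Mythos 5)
Your proposal is correct and follows essentially the same route as the paper's own proof: invoke Lemma~\ref{lem:zerooptlin} to certify that $f_{L,0}$ is the correct advice, transport the approximation ratio to the constant-function instance $S(n,k+1,t,1)$ via Corollary~\ref{cor:sameratio}, and then cite the bound $1 + c\bigl(\gamma + \tfrac{3}{2n-\gamma-2}\bigr)$ already derived inside the proof of Lemma~\ref{lem:underone}. Your explicit formula for $\delta$ and your explicit dispatch of the cases $c > 1$ (via single-peakedness) and $c \le 0$ (trivial) are minor refinements that the paper leaves implicit, not a different argument.
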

\begin{proof}
 Fix $\gamma \in (0,2]$ and $n,k,t$ that satisfy the conditions in the lemma statement. Observe by Lemma~\ref{lem:zerooptlin} that the optimal homogeneous linear function for $S_L = S_L(n, k+1, t, 1)$ is $f_{L,0}$, as $k + 1 > 0$ and $t \geq n$ for any $\gamma$. Then to achieve $1 + \gamma$ consistency, given instance $S_L$ and correct advice $f_{L,0}$, $\mathcal{M}$ must return some $f_{L,c}$ such that $\frac{R(f_{L,c}, S_L)}{R(f_{L,0}, S_L)} \leq 1+\gamma$. Applying Lemma~\ref{cor:sameratio}, we know that 
 \[\frac{R(f_{L,c}, S_L(n, k+1, t, 1))}{R(f_{L,0}, S_L(n, k+1, t, 1))} = \frac{R(f_c, S(n, k+1, t, 1))}{R(f_0, S(n, k+1, t, 1))}.\]

We showed in the proof of Lemma~\ref{lem:underone} that, for $c \in [0,1]$, 
\[\frac{R(f_c, S(n, k+1, t, 1))}{R(f_0, S(n, k+1, t, 1))} \geq 1 + c\left[ \gamma + \frac{3}{2n - \gamma - 2}\right].\]
 Then for $\frac{R(f_{L,c}, S_L(n, k+1, t, 1))}{R(f_{L,0}, S_L(n, k+1, t, 1))}$ to be bounded above by $1 + \gamma$, $c < 1$ must hold, so there exists $\delta > 0$ such that $\mathcal{M}$ returning $f_{L,c}$ must have $c < 1-\delta$ to satisfy $1+\gamma$ consistency.
\end{proof}

\begin{corollary}\label{cor:scalelin}
    For any $\gamma \in (0,2]$, fix $n, k\in \mathbb{N}$ such that $k = n\cdot \frac{\gamma}{\gamma + 2}$ and $n > k + 1$ (note such always exist), and let  $t = \lceil n(\gamma + 2)\rceil$. Then for any deterministic mechanism $\mathcal{M}$ that is $1 + \gamma$ consistent, there exists $\delta = \delta(n, \gamma) > 0$ such that, for all $z > 0$, given instance and correct advice $(S_L(n, k+1, t, z), f_{L,0})$, $\mathcal{M}$ must return $f_{L,c}$ such that $c < (1 - \delta)\cdot z$.
\end{corollary}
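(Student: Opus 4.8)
The plan is to reduce the statement to the already-established scaling result in the constant-function setting (Corollary~\ref{cor:scale}) via the ratio-preserving correspondence of Corollary~\ref{cor:sameratio}, exactly as Lemma~\ref{lem:underonelin} was derived from Lemma~\ref{lem:underone}.

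First I would fix $\gamma, n, k, t$ as in the hypotheses and an arbitrary $z > 0$, and set $S_L = S_L(n, k+1, t, z)$. By Lemma~\ref{lem:zerooptlin}, $f_{L,0}$ is the optimal homogeneous linear function for $S_L$, so the advice $f_{L,0}$ is indeed correct, and $1+\gamma$ consistency forces any returned $f_{L,c}$ to satisfy $\frac{R(f_{L,c}, S_L)}{R(f_{L,0}, S_L)} \le 1+\gamma$. Using the identity $(S_L(n,k+1,t,z))_C = S(n,k+1,t,z)$ together with Corollary~\ref{cor:sameratio}, I would rewrite this ratio as the analogous constant-function ratio $\frac{R(f_c, S(n,k+1,t,z))}{R(f_0, S(n,k+1,t,z))}$.

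Next I would invoke Corollary~\ref{cor:scale}: on the constant instance $S(n,k+1,t,z)$ with correct advice $0$, any $1+\gamma$ consistent mechanism must return a constant strictly below $(1-\delta)z$ for the positive $\delta = \delta(n,\gamma)$ of Lemma~\ref{lem:underone}. Since the linear ratio above coincides exactly with this constant ratio, the bound $1+\gamma$ is already exceeded once $c \ge (1-\delta)z$; hence $\mathcal{M}$ must return $f_{L,c}$ with $c < (1-\delta)z$, establishing the claim. The same $\delta$ as in Lemma~\ref{lem:underonelin} works, so the corollary is genuinely just the scaled version of that lemma.

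I expect essentially no new obstacle here, since the content is inherited wholesale from the constant-function analysis. The only things to verify are that scaling every label by $z$ multiplies each absolute-loss term by $z$ and therefore leaves the approximation ratio — and hence the threshold $\delta$ — unchanged (this is precisely what underlies Corollary~\ref{cor:scale}), and that the map $S \mapsto S_C$ carries $f_{L,c}$ to $f_c$ so that Corollary~\ref{cor:sameratio} applies with the matching constant. Both checks are routine given the correspondence lemmas already proved, so the main care is simply in the bookkeeping of the $c \mapsto c/z$ rescaling and the accompanying domain restriction $c \in [0,z]$.
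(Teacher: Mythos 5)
Your proposal is correct and matches the paper's approach: the paper proves this corollary exactly as Lemma~\ref{lem:underonelin}, scaled by $z$ --- that is, by reducing to the constant-function instance via the identity $(S_L(n,k+1,t,z))_C = S(n,k+1,t,z)$ and Corollary~\ref{cor:sameratio}, then applying the scaled ratio bound from Lemma~\ref{lem:underone}, which is precisely the content of Corollary~\ref{cor:scale} that you invoke. The only cosmetic caveat is that Corollary~\ref{cor:scale} is stated as a claim about what mechanisms must return rather than about approximation ratios, so a fully formal write-up should cite the ratio inequality from the (scaled) proof of Lemma~\ref{lem:underone} directly, which is exactly the point you acknowledge in your final paragraph.
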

The proof is exactly as above in Lemma~\ref{lem:underonelin}, scaling all risk values by $z$, and we also note that this means $\delta(n, \gamma)$ is also the same as in Lemma~\ref{lem:underonelin}.

\begin{lemma}\label{lem:largeinstlin}
    For any $\gamma \in (0,2]$, fix $n, k \in \mathbb{N}$ such that $k = n \cdot \frac{\gamma}{\gamma+2}$ and $n > k+1$, and let $t = \lceil n(\gamma + 2)\rceil$. Fixing $D > 1$, for any strategyproof deterministic mechanism $\mathcal{M}$ that is $1 + \gamma$ consistent, given instance and correct advice $(S_L(n, k+1, t, D'), f_{L,0})$ for any $D' \geq D$, $\mathcal{M}$ must return $f_{L,c}$ such that $c < 1$.
\end{lemma}

\begin{proof}
    Take $\delta$ as in Lemma~\ref{lem:underonelin} and fix some $D > 1$. It suffices to show that for some $D'$ such that $1 + D' \cdot \frac{\delta}{2t+1} \geq D$, $\mathcal{M}$ must return some constant less than $1 - \delta$ given instance $(S_L(n,k+1,t,1 + D'\cdot\frac{\delta}{2t+1}),f_{L,0})$. For simplicity, let us denote $z(i) = 1+ i \cdot \frac{\delta}{2t+1}$ and $S_L(n, k+1, t, z(i))$ as $S_L^i$. We will show by induction that given $(S_L^i, f_{L,0})$ for any $i \leq D'$, $\mathcal{M}$ must return $f_{L,c}$ such that $c < 1 - \delta$.

    Observe that for $i = 0$, this holds by Lemma~\ref{lem:underonelin}. For any $i \leq D'$, we know by Lemma~\ref{lem:zerooptlin} that $f_{L,0}$ is the optimal solution to $S_L^i$. Then it is sufficient to prove the following claim.
    
    \paragraph{Claim:} For any $i < D'$, if $\mathcal{M}$ must return some $f_{L,c}$ such that $c < 1 - \delta$ given instance $S_L^i$ and advice $f_{L,0}$, then given instance $S_L^{i+1}$ and advice $f_{L,0}$, $\mathcal{M}$ must still return some $f_{L,c'}$ such that $c' < 1 - \delta$.

    Fix $i < D'$. Let us define for any $j \leq n-k-1$ the instance $S_L^i(j)$ as 
    \[\{\{(t, 0), (t+1, 0)\}^{k+1}\uplus \{(t, 0), (t+1, z(i+1))\}^j \uplus \{(t, 0), (t+1, z(i))\}^{n-k-j-1}\}\]
    and observe that $S_L^i(j)' = S^i(j)$ from Lemma~\ref{lem:largeinst}. Since $f_0$ is the optimal constant function for $S^i(j)$, then $f_{L,0}$ is optimal for $S_L^i(j)$ by Corollary~\ref{cor:sameminimizer}. Then we will prove the following: for any $j < n-k-1$, if given instance and advice $(S_L^i(j), f_{L,0})$ $\mathcal{M}$ must return some $f_{L,c}$ such that $c < 1 - \delta$, then given $(S_L^i(j+1),f_{L,0})$, it must still return some $f_{L,c'}$ such that $c' < 1 - \delta$.

Observe that $S_L^i(0) = S_L^i$, so from the claim statement we have that $\mathcal{M}$ returns some $f_{L,c}$ such that $c < 1 - \delta$ given $(S_L^i(0), f_{L,0})$. Assuming then that $\mathcal{M}$ returns some $f_{L,c}$ such that $c < 1 - \delta$ given $(S_L^i(j), f_{L,0})$ for some $j < n-k-1$, we want to show that given $S_L^i(j+1)$ mechanism $\mathcal{M}$ must still return $f_{L,c'}$ for some $c' < 1 - \delta$. First, we will show that given instance $(S_L^i(j+1), f_{L,0})$, $\mathcal{M}$ cannot return $f_{L,c'}$ such that $c' \geq z(i+1)$ while maintaining $1 + \gamma$ consistency. We show in the proof of Lemma~\ref{lem:largeinst} that for any such $c$, 
\[\frac{R(f_{z(i+1)}, S^i(j+1))}{R(f_0, S^i(j+1))} > 1+\gamma.\]
Then since $\frac{R(f_{L, z(i+1)}, S_L^i(j+1))}{R(f_{L,0}, S_L^i(j+1))}$ is equal to this ratio by Corollary~\ref{cor:sameratio}, we know that $\mathcal{M}$ returning some $f_{c'}$ such that $c' \geq z(i+1)$ violates $1 + \gamma$ consistency.

    We can then show that given $(S_L^i(j+1), f_{L,0})$, $\mathcal{M}$ may not return $f_{L,c'}$ such that $c' \in [1-\delta, z(i)]$. To see this, consider that instances $S_L^i(j)$ and $S_L^i(j+1)$ that only differ by the report of one agent denoted $\bar{a}$, switching from type $S_{L,\bar{a}} = \{(t, 0), (t+1, z(i))\}$ to type $\bar{S}_{L, a} = \{(t, 0), (t+1, z(i+1))\}$. Observe by Lemma~\ref{lem:singlepeaklin} that $R_{\bar{a}}(f_{L, c'}, S_{L,\bar{a}})$ increases as $c'$ decreases from $z(i)$ since $f_{L, z(i)}$ is the personal risk minimizing homogeneous linear function for $S_{L,\bar{a}}$. Since the induction hypothesis (on $j$) states that given $(S_L^i(j), f_{L,0})$, mechanism $\mathcal{M}$ returns some $f_{L,c}$ such that $c < 1 - \delta$, returning $c' \in [1 - \delta, z(i)]$ when the instance switches to $S_L^i(j+1)$ would decrease the risk for agent $\bar{a}$, and therefore if $\bar{a}$ had true type $S_{L,\bar{a}}$ they would have incentive to misreport their type as $\bar{S}_{L,\bar{a}}$, violating strategyproofness of $\mathcal{M}$.

    Now we consider the case when $\mathcal{M}$ returns some $c' \in (z(i), z(i+1))$. Observe by Lemma~\ref{lem:singlepeaklin} that $R_{\bar{a}}(f_{c'}, S_{L,\bar{a}})$ increases as $c'$ increases from $z(i)$, so $R_{\bar{a}}(f_{L, c'}, S_{L,\bar{a}}) < R_{\bar{a}}(f_{L, z(i+1)}, S_{L,\bar{a}})$. We want to compare this with $R_{\bar{a}}(f_{L,1-\delta}, S_{L,\bar{a}})$. We know by Lemma~\ref{lem:consttolin} that
    \[R_{\bar{a}}(f_{L, 1-\delta}, S_{L,\bar{a}}) - R_{\bar{a}}(f_{L, z(i+1)}, S_{L,\bar{a}}) = \frac{|S|}{|S_L|}\left[R_{\bar{a}}(f_{1-\delta}, S_{\bar{a}}) - R_{\bar{a}}(f_{z(i+1)}, S_{\bar{a}})\right],\]
which we know to be nonnegative for all $i \geq 0$ because $R_{\bar{a}}(f_{1-\delta}, S_{\bar{a}}) - R_{\bar{a}}(f_{z(i+1)}, S_{\bar{a}})$ was shown to be nonnegative in the proof of Lemma~\ref{lem:largeinst}. Then $R_{\bar{a}}(f_{L, c'}, S_{L,\bar{a}}) < R_{\bar{a}}(f_{L, z(i+1)}, S_{L,\bar{a}}) \leq R_{\bar{a}}(f_{L,1-\delta}, S_{L,\bar{a}})$, so if $\mathcal{M}$ returns $f_{L,c'}$ such that $c' \in (z(i), z(i+1))$ given $(S_L^i(j+1), f_{L,0})$, agent $a$ with true type $S_{L,\bar{a}}$ would have incentive to misreport their type as $\bar{S}_{L,\bar{a}}$ and decrease their personal risk. Then, $\mathcal{M}$ we have that $c' < 1 - \delta$ must hold. 

Setting $j = n - k - 2$, we get that $\mathcal{M}$ must return $f_{L,c}$ such that $c <1 - \delta$ given instance $S_L^i(n - k - 1) = S_L^{i+1}$, thus proving the claim. Then the induction on $i$ holds for any $i \leq D'$, and $\mathcal{M}$ must return $f_{L,c}$ for some $c < 1 - \delta$ given instance $S^{D'}$ and advice $f_{L,0}$.
\end{proof}

Now we can prove the main theorem.

\begin{proof}[Proof of Theorem~\ref{thm:linear}]

Taking $D = \frac{2}{\varepsilon} + \frac{8}{\gamma \varepsilon}$, we know by Lemma~\ref{lem:robustfunction} that there exist $n, k \in \mathbb{N}$ such that $k = n\cdot \frac{\gamma}{\gamma + 2}, n > k+1$, and $r(n, D) > 1 + 4/\gamma - \varepsilon$. Observe that $T > D$ by definition of $T$. Letting $t = \lceil n(\gamma + 2)\rceil$, 
we know by Lemma~\ref{lem:largeinstlin} that $\mathcal{M}$ must return $f_{L,c}$ for $c < 1$ given instance and correct advice $(S_L(n,k+1,t,D'),f_{L,0})$ for any $D' \geq D$, so we consider instance $S_L(n, k+1, t, T)$. 

    Let us assign indices to the agents of $S_L^0$:
    \[S_{L,1}^0 = \ldots = S_{L,k+1}^0 = \{(t,0), (t+1, 0)\}\text{ and }S_{L,k+2}^0 = \ldots = S_{L,n}^0 = \{(t, 0), (t+1, T\cdot(t+1))\}.\]

We now construct a series of instances for which we will prove that $\mathcal{M}$ must return $f_{L,c}$ such that $c < 1$ in order to satisfy strategyproofness. We define agent types $S_{L,j}^i$, for $i \in [0, n]$ and $j\in[n]$. 
\[\text{If }0 \leq i \leq k+1, \; S_{L,j}^i = \begin{cases}
    \{(t, T), (t+1, 0)\} & j \leq i \\
    \{(t, 0), (t+1, 0)\} & i < j \leq k+1 \\
    \{(t, 0), (t+1, T\cdot(t+1))\} & k +1< j \leq n
    \end{cases},\]
\[\text{else if }k+1 < i \leq n, \; S_{L,j}^i = \begin{cases}
        \{(t, T\cdot t), (t+1, 0)\} & j \leq k+1 \\
        \{(t, T\cdot 5), (t+1, T\cdot(t+1))\} & k+1 < j \leq i \\
        \{(t, 0), (t+1, T\cdot(t+1))\} & i < j \leq n
    \end{cases}.
\]
Let each instance $S_L^i = \biguplus_{l = 1}^nS_{L,l}^i$, which is consistent with the original definition of $S_L^0$, and observe that for every $i \in [n]$, $S_L^{i-1}$ and $S_L^i$ that only differ by the ith agent's type. If we let $S_{L,-j}^i = \biguplus_{l \in [n]\backslash j}S_{L,l}^i$, we can see that $S_{L, -i}^{i-1} = S_{L, -i}^i$. 

Our induction hypothesis is that, given $(S_L^i, f_{L,0})$, mechanism $\mathcal{M}$ must return $f_{L,c}$ such that $c < 1$, which we know holds for $i = 0$. Now assume that this holds for $i - 1$ and $\mathcal{M}(S_L^{i-1}, f_{L,0}) = f_{L,c}$ for some $c < 1$. We first consider when $1 \leq i \leq k + 1$. Recall that 
\[S_{L,i}^{i-1} = \{(t, 0), (t+1, 0)\}\text{ and }S_{L,i}^i = \{(t, T\cdot t)), (t+1, 0)\}.\]
Both of these have unique personal risk minimizer $f_{L,0}$. Assume for contradiction that $\mathcal{M}(S_L^i, f_{L,0}) = f_{L,b}$ for some $b \geq 1$. Then 
\[R_i(\mathcal{M}(S_L^{i-1}, f_{L,0}), S_{L,i}^i) = R_i(f_{L,c}, S_{L,i}^i) < R_i(f_{L,b}, S_{L,i}^i) = R_i(\mathcal{M}(S_L^i, f_{L,0}), S_{L,i}^i) ,\]
where the inequality is by Lemma~\ref{lem:singlepeaklin} since $0 < c < b$.
Since $S_L^{i-1} = S_{L,i}^{i-1} \uplus S_{L,-i}^i$ and $S_L^i = S_{L,i}^i \uplus S_{L,-i}^i$, we get that 
\[R_i(\mathcal{M}(S_{L,i}^{i-1} \uplus S_{L,-i}^i, f_{L,0}), S_{L,i}^i) < R_i(\mathcal{M}(S_{L,i}^i \uplus S_{L,-i}^i, f_{L,0}), S_{L,i}^i).\]
This means, fixing the types of all other agents to $S_{L,-i}^i$, an agent with type $S_{L,i}^i$ can misreport their type as $S_{L,i}^{i-1}$ and decrease their personal risk, which violates strategyproofness. Then $\mathcal{M}$~cannot return $f_{L,b}$ for any $b \geq 1$ given instance $(S_L^i, f_{L,0})$. Thus the claim holds for $i = k + 1$.

Now we consider $k + 1 < i \leq n$, and again assume that $\mathcal{M}(S_L^{i-1}, f_{L,0}) = f_{L,c}$ for some $c < 1$. Recall that 
\[S_{L,i}^{i-1} = \{(t, 0), (t+1, T\cdot(t+1))\} \text{ and }S_{L,i}^i = \{(t, T\cdot t), (t+1, T\cdot(t+1)\}.\]
Both of these have unique personal risk minimizer $f_{L,T}$. Assume for contradiction that $\mathcal{M}(S_L^i, f_{L,0}) = f_{L,b}$ for some $b \geq 1$. Then 
\[R_i(\mathcal{M}(S_L^{i-1}, f_{L,0}), S_{L,i}^i) = R_i(f_{L,c}, S_{L,i}^i) > R_i(f_{L,b}, S_{L,i}^i) = R_i(\mathcal{M}(S_L^i, f_{L,0}), S_{L,i}^i),\]
again by Lemma~\ref{lem:singlepeaklin} because $T > b > c$, and $\mathcal{M}$ cannot output $f_{L,a}$ for $a > T$. Since $S_L^{i-1} = S_{L,i}^{i-1} \uplus S_{L,-i}^i$ and $S_L^i = S_{L,i}^i \uplus S_{L,-i}^i$, we get that 
\[R_i(\mathcal{M}(S_{L,i}^{i-1} \uplus S_{L,-i}^i, f_{L,0}), S_{L,i}^i) > R_i(\mathcal{M}(S_{L,i}^i \uplus S_{L,-i}^i, f_{L,0}), S_{L,i}^i).\]
Then, if the types of all other agents is $S_{L,-i}^i$, an agent with type $S_{L, i-1}^i$ can misreport their type as $S_{L,i}^i$ and decrease their personal risk, again violating strategyproofness. Then $\mathcal{M}$ cannot return $f_{L,b}$ having seen instance $(S_L^i, f_{L,0})$. Thus the claim holds for $i = n$. 

We can now bound the robustness of $\mathcal{M}$ by considering its performance on the instance $(S_L^n, f_{L,0})$. Recall that $S_L^n$ is defined by agent types
\[S_{L,1}^n = \dots = S_{L,k+1}^n = \{(t, T), (t+1, 0)\}\text{ and }S_{L,k+2}^n = \dots = S_{L,n}^n = \{(t, T\cdot t), (t+1, T\cdot(t+1))\}.\]
Since we know $\mathcal{M}$ returns $f_{L,c}$ for some $c < 1$, we can compare the incurred risk with that of returning $f_{L,T}$. By Corollary~\ref{cor:sameratio} we know that 
\[\frac{R(f_{L,c}, S_L^n)}{R(f_{L,T}, S_L^n)} = \frac{R(f_c, S^n)}{R(f_{T}, S^n)}\]
for $S^n = (S_L^n)_C$, which was shown in the proof of Theorem~\ref{thm:rationalconstantlb} to be at most $r(n, D) > \beta$, thus $\mathcal{M}$ cannot be $\beta$ robust. \end{proof}

\section{Missing proofs from Section~\ref{sec:classification}} \label{appendix:classification}
\subsection{Missing proofs from Section~\ref{subsec:genclassdet} (general decision)}\label{appendix:genclassdet}

\thmRandClass*

We first define certain types of classification mechanisms using definitions from voting theory. A mechanism is a \emph{lottery} over mechanisms $\mathcal{M}_1,\mathcal{M}_2,\dots$ if it runs mechanism $\mathcal{M}_j$ with probability $p_j$. Then we care about three types of mechanisms. A \emph{random dictator mechanism} is a lottery over dictatorial mechanisms, a \emph{duple mechanism assigns probability 0 to all labelings except (at most) two}, and a \emph{random dictator duple mechanism} (RDD) is a lottery over random dictator and duple mechanisms. We will leverage the following characterization of strategyproof random mechanisms.

\begin{lemma}[\citealt{meir2011tight}]\label{lem:MisRDD}
    Let $\mathcal{M}$ be a strategyproof randomized mechanism that is $\mu$-granular for some $\mu \in (0,1]$. Then for any $k \geq 2/\mu$, $\mathcal{M}$ is a RDD over $(X, \mathcal{C}, \mathcal{S})(k)$.
\end{lemma}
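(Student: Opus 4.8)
The plan is to prove this characterization by pulling the classification mechanism back to a randomized voting rule and invoking Gibbard's (1977) characterization of strategyproof randomized social choice functions, whose admissible rules are exactly the probability mixtures of unilateral (random-dictator) schemes and duple schemes — precisely the RDD class. The reduction mirrors the one used in the proof of Theorem~\ref{thm:sublinearconsist}, but now on the randomized side: a reported labeling $y$ induces a preference over $\mathcal{C}$ via the risks $R_i(c, y)$ (the normalized Hamming distances to the candidate labelings), and I would show that along this induced domain strategyproofness of $\mathcal{M}$ is equivalent to strategyproofness of the pulled-back rule $v(\succ_{1}, \dots, \succ_{n}) := \mathcal{M}(y_1, \dots, y_n)$, since any beneficial manipulation in one problem is a beneficial manipulation in the other.

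First I would make the reduction precise by restricting attention to reports $y$ whose Hamming distances to the members of $\mathcal{C}$ are pairwise distinct, so that each such $y$ induces a strict order $\succ_y$ over $\mathcal{C}$. The crucial step is a domain-richness lemma: for $k \ge 2/\mu$ every strict order over $\mathcal{C}$ is realized as $\succ_y$ for some admissible report, so that $v$ is defined on the full unrestricted preference domain over $|\mathcal{C}| \ge 3$ alternatives. With surjectivity of the order map in hand, $v$ is a strategyproof randomized voting rule on a rich domain, so Gibbard's theorem applies and decomposes $v$ into a mixture of unilateral and duple schemes; translating each unilateral scheme back to a random-dictator classification mechanism and each duple back to a mechanism supported on at most two labelings yields the claimed RDD structure for $\mathcal{M}$. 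The role of $\mu$-granularity here is to make the decomposition exact and finite: because every output lottery lies on the grid $\mu \mathbb{Z}^{|\mathcal{C}|}$, the mixture weights are forced to be multiples of $\mu$ and are pinned down by the finitely many realizable profiles, so the RDD representation transfers to $\mathcal{M}$ rather than holding only in a limiting sense.

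The main obstacle is the domain-richness step together with its dependence on $k \ge 2/\mu$. For adversarially chosen $\mathcal{C}$ and too few points, some orders over $\mathcal{C}$ are simply not realizable by any Hamming-induced preference (the candidate labelings can be too close together in the cube), and one cannot invoke Gibbard on a genuinely restricted domain without risking spurious non-RDD rules. I would handle this by spending the coordinate (or copy) budget guaranteed by $k \ge 2/\mu$ to separate the $|\mathcal{C}|$ candidate distances into any prescribed strict order simultaneously, checking that the granularity bound is exactly what is needed to realize enough profiles both to drive the surjectivity and to resolve the mechanism's $\mu$-level probabilities. A secondary technical point I would verify carefully is that the unilateral and duple components produced by Gibbard's (continuous) statement correspond, under the embedding, to bona fide random-dictator and duple classification mechanisms, i.e. that the translation back preserves both the probabilistic structure and strategyproofness.
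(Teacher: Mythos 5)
First, a point of comparison: the paper itself gives no proof of this lemma --- it is imported wholesale from \citet{meir2011tight}, which is why the statement carries a citation --- so your proposal can only be measured against the argument in that source, whose high-level route is indeed the one you chose: embed shared-input classification over three labelings into randomized voting and invoke Gibbard's 1977 characterization of strategyproof probabilistic schemes as mixtures of unilateral and duple schemes. Within that skeleton, however, your sketch has a genuine gap in the richness step. Gibbard's hypothesis is strategyproofness against \emph{every} utility function consistent with each agent's ordinal ranking (equivalently, stochastic-dominance strategyproofness), while the classification mechanism $\mathcal{M}$ is only strategyproof in expectation with respect to the particular cardinal risks $R_i(c, S_i)$ induced by realizable labelings. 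Your ``domain-richness lemma'' --- every strict order over $\mathcal{C}$ is Hamming-realizable --- is both too easy (it already holds for small constant $k$, so it cannot be what $k \ge 2/\mu$ is for) and too weak: a manipulation could be profitable under some utility representation of an order while never being profitable at any realizable risk vector, in which case the pulled-back rule $v$ would not satisfy Gibbard's hypothesis at all. What is actually needed, and what the family $\mathcal{S}(k)$ is engineered to supply, is \emph{cardinal} richness: the mixed third cluster lets an agent tune the intensity of the intermediate alternative in increments of $1/k$, and $k \ge 2/\mu$ makes this grid finer than the $\mu$-grid on which $\mathcal{M}$'s output probabilities live, so that expected-risk strategyproofness across $\mathcal{S}(k)$ forces the dominance conditions the voting-theoretic characterization requires. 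Your sketch attributes the bound $k \geq 2/\mu$ to ordinal surjectivity, which is not where it does any work.

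A second gap: you identify Gibbard's ``unilateral'' schemes with random dictators, and under this paper's definitions they are not the same class. A random dictator is a lottery over dictatorial mechanisms, each of which always returns the designated agent's favorite labeling, whereas a strategyproof unilateral scheme may depend on one agent's ballot in richer ways --- for instance, returning agent $i$'s top choice with probability $1/2$ and their second choice with probability $1/2$ is strategyproof and unilateral but is not a lottery over dictatorships. Concluding the RDD structure therefore requires an additional decomposition step showing that, over this domain, each strategyproof unilateral component can itself be rewritten as a mixture of dictatorial and duple mechanisms (as is handled in the cited source); your proposal elides this by treating the two classes as synonymous in the very statement of Gibbard's theorem, so as written the argument only delivers a mixture of unilateral and duple schemes, not the claimed RDD form.
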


Now we can prove our result.
\begin{proof}[Proof of Theorem~\ref{thm:randomclassification}]

 Let us first introduce the family of instances we focus on. For fixed $k \in \mathbb{N}$, consider the input space over $m=3k$ points \[X(k)=\{\underbrace{x,\ldots,x}_k,\underbrace{y,\ldots,y}_k,\underbrace{z,\ldots,z}_k\}.\]
Note that the identities of $x,y,z$ do not matter and they are just for ease of presentation. Then we consider the set of labelings $\mathcal{C}(k) = \{c_x, c_y, c_z\}$ where, for $w \in \{x, y, z\}$, $c_w$ labels any $w' \in X$ such that $w' = w$ with 1 and all others with 0. For example, $c_x = \{\underbrace{1,\ldots,1}_k,\underbrace{0,\ldots,0}_k,\underbrace{0,\ldots,0}_k\}$. Let us define a family of instances $\mathcal{S}(k)$ such that every agent has type defined as follows: one of the point types $\{x, y, z\}$ is labeled as all 0's, the second as all 1's, and the third has at least one label each of 0 and 1. For example,  $\{\underbrace{1,\ldots,1}_k,\underbrace{0,\ldots,0}_k,\underbrace{1,0,\ldots,0}_k\}$. We will denote the tuple $(X(k), \mathcal{C}(k), \mathcal{S}(k))$ as $(X, \mathcal{C}, \mathcal{S})(k)$.

      Observe that, for any prediction $\Tilde{c} \in \mathcal{C}(k)$, the mechanism $\mathcal{M}_{\Tilde{c}}$ induced by fixing the prediction is a strategyproof randomized mechanism without advice, and it is $\mu$-granular. WLOG, let $\Tilde{c} = c_x$ be the advice. Then by Lemma~\ref{lem:MisRDD}, $\mathcal{M}_{\Tilde{c}}$ must be an RDD over $(X, \mathcal{C}, \mathcal{S})(k)$ for $k \geq 2/\mu$. As a result, there must exist duple mechanisms $D_{x,y}$, $D_{y,z}$, and $D_{x,z}$ (where $D_{w, w'}$ denotes a duple mechanism over $c_w$ and $c_{w'}$), as well as random dictator mechanism $RD$, such that we can express the expected risk of $\mathcal{M}_{\Tilde{c}}$ as follows: 
   \begin{align*}
       R(\mathcal{M}_{c_x}(S), S) = & p_{x,y}\cdot R(D_{x, y}(S), S) + p_{y,z}\cdot R(D_{y,z}(S), S) + p_{x,z}\cdot R(D_{x,z}(S), S)\\
       & + (1 - p_{x,y} - p_{y,z} - p_{x,z})\cdot R(RD(S), S),
   \end{align*}
   where $p_{x,y}, p_{y,z}, p_{x,z} \geq 0$ and $p_{x,y}+p_{y,z}+p_{x,z} \leq 1$.

   Let us first bound the consistency of $\mathcal{M}$, or the approximation ratio of $\mathcal{M}_{c_x}$ over instances where $c_x$ is the optimal labeling. To do so, we will lower bound the approximation of any random dictator mechanism over $n$ agents. Observe that any random dictator mechanism must pick one of the agents with probability $\sigma \geq \frac{1}{n}$, and WLOG let that agent be agent 1. Then consider the following instance. The first agent has type 
   \[S_1 = \{\underbrace{0,1,\ldots,1}_k,\underbrace{1,\ldots,1}_k,\underbrace{0,0,\ldots,0}_k\}\]
   and all other agents $i \in [2, n]$ have type
   \[S_i = \{\underbrace{1,\ldots,1}_k,\underbrace{1,0,\ldots,0}_k,\underbrace{0,0,\ldots,0}_k\}.\] Note that $S \in \mathcal{S}(k)$.
   We want to analyze the performance of any random dictator mechanism on this instance.
   To compare the risks incurred by $c_x$ and $c_y$, we can perform the following calculations.
   \begin{align*}R(c_x, S) & = \sigma\cdot R_1(c_x, S_1) + (1 - \sigma)\cdot R_i(c_x, S_i) \\
   & = \frac{1}{3k}\left[\sigma\cdot (k+1) + (1 - \sigma)\right] \geq \frac{1}{3k}\left[\frac{1}{n}(k+1) + \frac{n-1}{n}\right]\\
   R(c_y, S)& = \sigma\cdot R_1(c_y, S_1) + (1 - \sigma)\cdot R_i(c_y, S_i) \\
   & = \frac{1}{3k}\left[\sigma\cdot (k-1) + (1 - \sigma)\cdot(2k+1)\right] \leq \frac{1}{3k}\left[\frac{1}{n}(k-1) + \frac{n-1}{n}(2k+1)\right].\end{align*}
Observe that for any $k \geq 3$, we have that $R(c_x, S) < R(c_y, S)$, so the prediction $c_x$ is correct. \citet{meir2011tight} showed that the approximation ratio of this instance is at least $3 - \frac{2}{n} - \frac{2n + 96}{k}$, so the consistency of $\mathcal{M}$ is at least 
   \[(1 - p_{x,y} - p_{y,z} - p_{x,z})\cdot \frac{R(RD(S),S)}{R(c_x, S)} \geq (1 - p_{x,y} - p_{y,z} - p_{x,z})\left(3 - \frac{2}{n} - \frac{2n + 96}{k}\right),\]
   and this value is bounded below by $3 - \frac{2}{n} - \frac{2n + 96}{k} - 3\alpha$ for $\alpha = p_{x,y} + p_{y,z} + p_{x,z}$.

   We will now bound the robustness of $\mathcal{M}$. We will do so by lower bounding the approximation of duple mechanisms. For any $w, w', w'' \in \{x, y, z\}$, we will consider an instance where every agent vastly prefers $c_w$ over all labelings and slightly prefers $w'$ over $w''$. WLOG, let $(w, w', w'') = (x, y, z)$. Then in instance $S$ all agents will have type 
   \[S_i = \{\underbrace{1,\ldots,1}_k,\underbrace{1,0,\ldots,0}_k,\underbrace{0,0,\ldots,0}_k\}. \]
   Again observe that $S \in \mathcal{S}(k)$. Consider duple mechanism $D_{y,z}$. Since it can only pick amongst $c_y$ and $c_z$, it has approximation ratio at least
   \[\frac{R(D_{y,z}(S), S)}{R(c_x, S)} \geq \frac{\min\{R(c_y, S), R(c_z, S)\}}{R(c_x, S)} = \frac{R(c_y, S)}{R(c_x, S)} = \frac{k+1}{1} = k+1.\]
   Then for each $w, w' \in \{x, y, z\}$, there exists $S \in \mathcal{S}(k)$ such that $\frac{R(\mathcal{M}_{\Tilde{c}}(S), S)}{R(c^*, S)} \geq p_{w, w'}\cdot (k+1)$. This means that $\mathcal{M}$ has robustness at least 
   \[\max\{p_{x,y}, p_{y,z}, p_{x,z}\}\cdot (k+1) \geq \frac{p_{x,y} + p_{y,z} + p_{x, z}}{3}(k+1) = \frac{\alpha}{3}(k+1).\]

   In order for $\mathcal{M}$ to be $3 - \frac{2}{n} - \varepsilon$ consistent, we need $3 - \frac{2}{n} - \frac{2n+96}{k} - 3\alpha \leq 3 - \frac{2}{n} - \varepsilon$, and selecting $k  \geq \frac{4n + 192}{\varepsilon}$ means that $\alpha > \frac{\varepsilon}{6}$ must hold. Then $\mathcal{M}$ has robustness at least 
   \[\frac{\alpha}{3}(k+1) > \frac{\varepsilon}{18}(k+1) = O(n). \qedhere
   \]
\end{proof}

\subsection{Missing proofs and results from Section~\ref{subsec:binclass} (two labelings)} \label{appendix:binary}

\thmToBinary*

\begin{proof}
Consider set of labelings $\mathcal{C} = \{c',c''\}$ and instance-advice pair $(S,\Tilde{c})$ for some $\Tilde{c} \in \mathcal{C}$.  Let us denote the set of indices on which $c'$ and $c''$ differ as $J = \{j \in [m]: c'(x_j) \neq c''(x_j)\}$. Observe that for any $j\notin J$, meaning $c'(x_j) = c''(x_j)$, $\mathcal{M}$ must label $x_j$ with that same value, regardless of whether $c'$ or $c''$ is selected. Then the relative error over points $x_j$ for $j \notin J$ is the same for both labelings, that is 
    \[\sum_{i=1}^n\sum_{j \notin J}\mathbf{1}(c'(x_j) \neq y_{i,j}) = \sum_{i=1}^n\sum_{j \notin J}\mathbf{1}(c''(x_j) \neq y_{i,j}) \]
whose value we will denote $Err_{[m] \backslash J}$.

Let us define our dataset restricted to points with indices in $J$ using $X_J = \{x_j: j\in J\}$, and similarly $Y_{i,J} = \{y_{i,j}: j \in J\}$ for each $i \in [n]$, so that $Y_J = \uplus_{i = 1}^n Y_{i,J}$ and $S_J = \uplus_{i=1}^nS_{i,J} = \uplus_{i=1}^n(X_J,Y_{i,J})$. We can then create for each agent $i$ transformed data $Y^T_i$ labeling points in $X_J$ composed of the following:
\[y^T_{i,j} = \begin{cases}
    0 & \text{if }y_{i,j} = c'(x_j) \\
    1 & \text{if }y_{i,j} = c''(x_j),
\end{cases}\]
and let $Y^T= \uplus_{i=1}^nY^T_i$, so $S^T = (X_J, Y^T)$ is our transformed dataset. Our transformed advice will be denoted $\Tilde{c}^T$, which equals $c_0$ if $\Tilde{c} = c'$ and $c_1$ if $\Tilde{c} = c''$. Note that $\Tilde{c}^T$ is correct over $S^T$ amongst options $\{c_0,c_1\}$ if $\Tilde{c}$ is correct over $S$ amongst $\{c',c''\}$. Then let $\mathcal{M}'(S,\Tilde{c})$ return $c'$ if $\mathcal{M}(S^T,\Tilde{c}^T) = c_0$ and $c''$ if $\mathcal{M}(S^T,\Tilde{c}^T) = c_1$. Since no randomized is introduced when mapping $S$ to $S^T$ and $\Tilde{c}$ to $\Tilde{c}^T$, nor is it introduced when determining the outcome of $\mathcal{M}'$ given $\mathcal{M}(S^T,\Tilde{c}^T)$, then $\mathcal{M}'$ is deterministic if $\mathcal{M}$ is.

We can now analyze the performance of $\mathcal{M}'$. Abusing notation, let $c'$ and $c''$ also be labelings over $X_J$. Then it is easy to see that for all $i$, $Err(c_0, S^T_i) = Err(c', S_{i,J})$, and similarly $Err(c_1, S^T_i) = Err(c'', S_{i,J})$. Then the efficiency of $\mathcal{M}$ upper bounds the efficiency of $\mathcal{M}'$:

\begin{align*}
    \frac{R(\mathcal{M}'(S,\Tilde{c}), S)}{\min_{c^* \in \{c',c''\}}R(c^*, S)} & = \frac{\sum_{i=1}^n\mathbb{E}[Err(\mathcal{M}'(S,\Tilde{c}),S_{i,J})]+\mathbb{E}[Err_{[m]\backslash J}]}{\min_{c^* \in \{c',c''\}}\sum_{i=1}^n\mathbb{E}[Err(c^*,S_{i,J})] + \mathbb{E}[Err_{[m]\backslash J}}]\\
    & \leq \frac{\sum_{i=1}^n\mathbb{E}[Err(\mathcal{M}'(S,\Tilde{c}),S_{i,J})]}{\min_{c^* \in \{c',c''\}}\sum_{i=1}^n\mathbb{E}[Err(c^*,S_{i,J})]} \\
    & = \frac{\sum_{i=1}^n\mathbb{E}[Err(\mathcal{M}(S^T,\Tilde{c}^T),S^T_i)]}{\min_{c^*\in\{c_0,c_1\}}\sum_{i=1}^n\mathbb{E}[Err(c^*,S^T_i)]}.
\end{align*}
This value is bound above by $\alpha$ when $\Tilde{c}$ is correct by the consistency of $\mathcal{M}$, and by $\beta$ for all $\Tilde{c}$ by the robustness of $\mathcal{M}$.

To show that $\mathcal{M}'$ is strategyproofness when $\mathcal{M}$ is, observe that each agent's reports can only affect labels of $x_j$ for $j \in J$ and the error incurred on these points under $\mathcal{M}'$ is equal to the error incurred on their respective dataset in $S^T$ under $\mathcal{M}$. Then if $\mathcal{M}'$ were not strategyproof, where would be a way for an agent to misreport to decrease their risk in $S^T$ under the outcome of $\mathcal{M}$, contradicting that $\mathcal{M}$ is strategyproof.
\end{proof}

\end{document}